\documentclass[envcountsame]{llncs}

\usepackage{amsmath}
\usepackage{subfigure}
\usepackage{booktabs}
\usepackage{tikz}
\usetikzlibrary{calc}
\usetikzlibrary{positioning}
\usetikzlibrary{automata}
\usetikzlibrary{shapes}

\usepackage{multirow}

\title{On the Structure and Complexity of \RegularlyGeneratedLanguageSetsTitle}
\author{Andreas Holzer\inst{1} \and Christian Schallhart\inst{2} \and Michael
  Tautschnig\inst{3} \and Helmut~Veith\inst{1}}
\institute{Vienna University of Technology, Austria \and University of Oxford, UK \and Queen Mary, University of London, UK}

\usepackage{epsfig}
\usepackage{amssymb}
\usepackage{xspace}
\usepackage{url}
\usepackage[ruled,vlined,linesnumbered,shortend]{algorithm2e}
\SetKwInOut{Input}{input}
\SetKwInOut{Requires}{requires}
\SetKwInOut{Returns}{returns}
\SetKw{Yield}{yield}
\SetKw{WriteAs}{write}
\SetKw{WriteAsAs}{as}
\SetKw{Violates}{violates}
\SetKw{With}{with}
\SetKwInOut{Yields}{yields}
\usepackage{tabularx}
\usepackage{todonotes}
\usepackage{paralist}

\newcommand{\PSPACE}{\complexityclass{PSpace}}
\newcommand{\DSPACE}{\complexityclass{DSpace}}

\newcommand{\tEXPSPACE}{\complexityclass{2ExpSpace}}
\newcommand{\EXPSPACE}{\complexityclass{ExpSpace}}

\newcommand{\complexityclass}[1]{\ensuremath{\textsc{#1}}\xspace}

\newcommand{\rationalset}{\ensuremath{\mathcal{R}}\xspace}
\newcommand{\representation}{\ensuremath{{\sf rep}}\xspace}
\newcommand{\basiccheck}{\ensuremath{{\sf basiccheck}}\xspace}
\newcommand{\membership}{\ensuremath{{\sf membership}}\xspace}
\newcommand{\setenumeration}{\ensuremath{{\sf enumerate}}\xspace}
\newcommand{\unfold}{\ensuremath{{\sf unfold}}\xspace}
\newcommand{\order}{\ensuremath{{\mathcal O}}\xspace}
\newcommand{\unfoldstep}{\ensuremath{{\sf ufs}}\xspace}

\newcommand{\minlength}{\ensuremath{{\sf minlen}}\xspace}

\newcommand{\cR}{\ensuremath{\mathcal{R}}\xspace}

\newcommand{\bN}{\ensuremath{\mathbb{N}}\xspace}
\newcommand{\cA}{\ensuremath{\mathcal{A}}\xspace}
\newcommand{\cM}{\ensuremath{\mathcal{M}}\xspace}
\newcommand{\boolTrue}{\ensuremath{\mathbf{true}}\xspace}
\newcommand{\boolFalse}{\ensuremath{\mathbf{false}}\xspace}

\newcommand{\dotdiv}{\:\ensuremath{\mathaccent\cdot-}\:}
\newcommand{\dotcup}{\:\ensuremath{\mathaccent\cdot\cup}\:}
\newcommand{\dotcap}{\:\ensuremath{\mathaccent\cdot\cap}\:}
\newcommand{\timescup}{\mathrel{\ooalign{\hss$\scriptscriptstyle\mathaccent\times{ }$\hss\cr$\cup$}}}
\newcommand{\timescap}{\mathrel{\ooalign{\hss$\scriptscriptstyle\mathaccent\times{ }$\hss\cr$\cap$}}}
\newcommand{\timesminus}{\:\ensuremath{\scriptscriptstyle\mathaccent\times{\textstyle-}}\:}

\renewcommand{\epsilon}{\pleaseusevarepsiloninstead}

\newcommand{\unionfreerep}{\ensuremath{\mathsf{unionfreedecomp}}\xspace}
\newcommand{\criticalpos}{\ensuremath{\mathsf{critical}}\xspace}
\newcommand{\Linit}{\ensuremath{L_\mathsf{init}}\xspace}

\newcommand{\query}[1]{\medskip \noindent{\small\hspace*{1em}\tt #1}\medskip\noindent}
\newdimen{\queryindent}
\settowidth{\queryindent}{\query{> }}

\newcolumntype{C}{>{\centering}X}
\newcolumntype{Z}{>{\centering\arraybackslash}X}

\newcommand{\RegularlyGeneratedLanguageSets}{rational sets of regular languages\xspace}
\newcommand{\RegularlyGeneratedLanguageSet}{rational set of regular languages\xspace}
\newcommand{\RegularlyGeneratedLanguageSetAbbrev}{RSRL\xspace}
\newcommand{\RegularlyGeneratedLanguageSetsAbbrev}{RSRLs\xspace}

\newcommand{\RegularlyGeneratedLanguageSetsStartAbbrev}{RSRL\xspace}
\newcommand{\RegularlyGeneratedLanguageSetsTitle}{Rational Sets of Regular Languages\xspace}
\newcommand{\RegularlyGeneratedLanguageSetsTitleAbbrev}{RSRL\xspace}
\newcommand{\ARegularlyGeneratedLanguageSet}{rational\xspace}
\newcommand{\ARegularlyGeneratedLanguageSetTitle}{Rational\xspace}

\newtheorem{fact}[theorem]{Fact}

\newcommand{\FSHELL}{\textsc{Fshell}\xspace}
\newcommand{\FQL}{FQL\xspace}

\begin{document}

\maketitle

\thispagestyle{plain} 
\pagestyle{plain}

\begin{abstract}
In a recent thread of papers, we have introduced \FQL, a precise specification
language for test coverage, and developed the test case generation engine \FSHELL
for ANSI C. In essence, an \FQL test specification amounts to a set of regular languages,
each of which has to be matched by at least one test execution. To describe such sets of regular
languages, the \FQL semantics uses an automata-theoretic concept
known as \RegularlyGeneratedLanguageSets (\RegularlyGeneratedLanguageSetsAbbrev). \RegularlyGeneratedLanguageSetsAbbrev are automata whose alphabet
consists of regular expressions. Thus, the language accepted by the automaton is a set of
regular expressions.

In this paper, we study RSRLs from a theoretic point of view. More specifically, we analyze
RSRL closure properties under common set theoretic operations, and the complexity of membership checking,
i.e., whether a regular language is an element of a \RegularlyGeneratedLanguageSetAbbrev. For all questions we investigate both the general
case and the case of finite sets of regular languages. Although a few properties are left as open problems,
the paper provides a systematic semantic foundation for the test specification language \FQL.
\end{abstract}

\section{Introduction}
\label{sec:introduction}

Despite the success of model checking and theorem proving, software testing has a dominant role
in industrial practice. In fact, state-of-the-art development guidelines such as the avionic
standard DO-178B~\cite{do-178b} are heavily dependent on test coverage criteria. It is therefore quite
surprising that the formal specification of coverage criteria has been a blind spot in the
formal methods and software engineering communities for a long time.

In a recent thread of
papers~\cite{FQL-ASE,holzer11:_seaml_testin_for_model_and_code,holzer10:_introd_to_test_specif_in_fql,holzer09:_query_dirven_progr_testin,holzer08,DBLP:conf/esop/BeyerHTV13},
we have addressed this situation and introduced the \FSHELL Query
Language (\FQL) to specify and tailor coverage criteria, together with
\FSHELL, a tool to generate matching test suites for ANSI C programs.
At the semantic core of \FQL, test goals are described as regular expressions whose alphabet are the edges
of the program control flow graph (CFG). For example, to cover a particular CFG edge $c$, one can use the
regular expression $\Sigma^\star\ c\ \Sigma^\star$. Importantly, however, a coverage criterion usually contains
not just a single test goal, but a (possibly large) number of test goals -- e.g.~{\em all} basic blocks
of a program. \FQL therefore employs regular languages which can express sets of regular expressions.
To this end, the alphabet contains not only the CFG edges but also \emph{postponed regular expressions} over these edges, written within quotes.

For example, $"\Sigma^\star"\ (a + b+ c +d)\ "\Sigma^\star"$ describes the language $\{"\Sigma^\star"\ a\ "\Sigma^\star",$  $"\Sigma^\star"\ b\ "\Sigma^\star", "\Sigma^\star"\ c\ "\Sigma^\star", "\Sigma^\star"\ d\ "\Sigma^\star"\}$. Each of these words is a regular expression that will then serve as a test goal. Following~\cite{membership},
  we call such languages \emph{\RegularlyGeneratedLanguageSets
  (\RegularlyGeneratedLanguageSetAbbrev).}

The goal of this paper is to initiate a systematic study of \RegularlyGeneratedLanguageSetsAbbrev from a
theoretical point of view, considering closure properties and complexity of common set-theoretic operations.
Thus, this paper is a first step towards a systematic foundation of \FQL. \RegularlyGeneratedLanguageSetsAbbrev have
a similar role for test specifications as relational algebra has for databases. In particular, a good understanding
of set-theoretic operations is necessary for systematic algorithmic optimization and manipulation of test specifications.
First results on query optimization for \FQL have been obtained in \cite{DBLP:conf/esop/BeyerHTV13}.

A \RegularlyGeneratedLanguageSet is given by a regular
language $K$ over alphabet $\Delta$, and a \emph{regular language
  substitution} $\varphi : \Delta \rightarrow 2^{\Sigma^*}$, mapping
each symbol $\delta\in \Delta$ to a regular language $\varphi(\delta)$
over alphabet $\Sigma$.
We extend $\varphi$ to words $w \in \Delta^+$ with $\varphi(\delta
\cdot w) = \varphi(\delta) \cdot \varphi(w)$, and set $\varphi(L) =
\bigcup_{w \in L}\varphi(w)$ for~$L \subseteq \Delta^+$.
%


\begin{definition}[\RegularlyGeneratedLanguageSetsTitle,
  \RegularlyGeneratedLanguageSetsAbbrev~\cite{membership}]
  \label{def:rational-set}
  A set \rationalset of regular languages over $\Sigma$ is called
  \emph{\ARegularlyGeneratedLanguageSet,} written $\rationalset=(K,\varphi)$, if there exists
  a finite alphabet $\Delta$, a regular language $K \subseteq
  \Delta^+$, and a regular language substitution $\varphi : \Delta^+
  \rightarrow 2^{\Sigma^*}$, such that $\rationalset = \{ \varphi(w)
  \mid w \in K \}.$
  The \RegularlyGeneratedLanguageSetAbbrev~\rationalset is \emph{Kleene star free,} if $K$ is
  given as Kleene star free regular expression.
\end{definition}

Depending on context, we refer to \rationalset as a set of languages
or as a pair~$(K,\varphi)$, but we always write $L\in\rationalset$ iff
$\exists w\in K : L=\varphi(w)$.
  \label{ex:intro}
  Consider the above specification $"\Sigma^\star"\ (a + b+ c +d)\
  "\Sigma^\star"$ over base alphabet $\Sigma=\{a,b,c,d\}$.
  To represent this specification as
  \RegularlyGeneratedLanguageSetAbbrev $\rationalset=(K,\varphi)$, we
  set $\Delta=\{\delta_{\Sigma^\star}\}\cup\Sigma$, containing a fresh
  symbol $\delta_{\Sigma^\star}$ for the quoted expression
  $"\Sigma^\star"$.
  We set $K=L(\delta_{\Sigma^\star}\ (a+b+c+d)\
  \delta_{\Sigma^\star})$ with
  $\varphi(\delta_{\Sigma^\star})=\Sigma^\star$ and
  $\varphi(\sigma)=\sigma$ for $\sigma\in\Sigma$.
  Thus $K$ contains the words $\delta_{\Sigma^\star}\ a\
  \delta_{\Sigma^\star},\dots$ with $\varphi(\delta_{\Sigma^\star}\ a\
  \delta_{\Sigma^\star})=L(\Sigma^\star\ a\
  \Sigma^\star)\in\rationalset$, as desired.


%
Note that the \RegularlyGeneratedLanguageSetAbbrev above is finite
with exactly four elements.
This is of course not atypical: in concrete testing applications, \FQL
generates finite sets of test goals, since it relies on \emph{Kleene
  star free} \RegularlyGeneratedLanguageSetsAbbrev only.
For future applications, however, it is well possible to consider
infinite sets of test goals e.g. for unbounded integer and real valued
variables or for path coverage criteria which are either matched
partially, or by abstract executions.
%
%
In this paper, we are therefore considering the general, finite, and
Kleene star free case.

\begin{example}
  Consider the alphabets~$\Delta = \{ \delta_1, \delta_2 \}$
  and~$\Sigma = \{ a, b \}$. Then,
  \begin{inparaenum}[\bfseries(1)]
  \item  with $\varphi(\delta_1) = L(a^\star)$,
    $\varphi(\delta_2) = \{ ab \}$, and $K =
    L(\delta_1\delta_2^\star\delta_1)$, we obtain the
    \RegularlyGeneratedLanguageSet $\{ L(a^\star(ab)^ia^\star) \mid i
    \in \bN \}$;
  \item with $\varphi(\delta_1) = \{ a^i \mid i \geq 0 \}$,
    $\varphi(\delta_2) = \{ a \}$, and $K = \{ \delta_1\delta_2^i \mid
    i \geq 0 \}$, we obtain $\varphi(w_1) \supset \varphi(w_2)$ for
    all $w_1, w_2 \in K$ with $|w_1| < |w_2|$;
  \item with $\varphi(\delta_1) = \{ \varepsilon, a \}$,
    $\varphi(\delta_2) = \{ aa \}$, and $K = \{ \delta_1\delta_2^i
    \mid i \geq 0 \}$, we have $|\varphi(w)| = 2$ and $\varphi(w) \cap
    \varphi(w') = \emptyset$ for all $w\neq w' \in K$.
  \end{inparaenum}
\end{example}

In the finite case we make an additional distinction for the
subcase where the regular expressions in $\Delta$, i.e., the set of
postponed regular expressions, are fixed. This has practical
relevance, because in the context of \FQL, the results of the
operations on \RegularlyGeneratedLanguageSetAbbrev will be better
readable by engineers if $\Delta$ is unchanged.

\paragraph{Contributions and Organization.}
%
In Section~\ref{sec:closure-properties}, we investigate \emph{closure
  properties} of general and finite
\RegularlyGeneratedLanguageSetsAbbrev, considering the operators
product, Kleene star, complement, union, intersection, set difference,
and symmetric difference.
We also consider the case of finite
\RegularlyGeneratedLanguageSetsAbbrev with a fixed language
substitution $\varphi$, as this case is of particular interest for
testing applications.
Next, in Section~\ref{sec:decision-problems}, we investigate the
\emph{complexity} of the decision problems equivalence, inclusion, and
membership for Kleene star free \RegularlyGeneratedLanguageSetsAbbrev.
We also give an algorithm for checking the membership in general and
analyze its complexity.
We close in Section~\ref{sec:conclusion} with a discussion on how our
results reflect back to design decisions for \FQL.

\section{Related Work}

Afonin et al.~\cite{membership} introduced
\RegularlyGeneratedLanguageSetsStartAbbrev and studied the
decidability of whether a regular language is contained in an
\RegularlyGeneratedLanguageSetsStartAbbrev and the decidability of
whether an \RegularlyGeneratedLanguageSetsStartAbbrev is finite.
Although Afonin et al.~shortly discuss possible upper bounds for the
membership decision problem, their analysis is incomplete due to gaps
in their algorithmic presentation. 
%
%
%
%
Pin introduced the term \emph{extended automata} for
\RegularlyGeneratedLanguageSetsAbbrev as an example for a formalism that can be expressed by equations~\cite{Pin}, but
did not investigate any of their properties.
In our own related work on \FQL~\cite{FQL-ASE,holzer12:_provin_reach_using_fshel_compet_contr,holzer11:_seaml_testin_for_model_and_code,holzer10:_introd_to_test_specif_in_fql,holzer09:_query_dirven_progr_testin,holzer08,DBLP:conf/esop/BeyerHTV13}, we deal with practical
issues arising in testcase generation. Note that \FQL uses a language layer on top of 
\RegularlyGeneratedLanguageSetsAbbrev which extracts the alphabet from the program using
a convenient syntax. 
In conclusion, we are unaware of related work that considers the properties we study here.

Let us finally discuss other work whose terminology is similar to \RegularlyGeneratedLanguageSetsAbbrev without
direct technical relation. Barcel{\'o} et al.~define \emph{rational relations}, which are
relations between words over a common alphabet, whereas we consider
sets of regular languages~\cite{DBLP:conf/lics/BarceloFL12}.
Barcel{\'o} et al.~also investigate \emph{parameterized regular
  languages}~\cite{DBLP:journals/tcs/BarceloRL13}, where words are
obtained by replacing variables in expressions with alphabet symbols.
\emph{Metaregular languages} deal with languages recognized by
automata with a time-variant
structure~\cite{1967Agasandyan,Salomaa196885}.
Lattice Automata~\cite{lattice-automata} only consider lattices that
have a unique complement element, whereas
\RegularlyGeneratedLanguageSetsAbbrev are not closed under complement
(no \RegularlyGeneratedLanguageSetAbbrev has a
\RegularlyGeneratedLanguageSetAbbrev as complement).
%
%
%

\section{Closure Properties}
\label{sec:closure-properties}

\paragraph*{Operators.}
\label{sec:operators}
We investigate the closure properties of \RegularlyGeneratedLanguageSetsAbbrev, considering
standard set theoretic operators, such as union, intersection, and
complement, and variants thereof, fitting \RegularlyGeneratedLanguageSetsAbbrev. 
In particular, we apply those operators also to pairs in the
\emph{Cartesian product} of \RegularlyGeneratedLanguageSetsAbbrev, and \emph{point-wise} to
each element in a \RegularlyGeneratedLanguageSetAbbrev and another given regular language.

\begin{definition}[Operations on \RegularlyGeneratedLanguageSetsTitleAbbrev]
  \label{def:ops}
  Let $\rationalset_1$ and $\rationalset_2$ be \RegularlyGeneratedLanguageSetsAbbrev and let
  $R$ be a regular language.  Then, we define the following operations
  on \RegularlyGeneratedLanguageSetsAbbrev:
  \begin{center}
  {
  	\relsize{-1}
    \begin{tabularx}{\textwidth}{p{10.5em}lp{2em}X}
      \toprule
      Operation & Definition & \\
      \midrule
      
      Product & $\rationalset_1 \cdot \rationalset_2$ & = & 
      $\{ L_1 \cdot L_2 \mid L_1 \in \rationalset_1, L_2 \in \rationalset_2 \}$\\
      
      Kleene Star & $\rationalset_1^\star$ & = & 
      $\bigcup_{i \in \bN}\rationalset_1^i$\\
      
      \hspace*{2em}Point-wise & $\dot{\rationalset_1^\star}$ & = & 
      $\{ L^\star \mid L \in \rationalset_1 \}$ \\
      
      Complement & $\overline{\rationalset_1}$ & = & 
      $\{ L \subseteq 2^{\Sigma^*} \mid L \notin \rationalset_1 \}$ \\
      
      \hspace*{2em}Point-wise & $\dot{\overline{\rationalset_1}}$ & = & 
      $\{ \overline{L} \mid L \in \rationalset_1 \}$ \\
      
      Binary Operators & \multicolumn{3}{l}{
        $\rationalset_1\cap\rationalset_2$,\ \ 
        $\rationalset_1\cup\rationalset_2$,\ \ 
        $\rationalset_1-\rationalset_2$\ \ 
        \hfill (standard def.)
      }\\
      
 
     \hspace*{2em}Point-wise\hfill & $\rationalset_1 \dotcup / \dotcap / \dotdiv R$ & = & $\{ L
      \cup / \cap / - R \mid L \in \rationalset_1 \}$ 
      \\
      
      \hspace*{2em}Cartesian & $\rationalset_1 \timescup / \timescap / \timesminus \rationalset_2$ & = & 
      $\{ L_1 \cup / \cap / - L_2 \mid L_1\in \rationalset_1, L_2 \in \rationalset_2 \}$\\
      
      Symmetric Difference & $\rationalset_1 \Delta \rationalset_2$ & = & 
      $\{ L \mid L \in ((\rationalset_1 \cup \rationalset_2) -
      (\rationalset_1 \cap \rationalset_2)) \}$\\			
      \bottomrule
    \end{tabularx}
   }
  \end{center}
\end{definition}

\paragraph*{Language Restrictions.}
\label{sec:lang-restr}
We analyze three different classes of \RegularlyGeneratedLanguageSetsAbbrev for being closed
under these operators:
\begin{inparaenum}[\bfseries(1)]
\item General \RegularlyGeneratedLanguageSetsAbbrev,
\item finite \RegularlyGeneratedLanguageSetsAbbrev, and
\item finite \RegularlyGeneratedLanguageSetsAbbrev with a fixed
  language substitution $\varphi$.
\end{inparaenum}
For closure properties, we do \emph{not distinguish} between Kleene
star free and finite \RegularlyGeneratedLanguageSetsAbbrev, since
every finite \RegularlyGeneratedLanguageSetAbbrev is expressible as
Kleene star free \RegularlyGeneratedLanguageSetAbbrev (however, given
a \RegularlyGeneratedLanguageSetAbbrev with Kleene star, it is
non-trivial to decide whether the given
\RegularlyGeneratedLanguageSetAbbrev it finite or not~\cite{membership}).
Therefore, all closure properties for finite
\RegularlyGeneratedLanguageSetsAbbrev apply to Kleene star free
\RegularlyGeneratedLanguageSetsAbbrev as well.
Hence, cases {\bfseries(2-3)} correspond to \FQL.
Case {\bfseries(3)} is relevant for usability in practice, allowing to
apply the corresponding operators without constructing a new language
substitution. 
This does not only significantly reduce the search space but also
provides more intuitive results to users.

\begin{theorem}[Closure Properties of \RegularlyGeneratedLanguageSetsTitleAbbrev]
  \label{thm:closure-properties}
  %
  The following Table summarizes the closure properties for
  \RegularlyGeneratedLanguageSetsAbbrev.
  %
  %
  \begin{center}
  	{
	  	\relsize{-1}
	    \begin{tabularx}{\textwidth}{p{9.5em}r@{\hspace{.5em}}CCZ}
	      \toprule
	      Operation & & \multicolumn{3}{c}{Closure Property}\\\cmidrule{3-5}
	               & & General  & \multicolumn{2}{c}{Finite \RegularlyGeneratedLanguageSetsAbbrev}\\\cmidrule{4-5}
	        \multicolumn{3}{l}{\relsize{-1}(+ closed\ \ \ \ - not closed\
              \ \ \ ? unknown)}             & General & Fixed Subst.\\
	      \midrule
	      Product     & Prop.~\ref{prop:closure:product-kleene} & +       & +       & + \\
	      Kleene Star & Prop.~\ref{prop:closure:product-kleene}    & +       & -       & - \\
	      \hspace*{2em}Point-wise & Prop.~\ref{prop:closure:pw-star}       & -       & +       & - \\
	      
	      Complement              & Prop.~\ref{prop:closure:complement}      & -       & -       & - \\
	      \hspace*{2em}Point-wise & Prop.~\ref{prop:closure:pw-complement}    & -       & +       & - \\
	
	      Union         & Prop.~\ref{prop:closure:union}        & +       & +       & + \\
	      \hspace*{2em}Point-wise & Prop.~\ref{prop:closure:pw-union}      & -       & +       & - \\
	      \hspace*{2em}Cartesian  &  Cor.~\ref{cor:catesian-binary}  & -       & +       & - \\
	      
	      Intersection  & Prop.~\ref{prop:closure:intersection}    & ?       & +       & + \\
	      \hspace*{2em}Point-wise & Prop.~\ref{prop:closure:pw-intersection}   & -       & +       & - \\
	      \hspace*{2em}Cartesian  &  Cor.~\ref{cor:catesian-binary} & -       & +       & - \\
	      
	      Difference    & Prop.~\ref{prop:closure:difference}     & ?       & +       & + \\
	      \hspace*{2em}Point-wise & Prop.~\ref{prop:closure:pw-difference}    & -       & +       & - \\
	      \hspace*{2em}Cartesian  & Cor.~\ref{cor:catesian-binary}   & -       & +       & - \\
	      \hspace*{2em}Symmetric  & Prop.~\ref{prop:closure:sym-difference}   & ?       & +       & + \\
	      \bottomrule
	    \end{tabularx}
    }
  \end{center}
\end{theorem}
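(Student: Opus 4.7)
The theorem bundles fifteen closure claims, each of which will be proved in a dedicated proposition or corollary listed in the table; my plan is therefore to describe the uniform techniques behind these propositions rather than repeat a cell-by-cell argument. For the positive entries in the \textbf{General} column (product, Kleene star, union), the constructions are direct. Given \RegularlyGeneratedLanguageSetsAbbrev $\rationalset_i=(K_i,\varphi_i)$ over a common base alphabet $\Sigma$ and encoding alphabets $\Delta_i$, I assume without loss of generality that $\Delta_1\cap\Delta_2=\emptyset$ (renaming symbols if necessary), set $\Delta=\Delta_1\cup\Delta_2$ with the obvious combined substitution $\varphi$, and take $K=K_1\cdot K_2$, $K=K_1\cup K_2$, and $K=K_1^\star$ respectively. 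Since regular languages over $\Delta$ are closed under these three operations and $\varphi$ is homomorphic on concatenation by definition, the resulting pair $(K,\varphi)$ is an \RegularlyGeneratedLanguageSetAbbrev realizing the intended set.

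For the \textbf{Finite} positive entries, the key observation is that a finite set of regular languages can always be represented by a Kleene-star-free \RegularlyGeneratedLanguageSetAbbrev whose $K$ enumerates one witness word per element; closure under product, union, intersection, complement, difference, and symmetric difference then reduces to the corresponding closure of regular languages, applied componentwise. The \textbf{Fixed Substitution} subcolumn is more delicate, because the ambient $\varphi$ must remain unchanged. For product and union this is immediate (concatenate or union the enumerating $K$-words); for intersection, difference, and symmetric difference I will compute the target regular languages $L$ via standard automaton constructions on the finitely many images $\varphi(w)$, then search through $\Delta^+$ for words $w'$ with $\varphi(w')=L$ — a finite search since the algebra generated by $\varphi$ on a finite domain is itself finite.

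For the negative entries, each proposition rests on a concrete counterexample. Complement is immediate: the set-theoretic complement of an \RegularlyGeneratedLanguageSetAbbrev contains non-regular languages and therefore is not a set of regular languages at all, let alone a rational one. The point-wise Kleene star and point-wise complement on general \RegularlyGeneratedLanguageSetsAbbrev will be refuted by exhibiting a simple $(K,\varphi)$ whose point-wise image cannot be $\varphi'(L')$ for any regular $L'$; the argument I envisage proceeds by extracting a pumping-style structural invariant from the hypothetical $(K',\varphi')$ and showing it is incompatible with the growth of the point-wise images. The \textbf{Fixed Substitution} negatives follow by choosing $\varphi$ with very few images (often just two) and demonstrating that a single application of, e.g., point-wise Kleene star or a Cartesian operator yields a language that is provably not in the finite algebra $\{\varphi(w)\mid w\in\Delta^+\}$.

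The principal obstacle is the non-rationality arguments for the negative cases: rationality is a global structural property, so these counterexamples require identifying an invariant preserved by every \RegularlyGeneratedLanguageSetAbbrev representation and then violating it. This same obstacle explains the three question marks in the table — for intersection, difference, and symmetric difference on general \RegularlyGeneratedLanguageSetsAbbrev we have neither a construction nor a disproof, and these will be recorded as open problems rather than resolved here.
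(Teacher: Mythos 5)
Your overall architecture matches the paper's: the theorem is a summary table whose proof is distributed over the individual propositions, the positive general cases use $K_1\cdot K_2$, $K_1\cup K_2$, $K^\star$ over a unified alphabet, the finite positive cases reduce to the fact that every finite set of regular languages is rational, and the negative cases rest on counterexamples (your pumping-style invariant is exactly the mechanism the paper uses, via a shortest-words lemma for point-wise star and a pumping argument on the set $\{\{b\}\cup\{a^i\mid 1\le i\le n+1\}\mid n\in\mathbb{N}\}$ for point-wise union, intersection, and difference). However, there is one genuine gap in your treatment of the fixed-substitution column for intersection, difference, and symmetric difference. These rows refer to the \emph{set-theoretic} operations on \RegularlyGeneratedLanguageSetsAbbrev viewed as sets of languages, so $\rationalset_1\cap\rationalset_2\subseteq\rationalset_1$ and $\rationalset_1-\rationalset_2\subseteq\rationalset_1$: no new languages arise, and closure with unchanged $\varphi$ follows by simply filtering the generating language, e.g.\ $K_3=\{w_1\in K_1\mid \exists w_2\in K_2.\ \varphi(w_1)=\varphi(w_2)\}$ for intersection. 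Your plan instead constructs new languages $L$ by product automata on the images and then searches $\Delta^+$ for a preimage $w'$ with $\varphi(w')=L$. That is the method one would need for the \emph{Cartesian} operators $\timescap$, $\timesminus$ -- and there it provably fails, which is why those cells carry a ``$-$'': with $\varphi(\delta)=L(\mathtt{a+b^\star})$ the language $\{a,b\}$ obtained by a single Cartesian intersection has no preimage under $\varphi$ at all, since every $\varphi(w)$ contains $\varepsilon$. Moreover your justification that the search is finite because ``the algebra generated by $\varphi$ on a finite domain is itself finite'' is false: already for $\Delta=\{\delta\}$, $\varphi(\delta)=\{a\}$ the image set $\{\varphi(w)\mid w\in\Delta^+\}=\{\{a^n\}\mid n\ge 1\}$ is infinite. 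In short, you have swapped the easy case (filtering) with the impossible one (preimage search), and the argument as stated would not go through.

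A second, smaller point: your Kleene star construction $K=K_1^\star$ produces $\varepsilon\in K$, which violates the requirement $K\subseteq\Delta^+$ of Definition~\ref{def:rational-set} (and $\varphi$ is only defined on $\Delta^+$). The paper repairs this by setting $K'=K^\star\setminus\{\varepsilon\}\cup\{\delta_\varepsilon\}$ with a fresh symbol $\delta_\varepsilon$ mapped to $\{\varepsilon\}$; note that this forces a (mild) change of substitution even in the general positive case. The remainder of your plan -- in particular recording the three general-case binary operators as open and the cardinality/non-regularity argument for complement -- is consistent with the paper.
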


\paragraph*{Unifying Alphabets.}
\label{sec:unifying-alphabets}
Let $\rationalset_1$ and $\rationalset_2$ be \RegularlyGeneratedLanguageSetsAbbrev over a
common alphabet $\Sigma$ with $\rationalset_i = (K_i, \varphi_i)$,
$K_i \subseteq \Delta_i$, and $\varphi_i : \Delta_i \rightarrow
2^{\Sigma^*}$.
Then we create a unified alphabet $\Delta = \{\left<i, \delta\right> \mid \delta
\in \Delta_i \mbox{ with } i=1,2 \}$ and a unified language
substitution $\varphi : \Delta \rightarrow 2^{\Sigma^*}$ with
$\varphi(\left<i, \delta\right>) = \varphi_i(\delta)$.
We obtain $\rationalset_i=(K'_i, \varphi)$ where $K'_i$ is derived
from $K_i$ by substituting each symbol $\delta\in\Delta_i$ with
$\left<i,\delta\right>\in\Delta$.
Hence without loss of generality, we \textbf{fix} the
\textbf{alphabets} $\Delta$ and $\Sigma$ with \textbf{language
  substitution}~$\varphi$, allowing our \RegularlyGeneratedLanguageSetsAbbrev only to differ
in the generating languages $K_i$.
When we discuss binary operators, we freely refer to \textbf{\RegularlyGeneratedLanguageSetsAbbrev} $\rationalset_i=(K_i,\varphi)$ for $i=1,2$, in case of unary
operators to $\rationalset=(K,\varphi)$, and in case of point-wise
operators to the regular language $R\subseteq\Sigma^\star$.

\paragraph*{General Observations.}
We exploit in our proofs some general observations on the cardinality
of \RegularlyGeneratedLanguageSetsAbbrev.
Moreover, we prove all closure properties of Cartesian binary
operators by reducing the point-wise operators to the Cartesian
one.
For space reasons, we show this generic argument only in
Appendix~\ref{sec:proofs:cart-binary-oper}.

\begin{fact}[Finite Sets of Regular Languages are \ARegularlyGeneratedLanguageSetTitle]
  \label{fact:finite-regular-sets-are-rational}
  Every finite set of regular languages is \ARegularlyGeneratedLanguageSet. 
\end{fact}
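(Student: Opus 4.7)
The plan is to give a direct construction. Given a finite set of regular languages $\{L_1, \ldots, L_n\}$ over $\Sigma$, I would introduce a fresh alphabet $\Delta = \{\delta_1, \ldots, \delta_n\}$ containing exactly one symbol per language, define the substitution $\varphi : \Delta \to 2^{\Sigma^*}$ by $\varphi(\delta_i) = L_i$, and take $K = \{\delta_1, \ldots, \delta_n\} \subseteq \Delta^+$ as the generating language.

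The verification then consists of three routine observations. First, $K$ is a finite, hence regular, subset of $\Delta^+$, so it is a valid generating language in the sense of Definition~\ref{def:rational-set}. Second, each $\varphi(\delta_i) = L_i$ is regular by assumption, so $\varphi$ is a regular language substitution. Third, unfolding the definition, $\{\varphi(w) \mid w \in K\} = \{\varphi(\delta_1), \ldots, \varphi(\delta_n)\} = \{L_1, \ldots, L_n\}$, which is exactly the target set. Hence $(K,\varphi)$ is an \RegularlyGeneratedLanguageSetAbbrev representing the given finite set.

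The degenerate case of the empty set is handled by taking $K = \emptyset$, which is regular, with any $\Delta$ and $\varphi$. There is no real obstacle here; the only thing worth flagging is that the construction incidentally shows the resulting \RegularlyGeneratedLanguageSetAbbrev is Kleene star free (since $K$ is given by the star-free expression $\delta_1 + \cdots + \delta_n$), which is consistent with the remark in the paper that finite and Kleene star free \RegularlyGeneratedLanguageSetsAbbrev coincide in expressiveness.
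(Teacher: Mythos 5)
Your construction is exactly the paper's: one fresh symbol per language, $\varphi(\delta_L)=L$, and $K$ the finite set of single-symbol words, so the proposal is correct and takes essentially the same route. The extra verification steps and the empty-set remark are fine but add nothing beyond what the paper's two-line proof already establishes.
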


\begin{proof}
  For a finite set of regular languages $\rationalset$, we set
  $\varphi(\delta_L)=L$ for all $L\in\rationalset$, taking fresh
  symbols $\delta_L$.
  With $\Delta_\rationalset=\{\delta_L\mid L\in\rationalset\}$ we
  obtain $\rationalset=(\Delta_\rationalset,\varphi)$.\qed
\end{proof}

\begin{fact}[Cardinality of \RegularlyGeneratedLanguageSetsTitleAbbrev]
  \label{fact:rational-sets-cardinality}
  A \RegularlyGeneratedLanguageSetAbbrev contains at most countably many languages. In
  particular, $2^{\Sigma^*}$ is not a \RegularlyGeneratedLanguageSetAbbrev. 
\end{fact}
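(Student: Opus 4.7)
The plan is to read the cardinality bound straight off the definition of an \RegularlyGeneratedLanguageSetAbbrev. Recall that $\rationalset = (K,\varphi)$ with $K \subseteq \Delta^+$ for a finite alphabet~$\Delta$, and $\rationalset = \{\varphi(w) \mid w \in K\}$. Since $\Delta$ is finite, $\Delta^+$ is a countable set, hence so is any subset $K$. The set $\rationalset$ is the image of $K$ under the map $w \mapsto \varphi(w)$, so $|\rationalset| \leq |K| \leq \aleph_0$. This gives the first claim.

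For the second claim, I would note that (for any nonempty $\Sigma$) the set $\Sigma^*$ is countably infinite, so $2^{\Sigma^*}$ has cardinality $2^{\aleph_0}$ and is therefore uncountable. By the first part, no \RegularlyGeneratedLanguageSetAbbrev can have uncountably many elements, so $2^{\Sigma^*}$ is not a \RegularlyGeneratedLanguageSetAbbrev.

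There is no real obstacle here; the argument is purely a cardinality count. The only detail worth spelling out is that one does not even need any property of $\varphi$ beyond being a function from $\Delta^+$ to $2^{\Sigma^*}$, so the bound is robust and independent of the structure of the regular language substitution. In particular, both Kleene star free and general \RegularlyGeneratedLanguageSetsAbbrev inherit the countability bound, which is the form that will be used implicitly in the subsequent closure arguments (e.g., to rule out complement).
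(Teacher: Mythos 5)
Your argument is correct and is essentially the paper's own proof: both bound $|\rationalset|$ by $|K| \le |\Delta^+| = \aleph_0$ via the image under $\varphi$, and then invoke uncountability of $2^{\Sigma^*}$ for the second claim. No differences worth noting.
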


\begin{proof}
  A \RegularlyGeneratedLanguageSetAbbrev $\rationalset = (K, \varphi)$ is countable, as $K$
  contains countably many words, and $|K|\ge |\rationalset|$ holds.
  Since $2^{\Sigma^*}$ is uncountable, it is not a \RegularlyGeneratedLanguageSetAbbrev. \qed
\end{proof}

\subsection{Product and Kleene Star}
\label{sec:product-kleene-star}

\begin{proposition}[Closure of Product and Kleene Star]
  \label{prop:closure:product-kleene}
  \begin{inparaenum}[\normalfont\bfseries(1)]
  \item $\rationalset_1 \cdot \rationalset_2$ is a \RegularlyGeneratedLanguageSetAbbrev,
    defined over the same substitution $\varphi$. If $\rationalset_i$
    are finite, then $\rationalset_1 \cdot \rationalset_2$ is also
    finite.
  \item $\rationalset^\star$ is a \RegularlyGeneratedLanguageSetAbbrev.
    It is in general infinite even if $\rationalset$ is finite.
  \end{inparaenum}
\end{proposition}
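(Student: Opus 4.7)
The plan is to prove both parts by direct construction, exploiting that $\varphi$ is multiplicative on words, i.e., $\varphi(w_1 w_2) = \varphi(w_1) \cdot \varphi(w_2)$. Adopting the shared-substitution convention from the ``Unifying Alphabets'' paragraph, I write $\rationalset_i = (K_i, \varphi)$, so both parts reduce to constructing a suitable regular language over $\Delta$ (possibly after extending $\Delta$ by one symbol).

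For part \textbf{(1)}, I set $K := K_1 K_2 \subseteq \Delta^+$, which is regular since regular languages are closed under concatenation. By multiplicativity, $\{\varphi(w) \mid w \in K\} = \{\varphi(w_1)\varphi(w_2) \mid w_i \in K_i\} = \rationalset_1 \cdot \rationalset_2$, so $(K_1 K_2, \varphi)$ witnesses that $\rationalset_1 \cdot \rationalset_2$ is a \RegularlyGeneratedLanguageSetAbbrev over the same $\varphi$. If both $\rationalset_i$ are finite, then $|\rationalset_1 \cdot \rationalset_2| \leq |\rationalset_1| \cdot |\rationalset_2| < \infty$ follows straight from the defining set expression, so the product is a finite \RegularlyGeneratedLanguageSetAbbrev as well.

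For part \textbf{(2)}, the naive choice $K' := K^\star$ fails because $K \subseteq \Delta^+$ by definition, so $K^\star$ would contain $\varepsilon$, on which $\varphi$ is not defined; yet $\rationalset^0 = \{\{\varepsilon\}\}$ must appear in $\rationalset^\star = \bigcup_{i \in \bN} \rationalset^i$. The key idea is to introduce a fresh symbol $\delta_\varepsilon \notin \Delta$ with extended substitution $\varphi'(\delta_\varepsilon) := \{\varepsilon\}$ and $\varphi'(\delta) := \varphi(\delta)$ for $\delta \in \Delta$, and to set $K' := \{\delta_\varepsilon\} \cup K^+$; this is regular, the $K^+$ part generates $\bigcup_{i \geq 1} \rationalset^i$ by multiplicativity, and $\delta_\varepsilon$ contributes $\{\varepsilon\} = \rationalset^0$, giving $\rationalset^\star = (K', \varphi')$. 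For the ``infinite even if $\rationalset$ is finite'' claim, the witness $\rationalset := \{\{a\}\}$ suffices, since then $\rationalset^\star = \{\{a^i\} \mid i \in \bN\}$ is infinite. The only mild obstacle is this handling of $\varepsilon$; once sidestepped by $\delta_\varepsilon$, both arguments are routine.
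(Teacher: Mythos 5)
Your proof is correct and follows essentially the same route as the paper: part (1) uses $K_1\cdot K_2$ with the unified substitution $\varphi$, and part (2) uses a fresh symbol $\delta_\varepsilon$ mapped to $\{\varepsilon\}$ together with $K^+$ (which equals the paper's $K^\star\setminus\{\varepsilon\}$ since $K\subseteq\Delta^+$), plus the same witness $\{\{a\}\}$ for non-finiteness. Your explicit justification of why the $\delta_\varepsilon$ workaround is needed is a welcome addition the paper leaves implicit.
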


\begin{proof}
  \begin{inparaenum}[\bfseries(1)]
  \item We construct $\rationalset' = (K', \varphi)$ with $K' = K_1
    \cdot K_2$ and obtain $\rationalset_1 \cdot \rationalset_2 =
    \rationalset'$.
  \item We construct $\rationalset' = (K', \varphi')$ with $K' =
    K^\star \setminus \{ \varepsilon \} \cup \{ \delta_\varepsilon \}$
    setting $\varphi'(\delta_\varepsilon) = \{ \varepsilon \}$ and
    $\varphi'(\delta) = \varphi(\delta)$ otherwise, and obtain
    $\rationalset^\star=\rationalset'$.
    Consider the finite \RegularlyGeneratedLanguageSetAbbrev $\rationalset = \{ \{ a \} \}$,
    then, $\rationalset^\star$ is the infinite \RegularlyGeneratedLanguageSetAbbrev $\{ \{ a^i
    \} \mid i \geq 0 \}$.
    \qed
  \end{inparaenum}
\end{proof}

In the following we consider the set $S(L)$ of \emph{shortest words}
of a language $L$, disregarding $\varepsilon$, defined with
$S(L)=\{w\in L\setminus\{\varepsilon\} \mid \not\exists w'\in
L\setminus\{\varepsilon\} \mbox{ with } |w'|<|w|\}$.
We also refer to the shortest words $S(\rationalset)$ of a \RegularlyGeneratedLanguageSetAbbrev $\rationalset$ with
$S(\rationalset)=\bigcup_{L\in\rationalset}S(L)$. 

\begin{lemma}
  \label{lemma:minimal-words}
  Let $\varepsilon \in \varphi(\delta)$ hold for all $\delta \in
  \Delta$.
  Then, for each $w \in \Delta^+$ and shortest word $v\in
  S(\varphi(w))$, there exists a $\delta\in\Delta$ such that $v\in
  S(\varphi(\delta))$.
\end{lemma}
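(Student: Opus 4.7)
The plan is to prove this by decomposing $v$ according to the factorization $w=\delta_1\cdots\delta_n$ and then using the hypothesis $\varepsilon\in\varphi(\delta_i)$ to contract the decomposition down to a single factor.

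First I would fix $w=\delta_1\cdots\delta_n$ with $\delta_i\in\Delta$, and pick $v\in S(\varphi(w))$. By the definition of $\varphi$ on concatenations, $v$ admits a decomposition $v=v_1v_2\cdots v_n$ with $v_i\in\varphi(\delta_i)$. Since $v\neq\varepsilon$, at least one $v_j$ is non-empty; let $j$ be any such index.

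The key step is to argue that in fact \emph{exactly one} $v_i$ is non-empty. Suppose towards contradiction that some $v_k\neq\varepsilon$ with $k\neq j$ exists. Since $\varepsilon\in\varphi(\delta_k)$, we may form $v':=v_1\cdots v_{k-1}\,\varepsilon\,v_{k+1}\cdots v_n\in\varphi(w)$; this word is strictly shorter than $v$ (we dropped $v_k$), and still non-empty (it contains $v_j$), contradicting $v\in S(\varphi(w))$. Hence $v=v_j\in\varphi(\delta_j)$.

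To finish, I would show $v\in S(\varphi(\delta_j))$. Suppose there were a non-empty $v''\in\varphi(\delta_j)$ with $|v''|<|v|$. Using $\varepsilon\in\varphi(\delta_i)$ for all $i\neq j$, the word $\varepsilon\cdots\varepsilon\,v''\,\varepsilon\cdots\varepsilon$ lies in $\varphi(w)$, is non-empty, and is shorter than $v$, again contradicting $v\in S(\varphi(w))$. This yields $\delta=\delta_j$ as desired.

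The argument is essentially mechanical once the decomposition is written down; the only subtlety is being careful with the non-emptiness requirement in the definition of $S$, since $S$ explicitly excludes $\varepsilon$, and both contradictions above rely on producing a witness that is both shorter than $v$ and still non-empty --- which is precisely why we need at least one non-empty factor to survive the $\varepsilon$-substitutions.
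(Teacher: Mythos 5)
Your proof is correct and follows essentially the same route as the paper's: both exploit $\varepsilon\in\varphi(\delta_i)$ to collapse the factorization $v=v_1\cdots v_n$ down to a single non-empty factor (the paper phrases this via the claim $\varphi(\delta_i)\subseteq\varphi(w)$ and keeps only $v_p$, while you drop one offending factor at a time, but the witness constructions are interchangeable). The second half, ruling out a shorter non-empty word in $\varphi(\delta_j)$ by padding with $\varepsilon$'s, matches the paper exactly.
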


\begin{proof}
  We start with a little claim: 
  Because of $\varepsilon\in\varphi(\delta)$ for all
  $\delta\in\Delta$, we have $\varphi(\delta_i)\subseteq \varphi(w)$
  for $w=\delta_1\dots \delta_k$ and all $1\le i \le k$.

  Assume $v\in S(\varphi(w))$ with $v\not\in\varphi(\delta)$ for all
  $\delta\in\Delta$.
  Then $v=v_1\dots v_k$ with $v_i\in\varphi(\delta_i)$, and since
  $v\neq \varepsilon$, $v_p\neq\varepsilon$ for some $1\le p\le k$. We
  fix such a $p$.
  From the claim above, we get $v_p\in\varphi(\delta_p)\subseteq
  \varphi(w)$, leading to a contradiction:
  If $v\neq v_p$, then $v$ is not a shortest word in
  $\varphi(w)\setminus\{\varepsilon\}$, as $v_p$ is shorter. 
  If $v=v_p$, we contradict our assumption with
  $v=v_p\in\varphi(\delta_p)$.

  Thus, we have shown that there exists a $\delta$ with
  $v\in\varphi(\delta)$.
  It remains to show $v\in S(\varphi(\delta))$.
  Assuming that $v'\in\varphi(\delta)\setminus\{\varepsilon\}$ is
  shorter than $v$, we quickly arrive at a contradiction:
  $v'\in\varphi(\delta)\subseteq \varphi(w)$ from the claim above,
  implies that $v$ would not be a shortest word in
  $\varphi(w)\setminus\{\varepsilon\}$ in the first place, i.e.,
  $v\not\in S(\varphi(w))$. 
  \qed
\end{proof}

\begin{corollary}
  \label{cor:finite-number-of-shortest-words}
  Let $\varepsilon \in \varphi(\delta)$ hold for all $\delta \in
  \Delta$.
  Then the set of shortest words $S(\rationalset)$ is finite.
\end{corollary}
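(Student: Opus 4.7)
The plan is to read the corollary as a direct consequence of Lemma~\ref{lemma:minimal-words} combined with the finiteness of $\Delta$ and $\Sigma$. By definition, $S(\rationalset) = \bigcup_{L \in \rationalset} S(L) = \bigcup_{w \in K} S(\varphi(w))$. Lemma~\ref{lemma:minimal-words} tells us that every $v \in S(\varphi(w))$ already lies in $S(\varphi(\delta))$ for some $\delta \in \Delta$. Therefore
\[
S(\rationalset) \;\subseteq\; \bigcup_{\delta \in \Delta} S(\varphi(\delta)),
\]
and the problem reduces to proving that each $S(\varphi(\delta))$ is finite.

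For the second step, I would note that the alphabet $\Delta$ is finite by the definition of a \RegularlyGeneratedLanguageSetAbbrev, so it suffices to bound $|S(\varphi(\delta))|$ for a single $\delta$. By the very definition of $S(\cdot)$, all elements of $S(\varphi(\delta))$ share a common length, namely $\ell_\delta := \min\{|v| : v \in \varphi(\delta) \setminus \{\varepsilon\}\}$ (and $S(\varphi(\delta)) = \emptyset$ in the degenerate case $\varphi(\delta) \subseteq \{\varepsilon\}$). Since $\Sigma$ is finite, there are at most $|\Sigma|^{\ell_\delta}$ words of length $\ell_\delta$, so $S(\varphi(\delta))$ is finite.

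Putting the two steps together yields $S(\rationalset)$ as a finite union of finite sets, which is finite. There is no real obstacle here beyond invoking Lemma~\ref{lemma:minimal-words}; the only subtlety worth flagging explicitly is the boundary case where $K$ contains only words $w$ with $\varphi(w) \subseteq \{\varepsilon\}$, in which case $S(\rationalset) = \emptyset$ and the statement is trivial.
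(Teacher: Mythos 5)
Your proof is correct and follows the same route as the paper: apply Lemma~\ref{lemma:minimal-words} to reduce $S(\rationalset)$ to $\bigcup_{\delta\in\Delta} S(\varphi(\delta))$, then use the finiteness of $\Delta$ and of each $S(\varphi(\delta))$. You merely spell out the bound $|S(\varphi(\delta))|\le|\Sigma|^{\ell_\delta}$ that the paper leaves implicit, which is a harmless refinement.
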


\begin{proof}
  Lemma~\ref{lemma:minimal-words} states for each word $v\in
  S(\rationalset)$, we have $v\in S(\varphi(\delta))$ for some
  $\delta\in\Delta$.
  But there are only finitely many symbols $\delta\in\Delta$, each
  generating only finitely many shortest words in
  $\varphi(\delta)\setminus\{\varepsilon\}$. 
  Hence $S(\rationalset)$ must be finite.\qed
\end{proof}

\begin{proposition}[Closure of Point-wise Kleene Star]
  \label{prop:closure:pw-star}
  \begin{inparaenum}[\normalfont\bfseries(1)]
  \item\label{prop:closure:pw-star:a} In general,
    $\dot{\rationalset}^\star$ is not a \RegularlyGeneratedLanguageSetAbbrev.
  \item\label{prop:closure:pw-star:b} If $\rationalset$ is finite,
    $\dot{\rationalset}^\star$ is a finite \RegularlyGeneratedLanguageSetAbbrev.
  \item\label{prop:closure:pw-star:c} In the latter case, expressing
    $\dot{\rationalset}^\star$ requires a new language
    substitution~$\varphi$.
  \end{inparaenum}
\end{proposition}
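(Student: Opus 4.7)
The plan is to handle the three parts separately, with~\ref{prop:closure:pw-star:a} carrying the real content and the other two following from short direct constructions.

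For part~\ref{prop:closure:pw-star:b}, a finite $\rationalset = \{L_1,\dots,L_n\}$ yields a finite family $\dot{\rationalset}^\star = \{L_1^\star,\dots,L_n^\star\}$ of regular languages, which is a \RegularlyGeneratedLanguageSetAbbrev by Fact~\ref{fact:finite-regular-sets-are-rational}. For part~\ref{prop:closure:pw-star:c}, I would exhibit a minimal counter-example: take $\Delta = \{\delta\}$, $\varphi(\delta) = \{a\}$, $K = \{\delta\}$, so $\rationalset = \{\{a\}\}$ and $\dot{\rationalset}^\star = \{L(a^\star)\}$; under the same $\varphi$ every word $w \in \Delta^+$ satisfies $\varphi(w) = \{a^{|w|}\}$, a singleton, and hence cannot yield the infinite language $L(a^\star)$, so a fresh substitution is forced.

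The crux is part~\ref{prop:closure:pw-star:a}, which I would settle via Corollary~\ref{cor:finite-number-of-shortest-words}. Consider the \RegularlyGeneratedLanguageSetAbbrev $\rationalset = \{\{a^i\} \mid i \ge 1\}$, realized by $\varphi(\delta) = \{a\}$ and $K = L(\delta^+)$, giving $\dot{\rationalset}^\star = \{\{a^i\}^\star \mid i \ge 1\}$. The shortest non-empty word of $\{a^i\}^\star$ is $a^i$, so $S(\dot{\rationalset}^\star) \supseteq \{a^i \mid i \ge 1\}$ is infinite, while each language $\{a^i\}^\star$ trivially contains $\varepsilon$.

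Suppose, for contradiction, $\dot{\rationalset}^\star = (K', \varphi')$ over some $\Delta'$. The next step is a trimming argument: any $\delta \in \Delta'$ with $\varphi'(\delta) = \emptyset$ propagates $\emptyset$ through concatenation, so every $w \in K'$ containing such $\delta$ generates $\emptyset \notin \dot{\rationalset}^\star$ and may be removed from $K'$; symbols absent from every $w \in K'$ may likewise be dropped. In the trimmed representation, each surviving $\delta$ occurs in some $w = \delta_1 \cdots \delta_n \in K'$, and since $\varepsilon \in \varphi'(w) = \varphi'(\delta_1) \cdots \varphi'(\delta_n)$ forces $\varepsilon \in \varphi'(\delta_j)$ for every $j$, every generator contains $\varepsilon$. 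Corollary~\ref{cor:finite-number-of-shortest-words} then makes $S(\dot{\rationalset}^\star)$ finite, contradicting its infinitude. The main obstacle is executing this trimming step cleanly; once in place, the corollary closes the argument almost mechanically, with the driving semantic observation being simply that $L^\star$ always contains $\varepsilon$.
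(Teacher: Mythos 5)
Your proposal is correct and follows essentially the same route as the paper's proof: the same witness $\rationalset=\{\{a^i\}\mid i\ge 1\}$ for part~(1) combined with Corollary~\ref{cor:finite-number-of-shortest-words} on shortest words, Fact~\ref{fact:finite-regular-sets-are-rational} for part~(2), and the same example $\rationalset=\{\{a\}\}$ for part~(3). Your explicit trimming step and the singleton observation $\varphi(w)=\{a^{|w|}\}$ in part~(3) merely spell out details the paper leaves implicit (its parenthetical ``disregarding symbols not occurring in $K$'' and its terse $\{a\}\neq\{a^i\mid i\ge 0\}$), so there is no substantive difference.
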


\begin{proof}
  \begin{inparaenum}[\bfseries(1)]
  \item Consider the \RegularlyGeneratedLanguageSetAbbrev $\rationalset = \{ \{ a^i \} \mid i
    \geq 1 \}$ with $\dot{\rationalset}^\star = \{ L_i \mid i \geq 1
    \}$ with $L_i=\{ a^{j \cdot i} \mid j \geq 0 \}$.
    Every language $L_i \in \dot{\rationalset}^\star$ contains the
    empty word~$\varepsilon=a^{0 \cdot i}$, and hence,
    $\varepsilon\in\varphi(\delta)$ for all $\delta\in\Delta$
    (disregarding symbols $\delta$ not occurring in $K$).
    Thus, Corollary~\ref{cor:finite-number-of-shortest-words} applies,
    requiring that the set of shortest words
    $S(\dot{\rationalset}^\star)$ is finite. 
    This leads to a contradiction, since
    $S(\dot{\rationalset}^\star)=\{a^i|i\ge 1\}$ is infinite. 
  \item Since $\rationalset$ is finite, also
    $\dot{\rationalset}^\star$ has to be finite and statement follows
    from Fact~\ref{fact:finite-regular-sets-are-rational}.
  \item Consider the \RegularlyGeneratedLanguageSetAbbrev $\rationalset = \{\{a\}\}$, produced
    from $(K,\varphi)$ with $K=\{\delta_a\}$ and
    $\varphi(\delta_a)=a$.
    Then, $\dot{\rationalset}^\star = \{\{ a^i \mid i \geq 0 \}\}$,
    and since $\{ a \} \neq \{ a^i \mid i \geq 0 \}$ we have to
    introduce a new symbol.  \qed
  \end{inparaenum}
\end{proof}


\subsection{Complement}
\label{sec:complement}

\begin{proposition}[Non-closure under Complement]
  \label{prop:closure:complement}
  Let $\rationalset$ be a \RegularlyGeneratedLanguageSet.  Then $\overline{\rationalset}$
  is not a \RegularlyGeneratedLanguageSet.
\end{proposition}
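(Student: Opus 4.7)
The plan is to obtain the result directly from the cardinality bound already established in Fact~\ref{fact:rational-sets-cardinality}. The key insight is that complement is taken with respect to all of $2^{\Sigma^*}$, not just within the class of regular languages, so the complement explodes in cardinality well beyond what any \RegularlyGeneratedLanguageSetAbbrev can accommodate.

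Concretely, I would first observe that by Fact~\ref{fact:rational-sets-cardinality}, every \RegularlyGeneratedLanguageSetAbbrev, including $\rationalset$, contains at most countably many languages, because its index language $K \subseteq \Delta^+$ is countable and $\varphi$ surjects $K$ onto $\rationalset$. Second, assuming $\Sigma$ is nonempty (the degenerate case $\Sigma = \emptyset$ is uninteresting and can be noted in passing), $\Sigma^*$ is countably infinite, so $2^{\Sigma^*}$ has the cardinality of the continuum and is therefore uncountable. Third, since removing a countable set from an uncountable set leaves an uncountable set, $\overline{\rationalset} = \{ L \subseteq \Sigma^* \mid L \notin \rationalset \}$ is uncountable. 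Applying Fact~\ref{fact:rational-sets-cardinality} in the contrapositive direction then immediately forbids $\overline{\rationalset}$ from being a \RegularlyGeneratedLanguageSetAbbrev.

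There is no real obstacle here; the work has already been done in Fact~\ref{fact:rational-sets-cardinality}, and the proof reduces to one line of cardinal arithmetic. The only thing worth being explicit about is why we get to compare against the full power set $2^{\Sigma^*}$ rather than only regular languages: this is forced by Definition~\ref{def:ops}, which takes the complement inside $2^{\Sigma^*}$, so $\overline{\rationalset}$ automatically contains uncountably many non-regular languages, a second and independent reason it cannot be written as $(K', \varphi')$ (each $\varphi'(w)$ being, by construction, a regular language). I would mention this alternative viewpoint briefly, but lead with the cleaner cardinality argument.
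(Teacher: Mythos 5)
Your proof is correct and takes essentially the same route as the paper: both invoke Fact~\ref{fact:rational-sets-cardinality} to note that $\rationalset$ is countable while $2^{\Sigma^*}$ is uncountable, so the complement is uncountable and hence not a \RegularlyGeneratedLanguageSetAbbrev. Your additional remark that the complement necessarily contains non-regular languages is a valid independent observation, but the core argument matches the paper's one-line cardinality proof.
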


\begin{proof}
  Fact~\ref{fact:rational-sets-cardinality} states that $\rationalset$
  is countable while $2^{\Sigma^*}$ is uncountable.
  Hence, $2^{\Sigma^*}~\setminus~\rationalset$ is uncountable and is
  therefore inexpressible as \RegularlyGeneratedLanguageSetAbbrev.\qed
\end{proof}

\begin{proposition}[Closure of Point-wise Complement]
  \label{prop:closure:pw-complement}
  \begin{inparaenum}[\normalfont\bfseries(1)]
  \item\label{prop:closure:pw-complement:a}
    $\dot{\overline{\rationalset}}$ is in general not a \RegularlyGeneratedLanguageSetAbbrev.
  \item\label{prop:closure:pw-complement:b} If $\rationalset$ is
    finite, $\dot{\overline{\rationalset}}$ is a finite \RegularlyGeneratedLanguageSetAbbrev
    as well,
  \item requiring, in general, a modified language substitution.
  \end{inparaenum}
\end{proposition}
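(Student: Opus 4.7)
The plan is to dispatch the three sub-claims in turn, closely mirroring Proposition~\ref{prop:closure:pw-star}.

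For part~(1), I will exhibit a RSRL whose pointwise complement falsifies Corollary~\ref{cor:finite-number-of-shortest-words}. Over the unary alphabet $\Sigma = \{a\}$ I take $\rationalset = \{L_i \mid i \geq 1\}$ with $L_i = \{a^j \mid 1 \leq j \leq i\}$, realised by $\varphi(\delta_1) = \{a\}$, $\varphi(\delta_2) = \{\varepsilon, a\}$ and $K = L(\delta_1 \delta_2^\star)$, since $\varphi(\delta_1 \delta_2^k) = \{a\} \cdot \{\varepsilon,a\}^k = \{a^j \mid 1 \leq j \leq k+1\}$. The pointwise complements $\overline{L_i} = \{\varepsilon\} \cup \{a^j \mid j > i\}$ all contain $\varepsilon$ and have $a^{i+1}$ as their shortest non-empty word, so $S(\dot{\overline{\rationalset}}) = \{a^{i+1} \mid i \geq 1\}$ is infinite.

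To conclude~(1), suppose $\dot{\overline{\rationalset}} = (K',\varphi')$ were a RSRL; WLOG every $\delta \in \Delta'$ appears in some $w = u\delta v \in K'$. Since $\varphi'(w) = \overline{L_i}$ contains $\varepsilon$, any factorisation of $\varepsilon$ across $\varphi'(u)\varphi'(\delta)\varphi'(v)$ forces each factor to be $\varepsilon$, giving $\varepsilon \in \varphi'(\delta)$. Corollary~\ref{cor:finite-number-of-shortest-words} then forces $S(\dot{\overline{\rationalset}})$ to be finite, a contradiction. For part~(2), complementation is a bijection on $2^{\Sigma^\star}$ that preserves regularity, so a finite $\rationalset$ yields a finite $\dot{\overline{\rationalset}}$ of regular languages, and Fact~\ref{fact:finite-regular-sets-are-rational} concludes. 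For part~(3), I take $\rationalset = \{\{a\}\}$ with $K = \{\delta_a\}$, $\varphi(\delta_a) = \{a\}$ over $\Sigma = \{a\}$: then $\dot{\overline{\rationalset}} = \{\{a^j \mid j \neq 1\}\}$ cannot be produced with the original $\varphi$, since every $\varphi(w)$ with $w \in \delta_a^+$ is the singleton $\{a^{|w|}\}$, so $\varphi$ has to be extended or redefined.

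The only real obstacle is the construction in part~(1): natural candidates such as $\{\{a^i\} \mid i \geq 1\}$ do not work because the complements' shortest non-empty words collapse to a single letter. The trick is to choose each $L_i$ so that it covers all non-empty words up to length $i$, forcing the shortest non-empty word of $\overline{L_i}$ to grow with $i$, while $\varepsilon \notin L_i$ keeps $\varepsilon \in \overline{L_i}$ and thus keeps the hypothesis of Corollary~\ref{cor:finite-number-of-shortest-words} alive.
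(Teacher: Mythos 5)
Your proof is correct; parts (2) and (3) coincide with the paper's (your part (3) is in fact slightly more careful, since you verify that \emph{no} word over $\{\delta_a\}^+$ can generate the complement, rather than only comparing the two languages). Part (1) is where you genuinely diverge. The paper works over $\Sigma=\{a,b\}$ with $\rationalset=\{\Sigma^i\mid i\ge 1\}$, observes that the complements $\overline{\Sigma^i}$ are pairwise incomparable under inclusion yet all contain $\varepsilon$, and then runs a direct pumping argument on a hypothetical $(K',\varphi')$: pumping a factor $v$ with $\varepsilon\in\varphi'(v)$ yields $\varphi'(uvz)\subseteq\varphi'(uv^iz)$, contradicting incomparability. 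You instead pick a unary witness $L_i=\{a^j\mid 1\le j\le i\}$ whose complements have strictly growing shortest non-empty words, deduce $\varepsilon\in\varphi'(\delta)$ for every used symbol exactly as the paper does, and then invoke Corollary~\ref{cor:finite-number-of-shortest-words} to force $S(\dot{\overline{\rationalset}})$ to be finite --- the same template the paper uses for Proposition~\ref{prop:closure:pw-star}(1), transplanted to the complement. Both arguments are sound; yours buys uniformity (one corollary disposes of both point-wise Kleene star and point-wise complement, and avoids re-running the pumping lemma), while the paper's pumping argument is self-contained and incidentally exhibits the stronger structural fact that any infinite \RegularlyGeneratedLanguageSetAbbrev of $\varepsilon$-containing languages must contain an infinite $\subseteq$-chain. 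Your closing remark about why $\{\{a^i\}\mid i\ge 1\}$ would fail is a worthwhile observation that the paper does not make explicit.
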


\begin{proof}
  \begin{inparaenum}[\bfseries(1)]
  \item 
    %
    Consider the \RegularlyGeneratedLanguageSetAbbrev $\rationalset = (K, \varphi)$ with $K =
    L(\delta\delta^*)$ and $\varphi(\delta) = \{ a, b \} = \Sigma$.
    Then we have $\rationalset=\{\Sigma^i|i\ge 1\}$. 
    For $i \neq j$, we have $\overline{\Sigma^i} \cap
    \overline{\Sigma^j} = \emptyset$, and consequently,
    $\overline{\Sigma^i} \not\subseteq \overline{\Sigma^j}$ and
    $\overline{\Sigma^i} \not\supseteq \overline{\Sigma^j}$.
    Furthermore, observe $\varepsilon \in \overline{\Sigma^i}$ for
    each $i \geq 1$.
    Assume $\dot{\overline{\rationalset}}$ is a \RegularlyGeneratedLanguageSetAbbrev.  Then,
    there are $K'$ and $\varphi'$ such that
    $\dot{\overline{\rationalset}} = (K', \varphi')$.  
    Since $\rationalset'$ is infinite and $K'$ is regular, there
    exists a word $w \in K'$ with $w = uvz$ and $\varphi(v) \neq \{
    \varepsilon \}$ and $uv^iz \in K'$ for all $i \geq 1$.
    Because of $\varepsilon \in \overline{\Sigma^p}= \varphi(uvz)$ for
    some $p$, we obtain $\varepsilon \in \varphi(v)$ as well.
    But then, for all $i \geq 1$, $\varphi(uvz) \subseteq \varphi(uv^iz)$,
    i.e., $\varphi(uvz)=\overline{\Sigma^p} \subseteq
    \overline{\Sigma^q}=\varphi(uv^iz)$.
    This contradicts the observation that $\overline{\Sigma^p}
    \not\subseteq \overline{\Sigma^q}$.
  \item By Fact~\ref{fact:finite-regular-sets-are-rational}.
  \item Let $\rationalset = (\{ \delta_a \}, \varphi)$ with
    $\varphi(\delta_a) = \{ a \}$.  Then,
    $\dot{\overline{\rationalset}} = \{ \Sigma^* \setminus \{a \} \}$.
    But, $\{ a \} \neq \Sigma^* \setminus \{a \}$.  Therefore, we need
    a new symbol to represent $\Sigma^* \setminus \{ a \}$.  \qed
  \end{inparaenum}
\end{proof}

In contrast to complementation, some \RegularlyGeneratedLanguageSetsAbbrev have a point-wise
complement which is a \RegularlyGeneratedLanguageSetAbbrev as well; first, this is true for
all finite \RegularlyGeneratedLanguageSetsAbbrev, as shown above, but there are also some
infinite \RegularlyGeneratedLanguageSetsAbbrev which have point-wise complement.

\begin{example}
  The \RegularlyGeneratedLanguageSetAbbrev $\rationalset = (L(\delta\delta^*), \varphi)$ with
  $\varphi(\delta) = \{a, b, \varepsilon\}$ has the point-wise
  complement
  $\dot{\overline{\rationalset}}=(L(\delta_1\delta_1\delta_1^*\delta_2),
  \varphi')$ with $\varphi'(\delta_1) = \{a, b\}$ and
  $\varphi'(\delta_2) = L((a + b)^*)$.
\end{example}


\subsection{Union}
\label{sec:union}

\begin{proposition}[Closure of Union]
  \label{prop:closure:union}
  The set $\rationalset_1 \cup \rationalset_2$ is a \RegularlyGeneratedLanguageSet,
  expressible as $(K_1 \cup K_2, \varphi)$ without changing the
  substitution~$\varphi$.
\end{proposition}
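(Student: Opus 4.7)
The plan is to exploit the unifying alphabets convention stated just before the general observations: without loss of generality, $\rationalset_1=(K_1,\varphi)$ and $\rationalset_2=(K_2,\varphi)$ share a common alphabet $\Delta$ and a common substitution $\varphi:\Delta\to 2^{\Sigma^\star}$, with $K_1,K_2\subseteq\Delta^+$. Given this setup, the proposed witness $(K_1\cup K_2,\varphi)$ reuses $\varphi$ verbatim, so the only thing to check on the syntactic side is that $K_1\cup K_2$ is a regular language over $\Delta$; this is immediate from the closure of the regular languages under union.

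Next I would verify set-theoretically that $(K_1\cup K_2,\varphi)=\rationalset_1\cup\rationalset_2$. By Definition~\ref{def:rational-set}, a language $L$ lies in $(K_1\cup K_2,\varphi)$ iff there exists $w\in K_1\cup K_2$ with $L=\varphi(w)$. Splitting on which of $K_1,K_2$ contains $w$ gives $L\in\rationalset_1$ or $L\in\rationalset_2$, hence $L\in\rationalset_1\cup\rationalset_2$. Conversely, if $L\in\rationalset_1\cup\rationalset_2$, then $L=\varphi(w)$ for some $w\in K_i\subseteq K_1\cup K_2$, so $L$ lies in $(K_1\cup K_2,\varphi)$. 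Both inclusions combined yield equality.

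There is no real obstacle here; the proof is a one-line construction followed by a routine bidirectional inclusion. The only conceptual point worth emphasising is that the result is obtained \emph{without modifying} $\varphi$, which is exactly the property needed for the ``fixed substitution'' column of Theorem~\ref{thm:closure-properties}: since the unification step preserves $\varphi$ when $\rationalset_1$ and $\rationalset_2$ are already defined over the same substitution, closure under union holds in all three regimes (general, finite, and finite with fixed substitution). Finally, if both $\rationalset_i$ are finite, then so is $\rationalset_1\cup\rationalset_2$, so finiteness is also preserved.
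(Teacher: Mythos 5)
Your proof is correct and follows exactly the paper's approach: the paper's own argument is the one-liner ``regular languages are closed under union,'' relying implicitly on the unified-alphabet convention and the witness $(K_1\cup K_2,\varphi)$, which you simply spell out in full. Nothing to add.
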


\begin{proof}
  Regular languages are closed under union, hence the claim
  follows.\qed
\end{proof}

The following set of regular languages is not \ARegularlyGeneratedLanguageSet.
We will use it in the proof of Proposition~\ref{prop:closure:pw-union} to show that, in general, \RegularlyGeneratedLanguageSetsAbbrev are not closed under point-wise union.
\begin{example}
	\label{expl:pumping-lemma}
	Consider the set $\cM = \{ \{ b \} \cup \{ a^{i} \mid 1 \leq i \leq n + 1 \} \mid n \in \bN \} \subseteq 2^{\{a, b\}^*}$.
	$\cM$ contains infinitely many languages, therefore, any \RegularlyGeneratedLanguageSetAbbrev $\rationalset = (K, \varphi)$, with $\cM = \cR$, requires a regular language $K$ containing infinitely many words.
	By $L_n$ we denote the set $\{b\} \cup \{ a^i \mid 1 \leq i \leq n + 1\}$.
	Then, $L_0 \subsetneq L_1 \subsetneq \ldots L_{i - 1} \subsetneq L_i \subsetneq L_{i + 1} \subsetneq \ldots$.
	There must be a word $w = uvz \in K$ such that $uv^iz\in K$, for all $i \geq
  1$ (cf.~pumping lemma for regular languages~\cite{DBLP:books/aw/HopcroftU79}).
	Furthermore, there must be such a word $w = uvz$ such that $\varphi(u) \neq \emptyset$, $\varphi(v) \neq \emptyset$, $\varphi(v) \neq \{ \varepsilon \}$, and $\varphi(z) \neq \emptyset$.
	This is due to the fact that we have to generate arbitrary long words $a_i$.
	We can assume that $b \notin \varphi(v)$ because otherwise $b^i \in \varphi(v^i)$, for all $i \geq 1$.
	Therefore, $a^k \in \varphi(v)$ for some $k \geq 1$.
	Since $b \in \varphi(uvz)$ has to be true, we can assume w.l.o.g.~that $b \in \varphi(u)$.
	But, then $ba^k\ldots \in \varphi(uvz)$.
	This is a contradiction to the fact that, for all $n \geq 1$, $ba^k\ldots \notin L_n$.
\end{example}

\begin{proposition}[Closure of Point-wise Union]
  \label{prop:closure:pw-union}
  \begin{inparaenum}[\normalfont\bfseries(1)]
  \item\label{prop:closure:pw-union:a} The set $\rationalset_1 \dotcup
    R$ is, in general, not a \RegularlyGeneratedLanguageSetAbbrev.
  \item\label{prop:closure:pw-union:b} The set $\rationalset \dotcup
    R$ is a \RegularlyGeneratedLanguageSetAbbrev for finite $\rationalset$.
  \item\label{prop:closure:pw-union:c} In the latter case, the
    resulting \RegularlyGeneratedLanguageSetAbbrev requires in general a different language
    substitution.
  \end{inparaenum}
\end{proposition}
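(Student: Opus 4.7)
My plan is to split the proposition into its three parts and dispatch them in order of increasing difficulty. Parts~(\ref{prop:closure:pw-union:b}) and~(\ref{prop:closure:pw-union:c}) are easy once the right examples are in place. For~(\ref{prop:closure:pw-union:b}), I would observe that $|\rationalset \dotcup R| \le |\rationalset|$, so $\rationalset \dotcup R$ is a finite set whose elements are regular languages (regular languages being closed under union), and then invoke Fact~\ref{fact:finite-regular-sets-are-rational}. For~(\ref{prop:closure:pw-union:c}), I would present a minimal witness: take $\rationalset=(\{\delta_a\},\varphi)$ with $\varphi(\delta_a)=\{a\}$ and $R=\{b\}$; the single element of $\rationalset \dotcup R$ is $\{a,b\}$, which is not in the image of $\varphi$, so the substitution must be extended with a fresh symbol mapping to $\{a,b\}$.

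The substantive part is~(\ref{prop:closure:pw-union:a}). Here the plan is to reuse the set $\cM = \{\,\{b\}\cup\{a^i\mid 1\le i\le n+1\}\mid n\in\bN\,\}$ from Example~\ref{expl:pumping-lemma}, which has already been shown there, via a pumping argument, not to be a RSRL. I would factor $\cM = \rationalset_1 \dotcup \{b\}$ with $\rationalset_1 = \{\,\{a^i\mid 1\le i\le n+1\}\mid n\in\bN\,\}$, and then verify that $\rationalset_1$ itself is a RSRL. A natural choice is $\Delta=\{\delta_1,\delta_2\}$ with $\varphi(\delta_1)=\{a\}$, $\varphi(\delta_2)=\{\varepsilon,a\}$, and $K=L(\delta_1\delta_2^\star)$; a brief computation shows $\varphi(\delta_1\delta_2^n)=\{a^i\mid 1\le i\le n+1\}$, so $\rationalset_1$ is a RSRL. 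Combined with $R=\{b\}$, this yields $\rationalset_1 \dotcup R = \cM$, which is not a RSRL, proving the claim.

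The main obstacle — showing that no regular generator $K$ with any substitution $\varphi$ can produce $\cM$ — is already discharged by the pumping-style argument of Example~\ref{expl:pumping-lemma}, so the proof reduces to the routine factorisation above. The only care needed is to pick a $\rationalset_1$ that is itself provably rational but whose union with a constant $R$ lands inside the known non-rational family $\cM$; the symmetric roles of $a$ and $b$ in $\cM$ make the split $\rationalset_1 \dotcup \{b\}$ the obvious choice.
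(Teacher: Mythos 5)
Your proposal is correct and matches the paper's own proof essentially verbatim: part (1) uses the same generator $K=L(\delta_1\delta_2^\star)$ with $\varphi(\delta_1)=\{a\}$, $\varphi(\delta_2)=\{a,\varepsilon\}$ and $R=\{b\}$ to land in the non-rational family of Example~\ref{expl:pumping-lemma}, part (2) is the same appeal to Fact~\ref{fact:finite-regular-sets-are-rational}, and part (3) is the same single-symbol witness with $\varphi(\delta)=\{a\}$ and $R=\{b\}$.
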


\begin{proof}
  \begin{inparaenum}[\bfseries(1)]
  \item Let $\rationalset = (L(\delta_1\delta_2^*), \varphi)$ with
    $\varphi(\delta_1) = \{ a \}$ and $\varphi(\delta_2) = L(a +
    \varepsilon)$ and let $R = \{ b \}$.  Then, $\rationalset \dotcup
    R = \{ \{ b \} \cup \{ a^i \mid 1 \leq i \leq n + 1 \} \mid n \in
    \bN \}$ which is not a \RegularlyGeneratedLanguageSetAbbrev, as shown in
    Example~\ref{expl:pumping-lemma}.
  \item By Fact~\ref{fact:finite-regular-sets-are-rational}.
  \item\label{proof:star-free-fixed-alphabet-point-wise-union} Let
    $\rationalset = (\{ \delta \}, \varphi)$ with $\Delta = \{
    \delta \}$, $\Sigma = \{ a, b \}$, $\varphi(\delta) = \{ a \}$
    and let $R = \{ b \}$.  Then, $\rationalset \dotcup R = \{ \{ a,
    b \} \}$, which is inexpressible with $\varphi$. \qed
  \end{inparaenum}
\end{proof}



\subsection{Intersection}
\label{sec:intersection}

\begin{proposition}[Closure of Intersection]
	\label{prop:closure:intersection}
    	Let $\rationalset_1$ and
      $\rationalset_2$ be two finite \RegularlyGeneratedLanguageSetsAbbrev using the same
      language substitution~$\varphi$.  Then, $\rationalset_1 \cap
      \rationalset_2$ is a finite \RegularlyGeneratedLanguageSetAbbrev which can be expressed
      using the language substitution~$\varphi$.
\end{proposition}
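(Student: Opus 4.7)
The plan is to exploit the finiteness assumption directly and construct the required $K'$ as a finite subset of $K_1$ (symmetrically, one could use $K_2$). Since $\rationalset_1$ is finite and $\rationalset_1 \cap \rationalset_2 \subseteq \rationalset_1$, the intersection is automatically finite, so the real content of the proposition is that it can be generated by the \emph{same} substitution $\varphi$. Note that invoking Fact~\ref{fact:finite-regular-sets-are-rational} alone would already show that $\rationalset_1 \cap \rationalset_2$ is \emph{some} RSRL, but it would introduce fresh symbols $\delta_L$, which is precisely what the ``fixed substitution'' claim forbids.

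The key step will be witness selection. For each $L \in \rationalset_1 \cap \rationalset_2$, I pick a word $w_L \in K_1$ with $\varphi(w_L) = L$; such a witness exists because $L \in \rationalset_1 = \{\varphi(w) \mid w \in K_1\}$. Setting $K' = \{w_L \mid L \in \rationalset_1 \cap \rationalset_2\}$ yields a finite---hence regular---subset of $\Delta^+$ with $\varphi(K') = \rationalset_1 \cap \rationalset_2$, so $(K', \varphi)$ is the desired finite RSRL over the fixed $\varphi$. The verification $\varphi(K') = \rationalset_1 \cap \rationalset_2$ reduces to the two obvious inclusions from the choice of witnesses and the fact that $K' \subseteq K_1$.

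The proof is fundamentally a finiteness observation, so I do not expect a deep obstacle. The one point worth a sentence of justification is that the selection of $K'$ can be carried out effectively: since $\varphi$ maps each $\delta \in \Delta$ to a regular language, the languages $\varphi(w_1)$ and $\varphi(w_2)$ are effectively constructible regular languages for every $w_1 \in K_1$ and $w_2 \in K_2$, and the equivalence $\varphi(w_1) = \varphi(w_2)$ is decidable. Enumerating $K_1$ and $K_2$ modulo $\varphi$-equivalence (finite quotients, since $\rationalset_1$ and $\rationalset_2$ are finite) and comparing representatives pairwise yields $K'$ constructively, thereby settling both the closure claim and its effective form.
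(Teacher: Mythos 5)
Your proof is correct and follows essentially the same route as the paper: both restrict attention to the words of $K_1$ whose image under $\varphi$ also lies in $\rationalset_2$, keeping the substitution fixed. Your variant of selecting one finite set of witnesses (one per language in the intersection) is in fact slightly cleaner, since it makes the regularity of the resulting generator immediate even when $K_1$ itself is infinite.
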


\begin{proof}
  We can enumerate each word $w_1 \in K_1$ and check whether there is a word $w_2 \in K_2$ such that $\varphi(w_1) = \varphi(w_2)$.
	If so, we keep $w_1$ in a new set $K_3 = \{ w_1 \in K_1 \mid \exists w_2 \in K_2. \varphi(w_1) = \varphi(w_2)\}$ and $(K_3, \varphi) = \rationalset_1 \cap \rationalset_2$.
	\qed
\end{proof}

In general, \RegularlyGeneratedLanguageSetsAbbrev are not closed under point-wise intersection
but they are closed under point-wise intersection when restricting to
finite \RegularlyGeneratedLanguageSetsAbbrev.

\begin{proposition}[Closure of Point-wise Intersection]
  \label{prop:closure:pw-intersection}
  \begin{inparaenum}[\normalfont\bfseries(1)]
  \item\label{prop:closure:pw-intersection:a} \RegularlyGeneratedLanguageSetsStartAbbrev are not
    closed under point-wise intersection.
  \item\label{prop:closure:pw-intersection:b} For finite
    \rationalset $\rationalset \dotcap R$ is a finite \RegularlyGeneratedLanguageSetAbbrev,
  \item\label{prop:closure:pw-intersection:c} in general requiring a
    different language substitution.
  \end{inparaenum}
\end{proposition}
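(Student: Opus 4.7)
The three items split cleanly, so I would handle them separately. Parts \textbf{(2)} and \textbf{(3)} are short: for \textbf{(2)}, if $\rationalset$ is finite then $\rationalset \dotcap R = \{L \cap R \mid L \in \rationalset\}$ is a finite set of regular languages (regular languages are closed under intersection), so it is an \RegularlyGeneratedLanguageSetAbbrev by Fact~\ref{fact:finite-regular-sets-are-rational}. For \textbf{(3)}, I would exhibit a concrete \RegularlyGeneratedLanguageSetAbbrev like $\rationalset = (\{\delta\}, \varphi)$ with $\varphi(\delta) = \{a,b\}$ over $\Sigma = \{a,b\}$ and $R = \{a\}$; then $\rationalset \dotcap R = \{\{a\}\}$, which cannot be written using $\varphi$ alone since $\{a\} \neq \varphi(\delta)$ and $\Delta$ contains no other symbol, forcing the introduction of a fresh symbol mapped to $\{a\}$.

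The real work is \textbf{(1)}. The plan is to build an infinite \RegularlyGeneratedLanguageSetAbbrev $\rationalset$ and a regular $R$ so that $\rationalset \dotcap R$ coincides (up to harmless relabeling) with a set of the same shape as $\cM$ in Example~\ref{expl:pumping-lemma}, enabling me to reuse the pumping argument given there. Concretely, I would take $\Sigma = \{a,b\}$, $\Delta = \{\delta_1, \delta_2\}$, $\varphi(\delta_1) = \{\varepsilon, b\}$, $\varphi(\delta_2) = \{\varepsilon, a\}$, and $K = L(\delta_1 \delta_2^\star)$, giving
\[
\varphi(\delta_1 \delta_2^n) \;=\; \{a^i \mid 0 \le i \le n\} \;\cup\; \{b a^i \mid 0 \le i \le n\}.
\]
I would then pick the regular language $R = \{b\} \cup L(a^\star)$, so that intersection with $R$ strips away every $ba^i$ with $i \ge 1$, leaving
\[
\varphi(\delta_1 \delta_2^n) \cap R \;=\; \{b\} \cup \{a^i \mid 0 \le i \le n\}.
\]
Hence $\rationalset \dotcap R = \{L_n \mid n \in \bN\}$ where $L_n = \{b\} \cup \{a^i \mid 0 \le i \le n\}$.

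The last step is to show $\{L_n \mid n \in \bN\}$ is not an \RegularlyGeneratedLanguageSetAbbrev, by the very argument used in Example~\ref{expl:pumping-lemma}: the languages $L_n$ are strictly increasing and infinitely many, so any generating $K'$ must admit a pumpable decomposition $w = uvz$ with $\varphi(uv^iz) = L_{n_i}$ for unbounded $n_i$, forcing $a^k \in \varphi(v)$ for some $k \ge 1$; since $b$ occurs in every $L_{n_i}$ it must sit in some fixed factor (w.l.o.g.\ $\varphi(u)$), and then pumping produces a word with a $b$ followed by arbitrarily many $a$'s, contradicting $b a^k \notin L_n$ for any $n$. The main obstacle is only the bookkeeping for the pumping step: I need to verify carefully that the $\varphi$-factor carrying $b$ and the one carrying $a^k$ cannot be the same without creating $ba^k$ in the generated language, but this is exactly the casework already carried out for Example~\ref{expl:pumping-lemma}, and I would invoke that example rather than redo the analysis from scratch. \qed
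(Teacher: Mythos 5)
Your proposal is correct and follows essentially the same route as the paper: part (2) via Fact~\ref{fact:finite-regular-sets-are-rational}, part (3) via a small fixed-substitution counterexample, and part (1) by engineering an infinite \RegularlyGeneratedLanguageSetAbbrev and an $R$ whose point-wise intersection is the strictly increasing family of Example~\ref{expl:pumping-lemma} (the paper uses $K=L(\delta\delta^\star)$, $\varphi(\delta)=L(a+b^\star)$, $R=L(a^\star+b)$ instead of your two-symbol construction). The only caveat is that your family $L_n=\{b\}\cup\{a^i\mid 0\le i\le n\}$ contains $\varepsilon$ and so is not literally the set $\cM$ of the example, so you must note—as you do—that the pumping argument transfers verbatim rather than citing the example as a black box.
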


\begin{proof}
  \begin{inparaenum}[\bfseries(1)]
  \item Let $\rationalset = (K, \varphi)$ with $K = L(\delta\delta^*)$
    and $\varphi(\delta) = L({\tt a + b^\star})$, and set $R = L({\tt
      a^\star + b})$.  
    Then $\rationalset \dotcap R = \{ \{ b \} \cup \{ a^{i} \mid 1
    \leq i \leq n + 1 \} \mid n \in \bN \}$.  
    In Example~\ref{expl:pumping-lemma}, we showed that
    $\rationalset\dotcap R$ is not a \RegularlyGeneratedLanguageSetAbbrev.
  \item By Fact~\ref{fact:finite-regular-sets-are-rational}.
  \item Let $\rationalset = (K, \varphi)$ with $K = \{ \delta \}$ and
    $\varphi(\delta) = L({\tt a + b^\star})$, and set $R =
    L({\tt a^\star + b})$.
    Then, $\rationalset \dotcap R = \{ L({\tt a + b }) \}$ which in
    inexpressible via $\varphi$.  \qed
  \end{inparaenum}
\end{proof}



\subsection{Set Difference}
\label{sec:set-difference}

\begin{proposition}[Closure of Difference]
  \label{prop:closure:difference}
  	For finite $\rationalset_1$
    and $\rationalset_2$, $\rationalset_1 - \rationalset_2$ is a
    finite \RegularlyGeneratedLanguageSetAbbrev, expressible as $(K_3, \varphi)$, for some
    $K_3 \subseteq K_1$.
\end{proposition}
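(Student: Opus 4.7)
The plan is to mirror the construction used in the intersection proof (Proposition~\ref{prop:closure:intersection}), since $\rationalset_1 - \rationalset_2$ is simply $\rationalset_1 \cap \overline{\rationalset_2}$ when viewed as a set-difference, but we can define it directly without invoking complement (which, by Proposition~\ref{prop:closure:complement}, is problematic). Following the \emph{Unifying Alphabets} convention, I would assume $\rationalset_1 = (K_1, \varphi)$ and $\rationalset_2 = (K_2, \varphi)$ over a common substitution $\varphi$. Because $\rationalset_1$ is finite, by Fact~\ref{fact:finite-regular-sets-are-rational} we may further assume without loss of generality that $K_1$ is finite, taking one representative word per language in $\rationalset_1$.

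The core step is to define
\[
K_3 \;=\; \{\, w_1 \in K_1 \mid \forall w_2 \in K_2:\; \varphi(w_1) \neq \varphi(w_2)\,\},
\]
which yields $K_3 \subseteq K_1$ by construction. I would then verify $(K_3, \varphi) = \rationalset_1 - \rationalset_2$ in two directions: if $L \in \rationalset_1 - \rationalset_2$, pick the representative $w_1 \in K_1$ with $\varphi(w_1) = L$; since $L \notin \rationalset_2$, no $w_2 \in K_2$ satisfies $\varphi(w_2) = L$, hence $w_1 \in K_3$ and $L \in (K_3, \varphi)$. Conversely, any $L = \varphi(w_1)$ with $w_1 \in K_3$ lies in $\rationalset_1$ but, by the defining condition on $K_3$, not in $\rationalset_2$.

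Finiteness of the result is immediate: $|(K_3, \varphi)| \leq |K_3| \leq |K_1| < \infty$. The only mildly subtle point is that the set-theoretic definition of $K_3$ implicitly relies on being able to compare $\varphi(w_1)$ and $\varphi(w_2)$ as regular languages; this is classical (equivalence of regular languages is decidable via minimal DFAs), but for the closure claim it suffices that $K_3$ is well-defined as a set, independently of any algorithmic realization. I do not expect any genuine obstacle here, since this proposition is essentially a transcription of the intersection argument with the equality test negated in the defining predicate for $K_3$.
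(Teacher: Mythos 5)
Your proposal is correct and follows essentially the same route as the paper, which also simply sets $K_3$ to be the subset of $K_1$ whose $\varphi$-images do not lie in $\rationalset_2$ and concludes immediately (the paper's one-liner actually writes the membership condition as $\varphi(w)\in\rationalset_2$, an apparent sign typo that your version fixes). Your additional remarks on the two inclusions and on decidability of the defining predicate are fine but not needed for the closure claim.
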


\begin{proof}
  	Set $K_3=\{w\in K_1 \mid \varphi(w)\in \rationalset_2\}$ and
    the claim follows. \qed
\end{proof}

\begin{proposition}[Closure of Point-wise Difference]
  \label{prop:closure:pw-difference}
  \begin{inparaenum}[\normalfont\bfseries(1)]
  \item\label{prop:closure:pw-difference:a} In general, $\rationalset
    \dotdiv R$ is not a \RegularlyGeneratedLanguageSetAbbrev.
  \item\label{prop:closure:pw-difference:b} $\rationalset \dotdiv R$
    is a finite \RegularlyGeneratedLanguageSetAbbrev for finite \rationalset, 
  \item\label{prop:closure:pw-difference:c} requiring in general a
    different language substitution.
  \end{inparaenum}
\end{proposition}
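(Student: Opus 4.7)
My plan is to follow the same three-part pattern used for point-wise union (Proposition~\ref{prop:closure:pw-union}) and point-wise intersection (Proposition~\ref{prop:closure:pw-intersection}): give a concrete counterexample for the general case, invoke Fact~\ref{fact:finite-regular-sets-are-rational} for the finite case, and exhibit a tiny finite \RegularlyGeneratedLanguageSetAbbrev whose point-wise difference forces the introduction of a new alphabet symbol.

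For part~\textbf{(\ref{prop:closure:pw-difference:a})}, the key idea is to reuse the pumping-style non-rational family $\{\{b\} \cup \{a^{i} \mid 1 \leq i \leq n+1\} \mid n \in \bN\}$ from Example~\ref{expl:pumping-lemma}, now produced via subtraction instead of intersection. I would start from a \RegularlyGeneratedLanguageSetAbbrev whose members already contain the target languages together with a \emph{uniformly regular} set of ``garbage'' strings, and subtract exactly that garbage. Concretely, set $\rationalset = (L(\delta_1 \delta_2^\star), \varphi)$ with $\varphi(\delta_1) = \{a,b\}$ and $\varphi(\delta_2) = L(a + \varepsilon)$, so that $\varphi(\delta_1 \delta_2^n) = \{a^{i} \mid 1 \leq i \leq n+1\} \cup \{ba^{i} \mid 0 \leq i \leq n\}$. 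Then take $R = L(ba\,a^\star)$, which is regular and contains precisely the unwanted $b$-prefixed words of length $\geq 2$. Computing $\rationalset \dotdiv R$ yields exactly the family from Example~\ref{expl:pumping-lemma}, so non-rationality is inherited. The main obstacle is getting the shapes of $\varphi(\delta_1)$, $\varphi(\delta_2)$, and $R$ to line up so that the subtraction leaves the singleton $\{b\}$ together with the growing prefix $\{a,a^2,\dots,a^{n+1}\}$ and nothing else; this requires that $R$ touches only the $b$-prefixed part and leaves the $a$-power part untouched.

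Part~\textbf{(\ref{prop:closure:pw-difference:b})} is immediate: if $\rationalset$ is finite, then $\rationalset \dotdiv R = \{L - R \mid L \in \rationalset\}$ is a finite set of regular languages (since regular languages are closed under difference with a regular language), and Fact~\ref{fact:finite-regular-sets-are-rational} applies.

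For part~\textbf{(\ref{prop:closure:pw-difference:c})}, I would mimic the constructions in Propositions~\ref{prop:closure:pw-union}(\ref{prop:closure:pw-union:c}) and~\ref{prop:closure:pw-intersection}(\ref{prop:closure:pw-intersection:c}): pick a one-element \RegularlyGeneratedLanguageSetAbbrev whose only language does not contain its image under subtraction. For instance, let $\rationalset = (\{\delta\}, \varphi)$ with $\varphi(\delta) = \{a,b\}$, and set $R = \{a\}$. Then $\rationalset \dotdiv R = \{\{b\}\}$, but $\{b\} \neq \{a,b\} = \varphi(\delta)$, so no word over the old alphabet $\{\delta\}$ maps to $\{b\}$ under $\varphi$; a fresh symbol must be introduced. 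No serious obstacle is expected here — the only care is to ensure the subtracted set is strictly contained in $\varphi(\delta)$ and different from it.
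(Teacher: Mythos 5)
Your proposal is correct and follows essentially the same route as the paper: all three parts reduce part~(1) to the non-rational family of Example~\ref{expl:pumping-lemma} via a point-wise difference, part~(2) to Fact~\ref{fact:finite-regular-sets-are-rational}, and part~(3) to a one-symbol counterexample forcing a fresh symbol. Your concrete choices of $\varphi$ and $R$ in parts~(1) and~(3) differ from the paper's (which uses $\varphi(\delta_1)=L(a+b)$, $\varphi(\delta_2)=L(a+b+\varepsilon)$, a more elaborate $R$, and obtains $\{\emptyset\}$ in part~(3)), but these are immaterial variations of the same argument, and your instances check out.
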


\begin{proof}
  \begin{inparaenum}[\bfseries(1)]
  \item Let $\rationalset = (L(\delta_1\delta_2^*), \varphi)$ with
    $\varphi(\delta_1) = L(a + b)$ and $\varphi(\delta_2) = L(a + b +
    \varepsilon)$.  Let $R = L(bbb^* + (a + b)^*ab(a + b)^* + (a +
    b)^*ba(a + b)^*)$.  Then, $\rationalset \dotdiv R = \{ \{ b \}
    \cup \{ a^i \mid 1 \leq i \leq n + 1\} \mid n \in \bN\}$ which is
    not a \RegularlyGeneratedLanguageSetAbbrev (see Example~\ref{expl:pumping-lemma}).
  \item By Fact~\ref{fact:finite-regular-sets-are-rational}.
  \item Let $\rationalset = (\{\delta_a\}, \varphi)$ with
    $\varphi(\delta_a) = \{ a \}$ and let $R = \{ a \}$.  Then,
    $\rationalset \dotdiv R = \{ \emptyset \}$, requiring a new
    symbol.\qed
	\end{inparaenum}
\end{proof}


\begin{proposition}[Closure of Symmetric Difference]
  \label{prop:closure:sym-difference}
  	Let $\rationalset_1$ and
    $\rationalset_2$ be finite \RegularlyGeneratedLanguageSetsAbbrev using the same language
    substitution~$\varphi$.  Then,
    $\rationalset_1\Delta\rationalset_2$ is a finite \RegularlyGeneratedLanguageSetAbbrev and
    can be expressed using the language substitution~$\varphi$.
\end{proposition}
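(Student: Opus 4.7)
The plan is to reduce the claim directly to the three preceding closure results, exploiting that symmetric difference is defined as $\rationalset_1 \Delta \rationalset_2 = (\rationalset_1 \cup \rationalset_2) - (\rationalset_1 \cap \rationalset_2)$ and that each of union, intersection, and difference has already been shown to preserve finiteness and the language substitution $\varphi$ in the finite case with a common $\varphi$.

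Concretely, I would first apply Proposition~\ref{prop:closure:union} to obtain $\rationalset_1 \cup \rationalset_2 = (K_1 \cup K_2, \varphi)$, which is finite because the union of two finite sets is finite. Next, I would invoke Proposition~\ref{prop:closure:intersection} on the finite RSRLs $\rationalset_1$ and $\rationalset_2$ (both over the common $\varphi$) to obtain $\rationalset_1 \cap \rationalset_2 = (K_\cap, \varphi)$ with $K_\cap \subseteq K_1$, again finite. Since both intermediate sets are finite RSRLs sharing the language substitution $\varphi$, I can finally apply Proposition~\ref{prop:closure:difference} to $(K_1 \cup K_2, \varphi) - (K_\cap, \varphi)$, yielding a finite RSRL $(K_3, \varphi)$ with $K_3 \subseteq K_1 \cup K_2$. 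This set equals $\rationalset_1 \Delta \rationalset_2$ by the definition of symmetric difference.

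There is no real obstacle here; the whole argument is a one-line composition. The only thing worth flagging explicitly is that each of the three invoked propositions preserves the shared substitution $\varphi$, so no alphabet unification or substitution modification is required at any intermediate step, which is exactly what the statement claims. Finiteness is preserved throughout because all intermediate sets are bounded in cardinality by $|\rationalset_1| + |\rationalset_2|$.
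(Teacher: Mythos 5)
Your proof is correct and takes essentially the same route as the paper, which likewise reduces symmetric difference immediately to the already-established closure properties of union, intersection, and difference for finite RSRLs over a common substitution $\varphi$. You merely spell out the composition more explicitly; no new idea is needed or missing.
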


\begin{proof}
	The proof follows immediately from the closure properties of union, intersection, and difference.
	\qed
\end{proof}

\section{Decision Problems}
\label{sec:decision-problems}

Given a a regular language $R\subseteq\Sigma^\star$ and a
\RegularlyGeneratedLanguageSetAbbrev $\rationalset = (K, \varphi)$
over the alphabets $\Delta$ and $\Sigma$, the \emph{membership
  problem} is to decide whether $R\in\rationalset$ holds.
Given another $\rationalset' = (K', \varphi')$, also over the
alphabets $\Delta'$ and $\Sigma$, the \emph{inclusion problem} asks
whether $\rationalset\subseteq\rationalset'$ holds, and the
\emph{equivalence problem,} whether $\rationalset=\rationalset'$
holds.

\begin{theorem}[Equivalence, Inclusion, and Membership for Kleene star
  free \RegularlyGeneratedLanguageSetsAbbrev]
  \label{thm:kleene-star-free}
  Membership, inclusion, and equivalence are \PSPACE-complete for
  Kleene star free \RegularlyGeneratedLanguageSetsAbbrev. 
\end{theorem}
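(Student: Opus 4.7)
The plan is to establish the \PSPACE{} upper bound by combining a nondeterministic sweep through $K$ with an NFA equivalence check, and to derive matching lower bounds from the classical \PSPACE-hardness of regular expression universality.

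\textbf{Upper bound.} The key observation is that, because $K$ is given by a Kleene star free regular expression, every word $w\in K$ has length bounded by $|K|$. For membership of a regular language $R$ in $\rationalset=(K,\varphi)$, I would nondeterministically guess such a word $w=\delta_1\cdots\delta_k$ while simultaneously tracing an NFA for $K$ to verify $w\in K$; construct, in polynomial size, an NFA $\cA_w$ for $\varphi(w)$ by concatenating NFAs for the $\varphi(\delta_i)$; and finally invoke the \PSPACE{} equivalence test $L(\cA_w)=R$. By Savitch's theorem, the resulting nondeterministic polynomial-space algorithm lies in \PSPACE. For inclusion $\rationalset\subseteq\rationalset'$, I use the dual formulation ``for every $w\in K$, $\varphi(w)\in\rationalset'$'': coguess a counterexample $w\in K$ and refute its membership in $\rationalset'$ via the membership procedure, using $\PSPACE=\complexityclass{coPSpace}$. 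Equivalence is the conjunction of two inclusions and hence also lies in \PSPACE.

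\textbf{Lower bound.} I reduce from the universality problem for regular expressions, which is \PSPACE-complete. Given a regular expression $E$ over $\Sigma$, set $\rationalset_E=(\{\delta\},\varphi)$ with $\varphi(\delta)=L(E)$ and pick $R=\Sigma^\star$, encoded by a short regular expression. Then $R\in\rationalset_E$ iff $L(E)=\Sigma^\star$, establishing \PSPACE-hardness of membership. The same reduction yields hardness for inclusion by rephrasing the query as $\{R\}\subseteq\rationalset_E$ with the left-hand side encoded as a singleton \RegularlyGeneratedLanguageSetAbbrev, and for equivalence by comparing $\rationalset_E$ with the singleton \RegularlyGeneratedLanguageSetAbbrev $(\{\delta'\},\varphi')$ and $\varphi'(\delta')=\Sigma^\star$.

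\textbf{Main obstacle.} The principal subtlety is to isolate exactly where the Kleene star free assumption enters: only this hypothesis bounds the length of witnesses $w\in K$ by $|K|$, which in turn bounds the size of $\cA_w$ and makes the NFA equivalence test fit into polynomial space. I expect to spend most of the care on the composition of the nondeterministic guess with the \PSPACE{} equivalence subroutine, arguing that the latter is applied to an NFA that has already been produced in polynomial space, and that the complementation steps in the inclusion and equivalence arguments do not stack space usage beyond what Savitch's theorem permits. Once this bookkeeping is in place, all three problems collapse into the same \PSPACE-complete bracket.
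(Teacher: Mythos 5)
Your proposal is correct and follows essentially the same route as the paper: both upper bounds rest on the fact that a Kleene star free $K$ admits a \PSPACE enumeration of its (polynomially length-bounded) words combined with the \PSPACE regular-expression equivalence test of Meyer and Stockmeyer, and both lower bounds reduce universality of a regular expression $X$ to all three problems via the singleton \RegularlyGeneratedLanguageSetAbbrev $(\{\delta\},\varphi)$ with $\varphi(\delta)=L(X)$ against $R=\Sigma^\star$. The only cosmetic difference is that you phrase the search nondeterministically and close it with Savitch's theorem and $\PSPACE=\complexityclass{coPSpace}$, whereas the paper enumerates the words of $K$ deterministically and checks inclusion with a nested loop.
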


This holds true, since in case of Kleene star free
\RegularlyGeneratedLanguageSetsAbbrev, we can enumerate the regular
expressions defining all member languages in \PSPACE.
Given the \PSPACE-completeness of regular language equivalence, we
compare a given regular expression with all member languages, solving
the membership problem in \PSPACE.
Doing so for all languages of another
\RegularlyGeneratedLanguageSetAbbrev solves the inclusion problem, and
checking mutual inclusion yields an algorithm for equivalence.
This approach does \emph{not} immediately generalize to finite
\RegularlyGeneratedLanguageSetsAbbrev, since finite
\RegularlyGeneratedLanguageSetsAbbrev $\rationalset=\{\varphi(w)\mid
w\in K\}$ may be generated from an infinite $K$ with Kleene stars.

In the general case, the situation is quite different:
Previous work shows that the membership problem is
decidable~\cite{membership}.
Taking this work as starting point, we give a first \tEXPSPACE upper
bound on the complexity of the problem.
A corresponding lower bound is missing, however we expect the problem
to be at least \EXPSPACE-hard.
Due to space reasons, we only give an overview on the algorithm in the
paper and must defer its details to the appendix.
Finally, the decidability of inclusion and equivalence are still open
problems.
%

\subsection{Membership for general \RegularlyGeneratedLanguageSetsAbbrev}
\label{sec:memb-gener-case}

\begin{algorithm}[tbp]
  \Input{regular languages $R\subseteq\Sigma^\star$, $K\subseteq\Delta^\star$,\\
    \hspace*{4mm} regular language substitution $\varphi$ with
    $\varphi(\delta)\subseteq\Sigma^\star$ for all $\delta\in\Delta$, and\\
    \hspace*{4mm} all as regular expressions}
  \Returns{$\boolTrue$ iff $\exists w\in K: \varphi(w)=R$ (i.e., iff $R\in(K,\varphi)$)}

  \ForEach{$M' \in \setenumeration(R,K,\varphi)$}{ \label{line:mem:decomp-iteration}
    \lIf{$\basiccheck(R,M',\varphi)$}{\label{line:mem:check}
      \Return $\boolTrue$\;}}
  \Return $\boolFalse$\;
  \caption{$\membership(R,K,\varphi)$}
  \label{alg:membership}
\end{algorithm}

By definition, the membership problem is equivalent to asking whether
there exists a $w\in K$ with $\varphi(w)=R$.
For checking the existence of such a $w$, we have to check possibly
infinitely many words in $K$ efficiently.
\begin{inparaenum}[\bfseries (A)]
  To render this search feasible, we
\item rule out irrelevant parts of $K$, and
\item treat subsets of $K$ at once.
\end{inparaenum}
This leads to the procedure $\membership(K,R,\varphi)$ shown in
Algorithm~\ref{alg:membership}, which first enumerates with $M' \in
\setenumeration(K,R,\varphi)$ a sufficient set of sublanguages
(Line~\ref{line:mem:decomp-iteration}), and then checks each of those
sublanguages individually (Line~\ref{line:mem:check}).
More specifically, we employ the following optimizations:
\begin{inparaenum}
  We rule out
\item[\bfseries (A.1)] all words $w$ with $\varphi(w)\not\subseteq R$,
  and
\item[\bfseries (A.2)] all words $w$ whose language $\varphi(w)$
  differs from $R$ in the \emph{length of its shortest
    word.} 
  We subdivide the remaining search space 
\item[\bfseries (B)] into finitely many suitable languages $M'$ and
  check the existence of a $w\in M'$ with $\varphi(w)=R$ in a single
  step.
\end{inparaenum}

We discuss a mutually fitting design of these steps below and
consider the resulting complexity. 
However, due to space limitations, we put the details on
$\setenumeration(K,R,\varphi)$ and $\basiccheck(R,M',\varphi)$ into
Sections~\ref{sec:impl-prop} and~\ref{sec:impl-lemma-as},
respectively, followed by Section~\ref{sec:decision-proofs} with the
corresponding proofs. 

\subsubsection{(A.1) Maximal Rewriting.}
\label{sec:maximal-rewriting}
To rule out all $w$ with $\varphi(w)\not\subseteq R$, we rely on the
notion of a \emph{maximal $\varphi$-rewriting $M_\varphi(R)$ of $R$},
taken from~\cite{rewriting}.
$M_\varphi(R)$ consists of the words $w$ with $\varphi(w)\subseteq R$,
i.e., we set $M_\varphi(R) = \{ w \in \Delta^+ \mid \varphi(w)
\subseteq R \}$.
Furthermore, all subsets $M\subseteq M_\varphi(R)$ are called
\emph{rewritings} of $R$, and if $\varphi(M) = R$ holds, $M$ is called
\emph{exact} rewriting.
\begin{proposition}[Regularity of maximal rewritings~\cite{rewriting}]
	\label{def:rewriting}
	Let $\varphi : \Delta \rightarrow 2^{\Sigma^\star}$ be a regular
    language substitution.
    Then the \emph{maximal $\varphi$-rewriting} of a regular language
    $R \subseteq \Sigma^\star$ is a regular language over $\Delta$.
\end{proposition}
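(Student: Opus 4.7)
The plan is to give a subset-construction-style proof: start from a DFA recognizing $R$, promote its states to subsets to track all possible runs on any word in $\varphi(w)$, and use the transition function of this power-set automaton to build a DFA over $\Delta$ whose language is exactly $M_\varphi(R)$.

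Concretely, I would fix a DFA $A_R=(Q,\Sigma,\hat\delta,q_0,F)$ with $L(A_R)=R$ and construct an automaton $B=(2^Q,\Delta,\Theta,\{q_0\},\cF)$ where $\cF=\{S\subseteq Q\mid S\subseteq F\}$ and, for $S\subseteq Q$ and $\delta\in\Delta$,
\[
\Theta(S,\delta)=\{q\in Q\mid \exists p\in S,\ \exists u\in\varphi(\delta):\ \hat\delta(p,u)=q\}.
\]
Since $\varphi(\delta)$ is regular, $\Theta(S,\delta)$ can be computed by a product construction between $A_R$ and a DFA for $\varphi(\delta)$, so $B$ is a well-defined finite automaton. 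The core lemma is a routine induction on $|w|$ showing that after reading $w=\delta_1\cdots\delta_n\in\Delta^+$, $B$ is in the state $\{q\in Q\mid\exists u\in\varphi(w):\hat\delta(q_0,u)=q\}$. This uses that $\varphi$ is a morphism, i.e., $\varphi(w)=\varphi(\delta_1)\cdots\varphi(\delta_n)$, so the non-deterministic ``frontier'' of reachable $A_R$-states after consuming some $u\in\varphi(w)$ is obtained by iterating $\Theta$ on $\{q_0\}$.

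Given this invariant, $B$ accepts $w$ iff every $u\in\varphi(w)$ drives $A_R$ into a final state, i.e., iff $\varphi(w)\subseteq R$. Intersecting $L(B)$ with $\Delta^+$ (trivially regular) removes the empty word and yields exactly $M_\varphi(R)$, establishing regularity.

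The only non-trivial step is verifying the invariant under the morphism extension of $\varphi$; the rest is bookkeeping. The potential subtlety is ensuring that the quantification ``for every decomposition $u=u_1\cdots u_n$ with $u_i\in\varphi(\delta_i)$'' is captured by the subset-tracking, which it is because the ``reachable set of states'' view collapses all such decompositions into one frontier per prefix. I do not anticipate technical obstacles beyond this, since regularity of each $\varphi(\delta)$ gives computability of $\Theta$, and the construction is a direct generalization of the standard subset construction applied to a non-deterministic automaton in which each $\delta\in\Delta$ labels a block of $\Sigma^\star$-transitions.
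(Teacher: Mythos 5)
Your construction is correct: the invariant that after reading $w$ the automaton $B$ sits in the frontier $\{q\mid\exists u\in\varphi(w):\hat\delta(q_0,u)=q\}$ goes through by induction using $\varphi(\delta_1\cdots\delta_n)=\varphi(\delta_1)\cdots\varphi(\delta_n)$, and since $A_R$ is a total DFA, the acceptance condition $S\subseteq F$ is exactly $\varphi(w)\subseteq R$ (including, consistently with the paper's definition, the degenerate case $\varphi(w)=\emptyset$). The paper itself gives no proof of this proposition but only cites Calvanese et al., where essentially this same argument appears (there phrased via the complement automaton and an emptiness-of-intersection test, of which your power-set automaton is the determinized form), so your write-up is a faithful self-contained version of the cited argument.
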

As all words $w$ with $\varphi(w)=R$ must be element of
$M_\varphi(R)$, we restrict our search to $M=M_\varphi(R)\cap K$.

\subsubsection{(A.2) Minimal Word Length.}
We restrict the search space further by checking the \emph{minimal
  word length,} i.e., we compare the length of the respectively
shortest word in $R$ and $\varphi(w)$. If $R$ and $\varphi(w)$ have
different minimal word lengths, $R\neq\varphi(w)$ holds, and hence, we
rule out $w$.
We define the minimal word length $\minlength(L)$ of a language~$L$
with $\minlength(L)=\min\{|w| \mid w\in L\}$, leading to the
definition of language strata.
\begin{definition}[Language Stratum]
  \label{def:stratum}
  Let $L$ be a language over $\Delta$, and $\varphi: \Delta\rightarrow
  2^{\Sigma^\star}$ be a regular language substitution, then the
  \emph{$B$-stratum of $L$,} denoted as $L[B, \varphi]$, is the set of
  words in $L$ which generate via $\varphi$ languages of minimal word
  length $B$, i.e., $L[B, \varphi]=\{w\in L \mid
  \minlength(\varphi(w))=B\}$.
\end{definition}
Starting with $M=M_\varphi(R)\cap K$, we restrict our search further
to $M[\minlength(R), \varphi]$.

\subsubsection{(B) 1-Word Summaries.}
\label{sec:1-word-summaries}
It remains to subdivide $M[\minlength(R), \varphi]$ into finitely many
subsets $M'$, which are then checked efficiently without enumerating
their words $w\in M'$.
Here, we only discuss the property of these subsets $M'$ which enables
such an efficient check, and later we will describe an enumeration of
those subsets $M'$.
When we check a subset $M'$, we do not search for a single word $w\in
M'$ with $\varphi(w)=R$ but for a finite set $F\subseteq M'$ with
$\varphi(F)=R$.
The soundness of this approach will be guaranteed by the existence of
\emph{1-word summaries:}
A language $M'\subseteq \Delta^\star$ has 1-word summaries, if for all
finite subsets $F\subseteq M'$ there exists a summary word $w\in M'$
with $\varphi(F)\subseteq\varphi(w)$.
The property we exploit is given by the following proposition.
\begin{proposition}[Membership Condition for Summarizable Languages]
  \label{prop:membership-in-summarizable}
  Let $M'\subseteq \Delta^\star$ be a regular language with 1-word
  summaries and $\varphi(M')\subseteq R$.
  Then there exists a $w\in M'$ with $\varphi(w)=R$ iff there exists a
  finite subset $F\subseteq M'$ with $\varphi(F)=\varphi(M')=R$.
\end{proposition}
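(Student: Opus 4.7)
The plan is to prove the biconditional by checking each direction directly from the definitions, since Proposition~\ref{prop:membership-in-summarizable} essentially packages the 1-word summary property into a useful form.

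For the forward direction ($\Rightarrow$), assume $w \in M'$ satisfies $\varphi(w) = R$. I would simply set $F = \{w\}$, which is finite and a subset of $M'$. Then $\varphi(F) = \varphi(w) = R$. The remaining condition $\varphi(M') = R$ follows from the sandwich $R = \varphi(w) \subseteq \varphi(M') \subseteq R$, where the right inclusion is the proposition's hypothesis. So this direction is essentially bookkeeping.

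For the backward direction ($\Leftarrow$), assume a finite $F \subseteq M'$ with $\varphi(F) = \varphi(M') = R$ exists. Here I would invoke the 1-word summary property of $M'$: since $F \subseteq M'$ is finite, there exists a summary word $w \in M'$ with $\varphi(F) \subseteq \varphi(w)$. Combining with the hypotheses yields $R = \varphi(F) \subseteq \varphi(w) \subseteq \varphi(M') \subseteq R$, where the third inclusion uses $w \in M'$ and the fourth is the standing assumption. Hence $\varphi(w) = R$, as required.

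No step is really difficult: the forward direction is by witness, and the backward direction is a three-line chain of inclusions whose only nontrivial step is the appeal to the 1-word summary assumption. If anything counts as an obstacle, it is making sure the reader sees that the hypothesis $\varphi(M') \subseteq R$ (plus $w \in M'$) is needed to turn the ``$\subseteq$'' produced by the summary property into an equality; everything else is immediate.
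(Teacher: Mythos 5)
Your proof is correct and follows essentially the same route as the paper's: the forward direction takes $F=\{w\}$ as witness, and the backward direction applies the 1-word summary property to $F$ and closes the chain of inclusions $R=\varphi(F)\subseteq\varphi(w)\subseteq\varphi(M')\subseteq R$ using the hypothesis $\varphi(M')\subseteq R$. No differences worth noting.
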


\subsubsection{Putting it together.}
\label{sec:putting-it-together}
\begin{inparaenum}[]
\item First, combining \textbf{A.2} and \textbf{B}, we obtain
  Lemma~\ref{lem:representation}, to subdivide the search space $M[B,
  \varphi]$ into a set $\representation(M,B,\varphi)$ of languages $M'$ with
  1-word summaries.
  %
\item Second, in Theorem~\ref{thm:membership-condition}, building upon
  Lemma~\ref{lem:representation} and \textbf{A.1}, we fix
  $B=\minlength(R)$ and iterate through these languages $M'$.
  We check each of them at once with our membership condition from
  Proposition~\ref{prop:membership-in-summarizable}.
  In terms of Algorithm~\ref{alg:membership},
  Lemma~\ref{lem:representation} provides the foundation for
  $\setenumeration(K,R,\varphi)$ and
  Proposition~\ref{prop:membership-in-summarizable} underlies
  $\basiccheck(R,M',\varphi)$.
\end{inparaenum}

\begin{lemma}[Summarizable Language Representation,
  adapting~\cite{membership}]
  \label{lem:representation}
  Let $M\subseteq\Delta^\star$ be a regular language.
  Then, for each bound $B\ge 0$, there exists a family
  $\representation(M,B,\varphi)$ of union-free regular languages
  $M'\in\representation(M,B,\varphi)$ with 1-word summaries, such that
  $M[B, \varphi]\subseteq \bigcup_{M'\in\representation(M,B,\varphi)}
  M'\subseteq M$ holds.
\end{lemma}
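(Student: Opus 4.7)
The plan is to adapt the automata-based decomposition from~\cite{membership}, using the bound $B$ to restrict the iteration of ``productive'' cycles while letting ``silent'' cycles iterate freely; the latter are exactly the ones that guarantee the 1-word summary property via the trick behind Lemma~\ref{lemma:minimal-words}.

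First, I would take an NFA $\cA_M$ accepting $M$ and enumerate \emph{skeleta}: pairs consisting of a simple path from an initial to a final state together with a designated family of simple cycles that may be inserted at its visited states. Every word of $M$ can be parsed according to some skeleton, and each skeleton gives rise (after normalizing cycle insertions into nested positions) to a union-free regular expression of the shape $u_0\, c_1^\star\, u_1\, c_2^\star\, u_2\cdots c_m^\star\, u_m$, where the $u_i\in\Delta^\star$ are determined by the underlying simple path and the $c_j\in\Delta^+$ are cycle labels.

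Second, I would use $B$ to refine each skeleton. Call a cycle $c_j$ \emph{silent} if $\varepsilon\in\varphi(\delta)$ for every letter $\delta$ occurring in $c_j$, and \emph{productive} otherwise. For every $w\in M[B,\varphi]$ one has $\minlength(\varphi(w))=B$, so the total number of productive letters occurring in $w$ is bounded by $B$; hence each productive cycle is iterated at most $B$ times in $w$, and the same holds for productive letters appearing inside the $u_i$. I would therefore branch finitely many times on (i) the exact multiplicities, from $0$ to $B$, of each productive cycle, and (ii) analogously for productive segments of the fixed parts. Silent cycles keep their unbounded $\star$. The family $\representation(M,B,\varphi)$ is defined as the collection of all such refined skeleta. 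Each element is union-free and is a subset of $M$; conversely, every $w\in M[B,\varphi]$ is parsed into some refined skeleton with productive multiplicities chosen exactly according to its parse, which proves $M[B,\varphi]\subseteq\bigcup_{M'\in\representation(M,B,\varphi)} M'\subseteq M$.

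Third, I would verify the 1-word summary property for a single $M'=u_0'\, c_{j_1}^\star\, u_1'\cdots c_{j_s}^\star\, u_s'$ from the family. Given a finite $F=\{w_1,\dots,w_n\}\subseteq M'$, parse each $w_i$ according to the skeleton, obtaining multiplicities $k_{i,\ell}\in\bN$ for each remaining (silent) cycle $c_{j_\ell}$. Let $K_\ell=\max_i k_{i,\ell}$ (or $\sum_i k_{i,\ell}$) and let $\hat w = u_0'\, c_{j_1}^{K_1}\, u_1'\cdots c_{j_s}^{K_s}\, u_s' \in M'$. Because every letter $\delta$ inside a silent cycle admits $\varepsilon\in\varphi(\delta)$, the $\varphi$-image of $\hat w$ contains the image of every shorter instantiation: any occurrence of $c_{j_\ell}$ that is not needed in $w_i$ can be mapped via $\varphi$ to $\varepsilon^{|c_{j_\ell}|}=\varepsilon$, reproducing $\varphi(w_i)$ inside $\varphi(\hat w)$. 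Hence $\varphi(F)\subseteq\varphi(\hat w)$, as required.

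The main obstacle I expect is the normalization step that turns an arbitrary set of cycle insertions along a path into a clean union-free expression while preserving both the generated language and the parsing structure used in the last step; nested cycles sharing states can force a careful re-bracketing. Handling this rigorously — or, equivalently, working throughout at the level of $\cA_M$ rather than regular expressions and proving the union-freeness of the resulting subautomaton languages directly — will be the bulk of the formal work, but it is essentially the construction already established in~\cite{membership}, now parameterized by $B$.
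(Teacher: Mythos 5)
Your overall strategy matches the paper's: decompose $M$ into union-free pieces, classify symbols by whether $\varepsilon\in\varphi(\delta)$, use the bound $B$ to reduce the productive parts to finitely many instantiations, and certify 1-word summaries for the surviving expressions (all of whose starred parts are silent) by concatenating the star instantiations of the words in $F$. That last step is exactly the paper's Proposition~\ref{prop:summarizable}; note that of your two variants it is the concatenation (sum), not the max, that survives once the starred parts are themselves structured languages rather than single words.

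The gap is in your $B$-refinement, which is formulated only for flat skeleta $u_0\,c_1^\star u_1\cdots c_m^\star u_m$ with word cycles $c_j\in\Delta^+$. A union-free decomposition of a general regular $M$ unavoidably contains nested stars, and there your rule ``each productive cycle is iterated at most $B$ times, so branch on its multiplicity $0,\dots,B$'' fails. Take $S^\star=(ab^\star)^\star$ with $\varepsilon\in\varphi(a)$, $\varepsilon\notin\varphi(b)$, and $\minlength(\varphi(b))=1$: the outer cycle contains a productive letter, yet the words $a^nb^B$ all lie in the $B$-stratum while iterating that outer cycle $n$ times for arbitrary $n$, because the productive letter sits inside an inner star that need not be instantiated. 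What $B$ bounds is not the multiplicity of any single cycle but the total number of productive-letter instantiations across a tree-shaped parse, and no finite branching on per-cycle multiplicities captures this. This is precisely what the paper's machinery supplies: the identity $S_h^\star=E^\star\cup\bigcup_{p\in\criticalpos(S_h)}E^\star\bar E_p S_h^\star$ of Proposition~\ref{prop:unfold-rewriting} peels off one forced productive instantiation at a time (via \unfoldstep at a critical position $p$), strictly increases $\minlength(\varphi(\cdot))$ on each such branch so that branches exceeding $B$ can be pruned, and eliminates a star subexpression on the remaining branch, which yields termination. So the missing idea is not the ``re-bracketing'' you flag at the end, but a recursive unfolding discipline for productive letters nested under silent stars; with that replaced by \unfold, your argument coincides with the paper's proof.
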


\begin{theorem}[Membership Condition, following~\cite{membership}]
  \label{thm:membership-condition}
  Let $\rationalset = (K, \varphi)$ be a
  \RegularlyGeneratedLanguageSetAbbrev and $\varphi : \Delta
  \rightarrow 2^{\Sigma^\star}$ be a regular language substitution.
  Then, for a regular language $R\subseteq\Sigma^\star$, we have
  $R\in\rationalset$,
  iff there exists an $M'\in \representation(M_\varphi(R)\cap
  K,\minlength(R),\varphi)$ with a finite subset $F\subseteq M'$ with
  $\varphi(F)=\varphi(M')=R$.
\end{theorem}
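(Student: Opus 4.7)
The plan is to prove the biconditional by assembling the two helper results: the Summarizable Language Representation lemma and the Membership Condition for Summarizable Languages proposition. Setting $M = M_\varphi(R) \cap K$ and $B = \minlength(R)$ throughout, the crux is showing that the decomposition $\representation(M, B, \varphi)$ captures every witness $w$ of membership on one side, and that the 1-word summary property lifts a finite witness set back to a single word on the other side.

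For the forward direction, I would assume $R \in \rationalset$ and fix a witness $w \in K$ with $\varphi(w) = R$. The word $w$ lies in $M_\varphi(R)$ because $\varphi(w) = R \subseteq R$, hence in $M$. Moreover, $\minlength(\varphi(w)) = \minlength(R) = B$, so $w \in M[B,\varphi]$. By Lemma~\ref{lem:representation}, $M[B,\varphi]$ is contained in the union of the union-free languages of $\representation(M,B,\varphi)$, so some $M'$ in this family contains $w$. The chain $M' \subseteq M \subseteq M_\varphi(R)$ forces $\varphi(M') \subseteq R$, and $w \in M'$ with $\varphi(w) = R$ then forces $\varphi(M') = R$. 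Taking $F = \{w\}$ yields the required finite subset with $\varphi(F) = \varphi(M') = R$.

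For the backward direction, I would assume that such an $M'$ and finite $F \subseteq M'$ exist. From Lemma~\ref{lem:representation}, $M' \subseteq M \subseteq K$, so every word of $M'$ lies in $K$. Also $M' \subseteq M_\varphi(R)$ gives the precondition $\varphi(M') \subseteq R$ needed by Proposition~\ref{prop:membership-in-summarizable}; combined with $\varphi(M') \subseteq R$ and the summarizability of $M'$, the existence of the finite $F \subseteq M'$ satisfying $\varphi(F) = \varphi(M') = R$ triggers the proposition's ``if'' direction, yielding some $w \in M'$ with $\varphi(w) = R$. Since $M' \subseteq K$, this $w$ is in $K$ and witnesses $R \in \rationalset$.

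The routine calculations are the membership bookkeeping $w \in M_\varphi(R)$, $w \in M[B,\varphi]$, and the inclusions $M' \subseteq M \subseteq K \cap M_\varphi(R)$. The only real care is matching the hypotheses of Proposition~\ref{prop:membership-in-summarizable} exactly: one must verify $\varphi(M') \subseteq R$ before invoking it, which is what restricting $K$ to $M_\varphi(R) \cap K$ in the first place is designed to guarantee. The substantive work has already been discharged by Lemma~\ref{lem:representation} (which provides a decomposition respecting the $B$-stratum) and Proposition~\ref{prop:membership-in-summarizable} (which reduces the single-word question to a finite-set question); the theorem then glues these together.
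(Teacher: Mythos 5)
Your proof is correct and follows essentially the same route as the paper: both directions hinge on Lemma~\ref{lem:representation} to place the witness $w$ in some $M'$ of the decomposition and on Proposition~\ref{prop:membership-in-summarizable} to pass between a single word and a finite subset $F$. The only detail the paper makes explicit that you elide is that $M = M_\varphi(R)\cap K$ is regular (maximal rewritings are regular by Proposition~\ref{def:rewriting} and regular languages are closed under intersection), which is what licenses applying Lemma~\ref{lem:representation} in the first place.
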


We obtain the space complexity of \membership, depending on the
\emph{size of the expressions,} representing the involved languages.
More specifically, we use the expression sizes $||R||$ and $||K||$ and
the summed size
$||\varphi||=\Sigma_{\delta\in\Delta}||\varphi(\delta)||$ of the
expressions in the co-domain of $\varphi$.

\begin{theorem}[$\membership(R,K,\varphi)$ runs in \tEXPSPACE]
  \label{thm:membership-complexity}
  More precisely, it runs in $\DSPACE\left(||K||^r
    2^{2^{(||R||+||\varphi||)^s}}\right)$ for some constants $r$ and
  $s$.
\end{theorem}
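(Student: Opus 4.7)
The plan is to account for the space used by the two subroutines $\setenumeration$ and $\basiccheck$ separately, and then show that the outer loop in Algorithm~\ref{alg:membership} can iterate through the members of $\setenumeration(R,K,\varphi)$ reusing space. First I would note, following Proposition~\ref{def:rewriting} and the construction of~\cite{rewriting}, that the maximal $\varphi$-rewriting $M_\varphi(R)$ is representable by an NFA of size at most singly exponential in $||R|| + ||\varphi||$, since the construction intersects $R$ with a shuffle/substitution automaton whose states encode subsets of positions of the automata for $R$ and for each $\varphi(\delta)$. Intersecting this with the NFA for $K$ adds a multiplicative factor polynomial in $||K||$, so the search space $M = M_\varphi(R) \cap K$ is given by an NFA of size at most $||K||^{O(1)} \cdot 2^{(||R|| + ||\varphi||)^{O(1)}}$.

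Next I would handle the stratification and the union-free decomposition of Lemma~\ref{lem:representation}. Restricting $M$ to the $\minlength(R)$-stratum is effectively a regular intersection (with the set of $w$ whose $\varphi$-image has minimum-word length exactly $\minlength(R)$, which itself requires tracking a counter up to $\minlength(R) \le 2^{||R||}$); this keeps the automaton within the same bound up to a further polynomial blow-up. The union-free decomposition then enumerates each $M'$ by choosing, at each state of the NFA for $M[\minlength(R),\varphi]$, a nondeterministic successor; each $M'$ can therefore be represented in space polynomial in the NFA for $M[\minlength(R),\varphi]$, and the family $\representation(M,\minlength(R),\varphi)$ can be enumerated one-by-one, reusing the same workspace. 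This already accounts for the $||K||^r \cdot 2^{(||R||+||\varphi||)^{O(1)}}$ portion of the bound.

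Then I would analyse $\basiccheck(R,M',\varphi)$, which, by Proposition~\ref{prop:membership-in-summarizable}, reduces to deciding whether $\varphi(M') = R$ (with $\varphi(M') \subseteq R$ already ensured by construction). Since $M'$ is union-free, $\varphi(M')$ is expressible by a regular expression of size polynomial in the expression for $M'$ and $||\varphi||$, i.e., of size singly exponential in $||R||+||\varphi||$ and polynomial in $||K||$. The resulting equivalence test between two regular expressions can be decided by a standard NFA equivalence procedure in \PSPACE\ in the size of the expressions; by Savitch's theorem this gives a deterministic space bound that is the square of the expression size, hence doubly exponential in $||R|| + ||\varphi||$ and polynomial in $||K||$, which matches the claimed $\DSPACE(||K||^r \cdot 2^{2^{(||R||+||\varphi||)^s}})$ form.

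The main obstacle will be the second step: justifying that the union-free decomposition $\representation(M,B,\varphi)$ can be \emph{enumerated} within the stated space budget rather than stored explicitly (its cardinality can be large). I would handle this by representing each $M'$ by the sequence of nondeterministic choices made along the automaton for $M[B,\varphi]$, so that Algorithm~\ref{alg:membership} cycles through these choice-sequences lexicographically, recomputing $M'$ on demand. Given that each $M'$ is bounded by the earlier estimates and each $\basiccheck$ call reuses its own workspace, the total deterministic space is dominated by the $\basiccheck$ step and yields the claimed bound.
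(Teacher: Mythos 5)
Your accounting has two genuine problems. First, the exponents do not work out the way you claim: the maximal rewriting $M_\varphi(R)$ is \emph{not} singly exponential in $||R||+||\varphi||$. The construction of~\cite{rewriting} must complement an automaton over $\Delta$ whose state set already ranges over subsets of the states of an automaton for $R$, so the expression for $M=M_\varphi(R)\cap K$ has size $||K||\,2^{2^{(||R||+||\varphi||)^l}}$ --- doubly exponential --- and this is where the $2^{2^{(\cdot)}}$ in the theorem actually comes from (Lemma~\ref{lem:complexity-enumerate}). Your attempt to recover the double exponential at the end is an arithmetic slip: running a \PSPACE procedure (even after Savitch) on an input of size $2^{(||R||+||\varphi||)^{O(1)}}$ uses space $2^{(||R||+||\varphi||)^{O(1)}}$, which is still singly exponential; squaring $2^{p}$ gives $2^{2p}$, not $2^{2^{p}}$. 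So under your own estimates the whole algorithm would land in \EXPSPACE, and the stated bound would not be derived --- the two errors happen to cancel, but neither step is correct.

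Second, and more seriously, your version of \basiccheck is incomplete. Proposition~\ref{prop:membership-in-summarizable} requires \emph{both} that $\varphi(M')=R$ \emph{and} that some finite $F\subseteq M'$ satisfies $\varphi(F)=\varphi(M')$; you test only the former. The latter condition is nontrivial --- an infinite union $\varphi(M')$ can equal $R$ without any single $w\in M'$ having $\varphi(w)=R$ --- and the paper decides it via limitedness of a distance automaton built from the union-free expression for $M'$, using the \PSPACE procedure of~\cite{leung04:_limit_probl_distan_autom}. Relatedly, enumerating the $M'$ as ``choice sequences in the NFA'' does not yield languages with 1-word summaries, which is exactly the property that makes Proposition~\ref{prop:membership-in-summarizable} applicable; the paper instead obtains them by the recursive \unfold rewriting $S_h^\star=E^\star\cup\bigcup_{p} E^\star\bar E_p S_h^\star$, iterated until the precondition of Proposition~\ref{prop:summarizable} holds or the minimum word length exceeds $\minlength(R)$, and the termination and space analysis of that recursion (Proposition~\ref{prop:unfold-complexity}, giving $\DSPACE(B^2||L||^4+||\varphi||)$ with $B=\minlength(R)\le||R||$) is the technical core of the proof that your proposal skips.
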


\section{Conclusion}
\label{sec:conclusion}

Motivated by applications in testcase specifications with FQL, we have
studied general and finite
\RegularlyGeneratedLanguageSetsAbbrev. While we showed that general
\RegularlyGeneratedLanguageSetsAbbrev are not closed under most common
operators, \emph{finite} \RegularlyGeneratedLanguageSetsAbbrev are
closed under all operators except Kleene stars and complementation
(Theorem~\ref{thm:closure-properties}).
This shows that our restriction to Kleene star free and hence finite
\RegularlyGeneratedLanguageSetsAbbrev in \FQL results in a natural
framework with good closure properties.
Likewise, the proven \PSPACE-completeness results for Kleene star free
\RegularlyGeneratedLanguageSetsAbbrev provide a starting point to
develop practical reasoning procedures for Kleene star free
\RegularlyGeneratedLanguageSetsAbbrev and \FQL.
Experience with LTL model checking shows that \PSPACE-completeness
often leads to algorithms which are feasible in practice.
In contrast, for general and possibly infinite
\RegularlyGeneratedLanguageSetsAbbrev, we have described a \tEXPSPACE
membership checking algorithm -- leaving the question for matching
lower bounds open.
Nevertheless, reasoning on general
\RegularlyGeneratedLanguageSetsAbbrev seems to be rather infeasible.

Last but not least, \RegularlyGeneratedLanguageSetsAbbrev give rise to new
and interesting research questions, for instance 
the decidability of inclusion and equivalence for general
\RegularlyGeneratedLanguageSetsAbbrev, and the closure properties left open in this paper.
%
%
In our future work, we want to generalize
\RegularlyGeneratedLanguageSetsAbbrev to other base formalisms.
For example, we want $\varphi$ to substitute symbols by context-free
expressions, thus enabling \FQL test patterns to recognize
e.g.~matching of parentheses or emptiness of a stack.  
%

\section*{Acknowledgments}
This work received funding in part by the Austrian National Research
Network S11403-N23 (RiSE) of the Austrian Science~Fund~(FWF), by the
Vienna Science and Technology Fund (WWTF) grant PROSEED, and by the
European Research Council under the European Community's Seventh
Framework Programme (FP7/2007--2013) / ERC grant agreement DIADEM
no.~246858.

\bibliographystyle{splncs}

\appendix
\section{Closure Properties for Cartesian Binary Operators}
\label{sec:proofs:cart-binary-oper}

We deal with Cartesian binary operators generically, by reducing the
point-wise operators to the Cartesian one. 

\begin{lemma}[Reducing Point-Wise to Cartesian Operators]
  \label{lem:reducing-point-wise-to-cartesian}
  Let $\circ$ be an arbitrary binary operator over sets, let $\odot \in \{ \dotcup, \dotcap, \dotdiv \}$, and let $\otimes \in \{ \timescup, \timescap, {\timesminus} \}$.
  \begin{inparaenum}[\normalfont\bfseries(1)]
  \item If $\rationalset_1\odot R$ is not closed under \RegularlyGeneratedLanguageSets,
    then the corresponding $\rationalset_1\otimes \rationalset_2$ is not closed.
  \item If $\rationalset_1\odot R$ is not closed under finite 
  	\RegularlyGeneratedLanguageSets with constant language 
  	substitution, even in presence of a
    symbol $\delta_R$ with $\varphi(\delta_R)=R$, then the corresponding
    $\rationalset_1\otimes\rationalset_2$ is also not closed.
  \end{inparaenum}
\end{lemma}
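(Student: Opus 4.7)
The plan is to prove both parts by reducing to the point-wise case via the singleton construction $\rationalset_2 = (\{\delta_R\}, \varphi)$ with $\varphi(\delta_R) = R$. The central observation is that if $\rationalset_2$ contains exactly the one language $R$, then by the definitions of the Cartesian operators, $\rationalset_1 \otimes \rationalset_2 = \{L_1 \circ L_2 \mid L_1 \in \rationalset_1, L_2 \in \rationalset_2\} = \{L_1 \circ R \mid L_1 \in \rationalset_1\} = \rationalset_1 \odot R$. So the Cartesian operation degenerates exactly into the point-wise operation in this special case.

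For part (1), I would fix a counterexample $(\rationalset_1, R)$ witnessing that $\rationalset_1 \odot R$ is not a \RegularlyGeneratedLanguageSetAbbrev. Using the unified-alphabet construction from Section~\ref{sec:unifying-alphabets}, I extend $\varphi$ so that a fresh symbol $\delta_R$ maps to $R$, and set $\rationalset_2 = (\{\delta_R\}, \varphi)$. The finite singleton $\rationalset_2$ is a \RegularlyGeneratedLanguageSetAbbrev by Fact~\ref{fact:finite-regular-sets-are-rational}. Then $\rationalset_1 \otimes \rationalset_2$ coincides as a set of languages with $\rationalset_1 \odot R$, so the failure of the latter to be a \RegularlyGeneratedLanguageSetAbbrev transfers directly.

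For part (2), the same construction works but the required hypothesis is strictly stronger. In the fixed-substitution setting, when forming $\rationalset_1 \otimes \rationalset_2$ one already has $\delta_R$ available in $\Delta$ (since $\rationalset_2$ contributes it). Hence the assumption that $\rationalset_1 \odot R$ cannot be expressed under the fixed substitution $\varphi$ \emph{even in the presence of a symbol $\delta_R$ with $\varphi(\delta_R)=R$} is exactly what is needed: any expression of $\rationalset_1 \otimes \rationalset_2$ over $\varphi$ (now including $\delta_R$) would yield an expression of $\rationalset_1 \odot R$ over the same substitution, contradicting the hypothesis.

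The main obstacle is not any deep technical step but rather ensuring the hypothesis in (2) is carefully formulated so the reduction is airtight. In particular, when Corollary~\ref{cor:catesian-binary} invokes this lemma, one must verify that the concrete counterexamples in the proofs of Propositions~\ref{prop:closure:pw-union}, \ref{prop:closure:pw-intersection}, and \ref{prop:closure:pw-difference} remain valid after enriching $\Delta$ with a symbol $\delta_R$ for the relevant $R$; i.e.\ that adding $\delta_R$ neither lets us suddenly express the offending language nor changes the cardinality/pumping arguments. This verification is straightforward because in each of those examples the obstruction stems from the infinite or structurally distinct nature of $\rationalset_1 \odot R$, which a single additional symbol cannot repair.
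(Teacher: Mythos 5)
Your proof is correct and follows essentially the same route as the paper: the singleton construction $\rationalset_2=(\{\delta_R\},\varphi)$ with $\varphi(\delta_R)=R$, the identity $\rationalset_1\otimes\rationalset_2=\rationalset_1\odot R$, and for part~(2) the observation that the hypothesis "even in presence of $\delta_R$" is precisely what makes the reduction go through under a fixed substitution. Your closing remark about checking that the concrete counterexamples survive the addition of $\delta_R$ is exactly the point the paper defers to the proof of Corollary~\ref{cor:catesian-binary}.
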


\begin{proof}
  \begin{inparaenum}[\bfseries(1)]
  \item If $\rationalset_1\odot R$ is not closed, we fix a violating
    pair $\rationalset_1$ and $R$.
    Then we obtain
    $\rationalset_1\otimes\rationalset_2=\rationalset_1\odot R$  for
    $\rationalset_2=(\{\delta_R\},\varphi)$ and $\varphi(\delta_R)=R$.
    Since $\rationalset_1\odot R$ is not a \RegularlyGeneratedLanguageSetAbbrev,
    $\rationalset_1\otimes\rationalset_2$ is not as well, and the
    claim follows.
  \item If $\rationalset_1\odot R$ is inexpressible as a \RegularlyGeneratedLanguageSetAbbrev
    without introducing new symbols in $\varphi$, even in presence of
    $\delta_R$, then $\rationalset_1\otimes\rationalset_2$ is also
    inexpressible without changing $\varphi$.  
    \qed
  \end{inparaenum}
\end{proof}

Given Lemma~\ref{lem:reducing-point-wise-to-cartesian}, it is not
surprising that point-wise and Cartesian operators behave for all
discussed underlying binary operators identically, as shown in Theorem~\ref{thm:closure-properties}.

\begin{corollary}[Closure of Cartesian Binary Operators]
  \label{cor:catesian-binary}
  Let $\otimes \in \{ \timescup, \timescap, {\timesminus} \}$.
  \begin{inparaenum}[\normalfont\bfseries(1)]
  \item The set $\rationalset_1 \otimes \rationalset_2$ is, in
    general, not a \RegularlyGeneratedLanguageSet.
  \item The set $\rationalset_1 \otimes \rationalset_2$ is a
    \RegularlyGeneratedLanguageSet if $\rationalset_1$ and $\rationalset_2$ are finite,
  \item requiring in general a new language substitution.
  \end{inparaenum}
\end{corollary}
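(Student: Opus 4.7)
The plan is to dispatch the three parts by applying Lemma~\ref{lem:reducing-point-wise-to-cartesian} wherever possible, so that the corollary inherits the work already done for the point-wise cases in Propositions~\ref{prop:closure:pw-union}, \ref{prop:closure:pw-intersection}, and~\ref{prop:closure:pw-difference}. Concretely, for each of $\otimes\in\{\timescup,\timescap,\timesminus\}$ I line it up with the corresponding point-wise operator $\odot\in\{\dotcup,\dotcap,\dotdiv\}$, since the lemma's statement is uniform across these three operators.

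For part (1), I would invoke Lemma~\ref{lem:reducing-point-wise-to-cartesian}(1) directly. The three point-wise propositions each exhibit a witness pair $(\rationalset_1,R)$ for which $\rationalset_1\odot R$ is not a \RegularlyGeneratedLanguageSetAbbrev. Setting $\rationalset_2=(\{\delta_R\},\varphi\cup\{\delta_R\mapsto R\})$ gives $\rationalset_1\otimes\rationalset_2=\rationalset_1\odot R$, so the Cartesian case inherits non-closure.

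For part (2), I would not go through the lemma but use Fact~\ref{fact:finite-regular-sets-are-rational}. If $\rationalset_1$ and $\rationalset_2$ are finite, then $\rationalset_1\otimes\rationalset_2=\{L_1\circ L_2\mid L_i\in\rationalset_i\}$ (with $\circ\in\{\cup,\cap,-\}$) is a finite set, and each $L_1\circ L_2$ is a regular language by closure of regular languages under the Boolean operations. Hence it is a finite set of regular languages and therefore a \RegularlyGeneratedLanguageSetAbbrev.

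For part (3), I would again appeal to Lemma~\ref{lem:reducing-point-wise-to-cartesian}(2). The main obstacle is verifying its premise: that the inexpressibility counterexamples in the point-wise proofs survive the addition of a fresh symbol $\delta_R$ with $\varphi(\delta_R)=R$. For example, in the union counterexample ($\varphi(\delta)=\{a\}$, $R=\{b\}$, yielding $\{\{a,b\}\}$), even with $\delta_R$ available the alphabet $\Delta\cup\{\delta_R\}$ admits only images $\{a\}$ and $\{b\}$ and their concatenations; no single symbol maps to $\{a,b\}$, and any length-$\geq 2$ word over $\Delta\cup\{\delta_R\}$ maps to a language whose strings all have length $\geq 2$. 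An analogous routine check handles the intersection and difference counterexamples, because in each case $R$ is already among the images used in the original $\varphi$ or contributes only words that do not help produce the target language with a single symbol. Once this is checked, Lemma~\ref{lem:reducing-point-wise-to-cartesian}(2) yields the claim.
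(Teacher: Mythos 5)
Your proposal is correct and follows essentially the same route as the paper: parts (1) and (3) via Lemma~\ref{lem:reducing-point-wise-to-cartesian} applied to the counterexamples of Propositions~\ref{prop:closure:pw-union}, \ref{prop:closure:pw-intersection}, and~\ref{prop:closure:pw-difference}, and part (2) via Fact~\ref{fact:finite-regular-sets-are-rational} together with closure of regular languages under the Boolean operations. Your explicit verification in part (3) that the union counterexample is unharmed by adding $\delta_R$ is a welcome elaboration of what the paper only asserts.
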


\begin{proof}
  \begin{inparaenum}[\bfseries(1)]
  \item By Lemma~\ref{lem:reducing-point-wise-to-cartesian} we reduce
    the point-wise case to the Cartesian case, covered by Propositions
    \ref{prop:closure:pw-union}, \ref{prop:closure:pw-intersection},
    and \ref{prop:closure:pw-difference} for union, intersection, and
    set difference, respectively.
    The claim follows.
  \item Since all considered operators are closed for regular
    languages, the claim follows from
    Fact~\ref{fact:finite-regular-sets-are-rational}.
  \item Again, with Lemma~\ref{lem:reducing-point-wise-to-cartesian}
    we reduce the point-wise case to the Cartesian case.
    The lemma is applicable, as the examples in the proofs of
    Propositions \ref{prop:closure:pw-union},
    \ref{prop:closure:pw-intersection}, and
    \ref{prop:closure:pw-difference} are not jeopardized by a symbol
    $\delta_R$ with $\varphi(\delta_R)=R$.
    Hence the claim follows. \qed
  \end{inparaenum}
\end{proof}

\section{Implementation and Proofs for Section~\ref{sec:decision-problems}}
\label{sec:decision-problems-detail}

\subsection{Implementing $\basiccheck(R,M',\varphi)$}
\label{sec:impl-prop}
Since Lemma~\ref{lem:representation} produces only languages
$M'=N_1S_1^\star N_2\dots N_mS_m^\star N_{m+1}$ with 1-word summaries,
we restrict our implementation to such languages and exploit these
restrictions subsequently.
So, given such a language $M'$ over $\Delta$, and a regular language
substitution $\varphi : \Delta \rightarrow 2^{\Sigma^\star}$, we need to
check whether there exists a finite $F\subseteq M'$ with
$\varphi(F)=\varphi(M')=R$.
We implement this check with the procedure
$\basiccheck(R,M',\varphi)$, splitting the condition of
Proposition~\ref{prop:membership-in-summarizable} into two parts,
namely
\begin{inparaenum}[\bfseries(1)]
\item whether there exists a finite $F\subseteq M'$ with
  $\varphi(F)=\varphi(M')$, and
\item whether $\varphi(M')=R$ holds.
\end{inparaenum}
While the latter condition amounts to regular language equivalence,
the former requires distance automata as additional machinery.

\begin{definition}[Distance Automaton~\cite{membership}]
  \label{def:distance-automaton}
  A \emph{distance automaton} over an alphabet $\Delta$ is a tuple
  $\cA = \langle \Delta, Q, \rho, q_0, F, d\rangle$ where $\langle
  \Delta, Q, \rho, q_0, F\rangle$ is an NFA and $d : \rho
  \rightarrow \{0 , 1\}$ is a distance function, which can be extended
  to a function on words as follows.
  The distance function $d(\pi)$ of a path $\pi$ is the sum of the
  distances of all edges in $\pi$.
  The distance $\mu(w)$ of a word $w \in L(\cA)$ is the minimum of
  $d(\pi)$ for all paths $\pi$ accepting $w$.

  A distance automaton $\cA$ is called \emph{limited} if there exists
  a constant $U$ such that $\mu(w) < U$ for all words $w \in L(\cA)$.
\end{definition}

In our check for {\bfseries(1)}, we build a distance automaton which
is limited iff a finite $F$ with $\varphi(F)=\varphi(M')$ exists.
Then, we rely on the
\PSPACE-decidability~\cite{leung04:_limit_probl_distan_autom} of the
limitedness of distance automata to check whether $F$ exists or not.

\paragraph{Distance-automaton Construction.}
Here, we exploit the assumption that $M'$ is a union-free language
over $\Delta$:
Given the regular expression defining $M'$, we construct the distance
automaton $A_{M'}$ following the form of this regular expression:
\begin{itemize}
\item $\delta\in\Delta$: We construct the finite automaton $A_{\delta}$
  with $L(A_{\delta}) = \varphi(\delta)$.  We extend $A_{\delta}$ to a
  distance automaton by labeling each transition in $A_{\delta_i}$
  with $0$.
\item $e \cdot f$: Given the distance automata $A_{e} = (Q_{e},
  \Sigma, \rho_{e}, q_{0,e}, F_{e}, d_{e})$ and $A_{f} = (Q_{f},
  \Sigma, \rho_{f}, q_{0,f}, F_{f}, d_{f})$, we set $A_{e \cdot f} =
  (Q_{e} \uplus Q_{f}, \Sigma, \rho_{e} \cup \rho_{f} \cup \rho, q_{0,
    e}, F_{f}, d_{e \cdot f})$ where $\rho = \{ (q, \varepsilon, q_{0,
    f}) \mid q \in F_{e} \}$ and $d_{e \cdot f} = d_{e} \cup d_{f}
  \cup \{ (t, 0) \mid t \in \rho \}$, i.e., we connect each final
  state of $A_{e}$ to the initial state of $A_{f}$ and assign the
  distance $0$ to these connecting transitions.
\item $e^\star$: We construct the distance automaton $A_{e} =
  (Q_{e}, \Sigma, \rho_{e}, q_{0, e}, F_{e}, d_{e})$.  Then,
  $A_{e^\star} = (Q_{e}, \Sigma, \rho_{e} \cup \rho, q_{0, e},
  F_{e} \cup \{ q_{0, e} \}, d_{e^\star})$, where $\rho = \{ (q,
  \varepsilon, q_{0, e}) \mid q \in F_{e} \}$ and $d_{e^\star} =
  d_{e} \cup \{ ((q, \varepsilon, p), 1) \mid (q, \varepsilon, p) \in
  \rho \}$, i.e., we connect each final state of $A_{e}$ to the
  initial states of $A_{e}$ and assign the corresponding transitions
  the distance $1$.
\end{itemize}
If the resulting distance automaton $A_{M'}$ is limited, then there
exists a finite subset $F \subseteq M'$ such that $\varphi(F) =
\varphi(M')$.
This implies that {\bfseries(1)} holds.

\begin{algorithm}[tbp]
  \SetKw{Build}{build}
  \Input{regular languages $R\subseteq\Sigma^\star$, $M'\subseteq\Delta^\star$, and\\
    \hspace*{4mm} regular language substitution $\varphi$ with $\varphi(\delta)\subseteq\Sigma^\star$ for all $\delta\in\Delta$,\\
    \hspace*{4mm} all given as regular expressions}
   \Requires{$M'$ is of form $N_1S_1^\star N_2\dots N_mS_m^\star
     N_{m+1}$ with $N_h,S_h\in\Delta^\star$}
  \Requires{$L(M')\subseteq L(R)$}
  \Returns{$\boolTrue$ iff $\exists \mbox{ finite } F\subseteq M':
    \varphi(F)=\varphi(M')=R$}

  \Build $A_{M'}$\;\label{line:bm:limitedautomaton}
  \If{$A_{M'}$ limited}{ \label{line:bm:limitedness}
    \lIf{$\varphi(M') = R$}{ \label{line:bm:equivalence}
      \Return $\boolTrue$;
    }
  }
  \Return $\boolFalse$\;
  \caption{$\basiccheck(R,M',\varphi)$}
  \label{alg:basicmembership}
\end{algorithm}

So, given $M'$, $R$, and all languages in the domain of $\varphi$ as
regular expressions, $\basiccheck(R,M',\varphi)$ in
Algorithm~\ref{alg:basicmembership} first builds $A_{M'}$
(Line~\ref{line:bm:limitedautomaton}) and checks its limitedness
(Line~\ref{line:bm:limitedness}), amounting to condition
{\bfseries(1)}.
For condition {\bfseries(2)}, \basiccheck verifies that $\varphi(M')$
and $R$ are equivalent (Line~\ref{line:bm:equivalence}) and returns
$\boolTrue$ if both checks succeed.

\begin{lemma}[$\basiccheck(R,M',\varphi)$ runs in \PSPACE]
  \label{lem:pspace-completiy-of-prop1-check}
  $\basiccheck(R,M',\varphi)$ runs in \PSPACE, which is optimal, as it
  solves a \PSPACE-complete problem.
\end{lemma}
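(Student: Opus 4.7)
The plan is to analyze each of the three operations in Algorithm~\ref{alg:basicmembership} individually and argue that each runs in polynomial space in the size of the input regular expressions $||R||$, $||M'||$, and $||\varphi||$, then combine by the standard composition-in-\PSPACE argument. Matching hardness will come from a trivial reduction from regular expression equivalence.

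First, I would bound the size of the distance automaton $A_{M'}$ constructed on Line~\ref{line:bm:limitedautomaton}. The construction follows the structure of the union-free regular expression for $M'=N_1S_1^\star N_2\dots N_mS_m^\star N_{m+1}$, with each base symbol $\delta$ replaced by an NFA for $\varphi(\delta)$ whose transitions are labeled with distance $0$, and each Kleene star contributing $\varepsilon$-transitions labeled with distance $1$. Since it is essentially a Thompson-style construction on the regular expression for $M'$ with each leaf blown up by the size of the corresponding $\varphi(\delta)$-automaton, $|A_{M'}|$ is polynomial in $||M'||+||\varphi||$, and in particular $A_{M'}$ can be written down in polynomial space.

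Second, the limitedness check on Line~\ref{line:bm:limitedness} is \PSPACE-decidable by the Leung--Podolskiy theorem~\cite{leung04:_limit_probl_distan_autom}, applied to an automaton whose size is polynomial in the input; this yields a polynomial-space subroutine. Third, the equivalence check $\varphi(M')=R$ on Line~\ref{line:bm:equivalence} reduces to ordinary regular expression equivalence, which is in \PSPACE by Meyer--Stockmeyer~\cite{DBLP:conf/focs/MeyerS72}, once we observe that a regular expression for $\varphi(M')$ is obtained by substituting the expression $\varphi(\delta)$ for each occurrence of $\delta$ in the expression for $M'$, yielding an expression of size $O(||M'||\cdot||\varphi||)$. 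Since both subroutines run in \PSPACE and are invoked sequentially, the whole procedure runs in \PSPACE (reusing space across calls).

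For optimality, I would reduce regular expression equivalence to $\basiccheck$: given two expressions $e_1, e_2$, take $\Delta=\{\delta\}$ with $\varphi(\delta)=L(e_1)$, $M'=\delta$ (which is trivially of the required form and has $L(M')\subseteq L(e_1)$), and $R=L(e_2)$; then $\basiccheck$ answers whether $L(e_1)=L(e_2)$, because the trivial distance automaton $A_{M'}$ is limited. The main obstacle I anticipate is purely bookkeeping: verifying that the Thompson-style distance automaton construction really produces an object of polynomial size on which the Leung--Podolskiy algorithm can be invoked inside a polynomial-space budget, and checking that the precondition $L(M')\subseteq L(R)$ in the reduction can be arranged (which it can, by using $e_2' = e_1+e_2$ in place of $e_2$ so that equivalence becomes inclusion $L(e_2)\subseteq L(e_1)$, the standard trick for \PSPACE-hardness transfer).
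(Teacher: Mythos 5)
Your proposal is correct and follows essentially the same route as the paper's proof: bound the size of the distance automaton $A_{M'}$, invoke the \PSPACE limitedness test of Leung--Podolskiy, substitute $\varphi(\delta)$ into $M'$ to reduce the check $\varphi(M')=R$ to regular-expression equivalence, and obtain hardness by making $M'$ a single symbol so that \basiccheck degenerates to an equivalence test. The only (immaterial) difference is that the paper reduces from universality ($X \equiv \Sigma^\star$, which makes the precondition $\varphi(M')\subseteq R$ trivial), whereas you reduce from general equivalence and patch the precondition with the $e_1+e_2$ trick; both are fine.
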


\subsection{Implementing $\setenumeration(K,R,\varphi)$}
\label{sec:impl-lemma-as}

Our enumeration algorithm must produce the languages
$\representation(M,B,\varphi)$, guaranteeing that all
$M'\in\representation(M,B,\varphi)$ have 1-word summaries, and that
$M[B, \varphi]\subseteq \bigcup_{M'\in\representation(M,B,\varphi)}
M'\subseteq M$ holds (as specified by Lemma~\ref{lem:representation}).
To this end, we rely on a sufficient condition for the existence of
1-word summaries.
First we show this condition with Proposition~\ref{prop:summarizable},
before turning to the enumeration algorithm itself.

\begin{proposition}[Sufficient Condition for 1-Word Summaries]
  \label{prop:summarizable}
  Let $L$ be a union-free language over $\Delta$, given as 
  $L=N_1S_1^\star N_2\dots N_mS_m^\star N_{m+1}$, with words
  $N_h\in\Delta^\star$ 
  and union-free languages $S_h\subseteq\Delta^\star$.
  If $\varepsilon\in\varphi(w)$ for all $w\in S_h$ and all $S_h$, then
  $L$ has 1-word summaries.
\end{proposition}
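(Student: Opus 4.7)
My plan is to prove the proposition by direct construction. Fix an arbitrary finite subset $F = \{f_1, \dots, f_k\} \subseteq L$. Since $L = N_1 S_1^\star N_2 \cdots N_m S_m^\star N_{m+1}$ is given in this factored form, every $f_j$ admits a decomposition $f_j = N_1 u_1^{(j)} N_2 \cdots N_m u_m^{(j)} N_{m+1}$ with $u_h^{(j)} \in S_h^\star$ for $h = 1, \dots, m$. The candidate summary word is obtained by concatenating, in order, all the star-factors observed across $F$:
\[
w \;=\; N_1\, v_1\, N_2\, v_2 \,\cdots\, N_m\, v_m\, N_{m+1},
\qquad
v_h \;=\; u_h^{(1)} u_h^{(2)} \cdots u_h^{(k)}.
\]
Since each $u_h^{(j)}$ lies in $S_h^\star$ and $S_h^\star$ is closed under concatenation, we have $v_h \in S_h^\star$, so $w \in L$ as required.

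The heart of the argument is then to verify $\varphi(F) \subseteq \varphi(w)$. The key auxiliary observation I would establish first is that the hypothesis $\varepsilon \in \varphi(s)$ for all $s \in S_h$ propagates to $S_h^\star$: for any $u = s_1 \cdots s_n \in S_h^\star$ with $s_i \in S_h$, one has $\varphi(u) = \varphi(s_1) \cdots \varphi(s_n)$, and picking $\varepsilon$ from each factor yields $\varepsilon \in \varphi(u)$. This includes the degenerate case $u = \varepsilon$ by the convention $\varphi(\varepsilon) = \{\varepsilon\}$. Hence $\varepsilon \in \varphi(u_h^{(i)})$ for every $i \neq j$, which lets us embed $\varphi(u_h^{(j)})$ into $\varphi(v_h)$ by padding with empty words drawn from the other factors: any $x \in \varphi(u_h^{(j)})$ equals $\varepsilon \cdots \varepsilon \cdot x \cdot \varepsilon \cdots \varepsilon \in \varphi(v_h)$.

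Given this padding lemma, fix any $f_j \in F$ and any $x \in \varphi(f_j)$. By definition of $\varphi$ extended to words, $x$ factors as $x = y_0\, x_1\, y_1 \cdots x_m\, y_m$ with $y_h \in \varphi(N_h)$ (slots $y_0 = y_{N_1}$ etc.) and $x_h \in \varphi(u_h^{(j)})$. Applying the padding argument to each $x_h$ yields $x_h \in \varphi(v_h)$, and therefore $x \in \varphi(N_1)\varphi(v_1)\cdots\varphi(N_{m+1}) = \varphi(w)$. The only delicate point, and the step I expect to double-check carefully in the write-up, is the bookkeeping that ensures the padding respects the ordering of the concatenation (the copy of $u_h^{(j)}$ in $v_h$ must be in the $j$-th slot, with empty strings exactly in the other $k-1$ slots); this is routine but must be spelled out to make the inclusion $\varphi(u_h^{(j)}) \subseteq \varphi(v_h)$ rigorous. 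Once that is done, the conclusion $\varphi(F) \subseteq \varphi(w)$ follows, which is precisely the 1-word summary property.
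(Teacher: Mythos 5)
Your proposal is correct and follows essentially the same route as the paper's proof: the same summary word (concatenating the star-factors of all $f_j$ slot by slot) and the same padding-with-$\varepsilon$ argument to show $\varphi(u_h^{(j)})\subseteq\varphi(v_h)$. The only difference is that you explicitly spell out the propagation of $\varepsilon\in\varphi(\cdot)$ from $S_h$ to $S_h^\star$, which the paper leaves implicit.
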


\begin{algorithm}[tbp]
  \Input{regular languages $R\subseteq\Sigma^\star$,
    $K\subseteq\Delta^\star$,\\
    \hspace*{4mm} regular language substitution $\varphi$ with
    $\varphi(\delta)\subseteq\Sigma^\star$ for all $\delta\in\Delta$, and\\
    \hspace*{4mm} all given as regular expressions}
    \Yields{$L\in\representation(M,\minlength(R),\varphi)$ for $M=M_\varphi(R) \cap K$}
  $M := M_\varphi(R) \cap K$\; \label{line:intersection}
  \lFor{$L\in  \unionfreerep(M)$}{$\unfold(L,\varphi,\minlength(R))$\;}\label{line:iterateunfold}
  \caption{$\setenumeration(R,K,\varphi)$}
  \label{alg:enumeration}
\end{algorithm}

\begin{algorithm}[tbp]
  \Input{regular language $L=N_1S_1^\star N_2\dots N_mS_m^\star N_{m+1}\subseteq\Delta^\star$,\\
    \hspace*{4mm} regular language substitution $\varphi$ with
    $\varphi(\delta)\subseteq\Sigma^\star$ for all $\delta\in\Delta$, and\\
    \hspace*{4mm} a bound $B$} 
  \Yields{$L'\in\representation(L,B,\varphi)$}

  \lIf{$\forall S_h\forall w\in S_h\; : \;
    \varepsilon\in\varphi(w)$}{\Yield{$L$}\;}\label{line:yield}\Else{
    fix $S_h$ arbitrarily with $\exists w\in S_h\; : \; \varepsilon\not\in\varphi(w)$\;\label{line:chooseSH}
    $E:=S_h\cap\Delta^\star_\varepsilon$; \hspace*{4em}\tcp*[h]{$\Delta_\varepsilon=\{\delta\in\Delta\mid\varepsilon\in\varphi(\delta)\}$}\label{line:epsilonlang}\\
    $L_0:=N_1S_1^\star N_2\dots \ N_hE^\star N_{h+1}\dots N_mS_m^\star
    N_{m+1}$\;
    $\unfold(L_0,\varphi,B)$\;\label{line:unfold-rec-0}
    \tcp*[h]{$L_p:=N_1S_1^\star N_2\dots \ N_hE^\star\bar
    E_pS_h^\star N_{h+1}\dots N_mS_m^\star N_{m+1}$ (see text)}\\
    \lFor{$p\in \criticalpos(S_h)$ with $\minlength(\varphi(L_p))\le B$}{$\unfold(L_p,\varphi,B)$\;}\label{line:unfold-rec-iterate}
  }
  \caption{$\unfold(L,\varphi,B)$}
  \label{alg:unfold}
\end{algorithm}

We are ready to design our enumeration algorithm, shown in
Algorithm~\ref{alg:enumeration}, and its recursive subprocedure in
Algorithm~\ref{alg:unfold}.
Both algorithms do not return a result but yield their result as an
enumeration: Upon invocation, both algorithms run through a sequence
of \Yield statements, each time appending the argument of \Yield to
the enumerated sequence.
Thus, the algorithm never stores the entire sequence but only the
stack of the invoked procedures.

Initializing the recursive enumeration,
Algorithm~\ref{alg:enumeration} obtains the maximum rewriting $M :=
M_\varphi(R) \cap K$ of $R$ (Line~\ref{line:intersection}) and
iterates over the languages $L$ in the union-free decomposition of $M$
(Line~\ref{line:iterateunfold}) to call for each $L$ the recursive
procedure $\unfold$, shown in Algorithm~\ref{alg:unfold}.
In turn, Algorithm~\ref{alg:unfold} takes a union free language
$L=N_1S_1^\star N_2\dots N_mS_m^\star N_{m+1}$ and a bound $B$ to
unfold the Kleene-star expressions of $L$ until the precondition of
Proposition~\ref{prop:summarizable} is satisfied or
$\minlength(\varphi(L))>B$.

More specifically, \unfold exploits a rewriting, based on the
following terms:
Given a union free language $S_h$, let
$E=S_h\cap\Delta_\varepsilon^\star$ with $\Delta_\varepsilon =
\{\delta\in\Delta \mid \varepsilon\in\varphi(\delta)\}$ denote all
words $w$ in $S_h$ with $\varepsilon\in\varphi(w)$ and let $\bar
E=S_h\setminus E$.
Since $\bar E$ is in general not union free, we need to split $\bar E$
further. 
To this end, we define $\unfoldstep(S_h,p)$ recursively for an integer
sequence $p=\left<p_H \mid p_T\right>$ with head element $p_H$ and
tail sequence $p_T$.
Intuitively, a sequence $p$ identifies a subexpression in $S_h$ by
recursively selecting a nested Kleene star expression;
$\unfoldstep(S_h,p)$ unfolds $S_h$ such that this selected expression
is instantiated at least once. 
Formally, for $S_h=\alpha_1\beta_1^\star\alpha_2 \dots
\alpha_n\beta_n^\star\alpha_{n+1}$ we set
$\unfoldstep(S_h,\varepsilon)=S_h$ and $\unfoldstep(S_h, p) =
\alpha_1\dots \alpha_{p_H} \beta_{p_H}^\star
\unfoldstep(\beta_{p_H},p_T) \beta_{p_H}^\star
\alpha_{p_H+1}\dots\alpha_{n+1}$.
Consider $S_h=A^\star(B^\star C^\star)^\star D^\star$ (with
all $\alpha_i=\varepsilon$ for brevity), then we obtain 
\vspace*{-.5em}\begin{center}\begin{tabular}{cccccccccccc}
    $\unfoldstep(S_h,\left<2,1\right>)$ & $=$ & $A^\star$ & $(B^\star
    C^\star)^\star$ & \multicolumn{4}{c}{$\unfoldstep(B^\star
      C^\star,\left<1\right>)$} &
    $(B^\star C^\star)^\star$ & $D^\star$ \\
    & $=$ & $A^\star$ & $(B^\star C^\star)^\star$ & $(B^\star$ &
    $\unfoldstep(B,\varepsilon)$ & $B^\star$ & $C^\star)$ & $(B^\star
    C^\star)^\star$ & $D^\star$ \\
    & $=$ & $A^\star$ & $(B^\star C^\star)^\star$ & $(B^\star$ & $(B)$
    & $B^\star$ & $C^\star)$ & $(B^\star C^\star)^\star$ & $D^\star$
\end{tabular}\end{center}\vspace*{-.5em}
instantiating $B$ at position $\left<2,1\right>$ at least once. 
Let $\criticalpos(S_h)$ be integer sequences which identify a
subexpression of $S_h$ which directly contain a symbol $\delta$ with
$\varepsilon\not\in\varphi(\delta)$ (and not only via another
Kleene-star expression).
Then, we write $\bar E= \bigcup_{p\in \criticalpos(S_h)} \bar E_p$,
with $\bar E_p=\unfoldstep(S_h,p)$.
This discussion leads to the following rewriting:

\begin{proposition}[Rewriting for 1-Word Summaries]
  \label{prop:unfold-rewriting}
  For every union free language $S_h^\star$, we have $S_h^\star=
  E^\star\; \cup \; \bigcup_{p\in \criticalpos(S_h)} E^\star\bar E_p
  S_h^\star$. All languages in the rewriting, i.e., $E^\star$ and
  $E^\star\bar E_p S_h^\star$, are union free, $E^\star$ has 1-word
  summaries, and $\minlength(S_h^\star)<\minlength(E^\star\bar E_p
  S_h^\star)$ holds for all $p\in\criticalpos(S_h)$.
\end{proposition}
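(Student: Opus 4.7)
The plan is to prove the four claims in sequence, leveraging the explicit form of the rewriting together with the sufficient condition of Proposition~\ref{prop:summarizable}.

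For the identity $S_h^\star = E^\star \cup \bigcup_{p \in \criticalpos(S_h)} E^\star \bar E_p S_h^\star$, I will establish both inclusions. The direction $\supseteq$ is immediate since $E \subseteq S_h$ and each $\bar E_p = \unfoldstep(S_h, p) \subseteq S_h$, so both $E^\star$ and $E^\star \bar E_p S_h^\star$ are sublanguages of $S_h^\star$. For $\subseteq$, I take an arbitrary $w \in S_h^\star$, factor it as $w = s_1 \cdots s_k$ with $s_i \in S_h$, and locate the \emph{first} index $i^\star$ with $s_{i^\star} \in \bar E = S_h \setminus E$. If no such index exists, then every $s_i \in E$ and $w \in E^\star$ (including $k=0$, which gives $\varepsilon$). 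Otherwise $s_1 \cdots s_{i^\star-1} \in E^\star$, while $s_{i^\star} \in \bar E_p$ for some $p \in \criticalpos(S_h)$ (using the decomposition $\bar E = \bigcup_p \bar E_p$ recorded in the discussion preceding the proposition), and the suffix $s_{i^\star+1} \cdots s_k$ lies in $S_h^\star$, so $w \in E^\star \bar E_p S_h^\star$.

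For union-freeness I will induct on the union-free regex $S_h = \alpha_1 \beta_1^\star \cdots \alpha_n \beta_n^\star \alpha_{n+1}$, where the $\beta_i$ are themselves union-free and the $\alpha_i$ are words. Since $\unfoldstep$ is defined recursively by replacing the subexpression at position $p$ with the concatenation $\beta^\star\, \unfoldstep(\beta, p_T)\, \beta^\star$, the result $\bar E_p$ is a concatenation of union-free pieces and hence union-free; concatenating further with $E^\star$ and $S_h^\star$ preserves this. For $E = S_h \cap \Delta_\varepsilon^\star$, restricting each $\alpha_i$ to lie in $\Delta_\varepsilon^\star$ (otherwise $E=\emptyset$) and each starred factor $\beta_i^\star$ to $(\beta_i \cap \Delta_\varepsilon^\star)^\star$ yields a union-free expression for $E$, by the inductive hypothesis applied to each $\beta_i$; consequently $E^\star$ is union-free. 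That $E^\star$ has $1$-word summaries now follows directly from Proposition~\ref{prop:summarizable}: its canonical form has a single top-level starred factor $E$, and every $w \in E \subseteq \Delta_\varepsilon^\star$ satisfies $\varphi(w) \ni \varepsilon$ because $\varphi$ extends multiplicatively and $\varepsilon \in \varphi(\delta)$ for each letter $\delta$ of $w$.

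Finally, for the strict minimum-length inequality, $\varepsilon \in S_h^\star$ gives $\minlength(S_h^\star) = 0$. In contrast, every word in $\bar E_p$ contains at least one symbol $\delta$ with $\varepsilon \notin \varphi(\delta)$ (this is precisely what distinguishes a critical position, which forces at least one instantiation of a directly-contained non-$\varepsilon$ symbol), so every word in $\bar E_p$ has length at least $1$ over $\Delta$ and its $\varphi$-image has minimum length at least $1$ over $\Sigma$. Under either reading of $\minlength$ this yields $\minlength(E^\star \bar E_p S_h^\star) \ge 1 > \minlength(S_h^\star)$. The main obstacle I anticipate is the bookkeeping for union-freeness of $E = S_h \cap \Delta_\varepsilon^\star$: while intuitively one just prunes from the expression of $S_h$ every branch that uses a non-$\varepsilon$ symbol, formalizing this cleanly requires a structural induction preserving the alternating concatenation/Kleene-star shape without reintroducing unions. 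All other pieces reduce to unfolding definitions or invoking Proposition~\ref{prop:summarizable}.
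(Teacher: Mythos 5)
Your proof is correct and follows essentially the same route as the paper's: the paper obtains the set identity algebraically via $S_h^\star = E^\star(\bar E E^\star)^\star = E^\star \cup E^\star \bar E S_h^\star$ together with the decomposition $\bar E = \bigcup_{p}\bar E_p$ (your first-occurrence-of-$\bar E$ factorization is the combinatorial version of exactly that step), and it handles union-freeness, the $1$-word-summary property of $E^\star$ via Proposition~\ref{prop:summarizable}, and the strict increase of $\minlength$ just as you do, only more tersely. Your explicit induction showing $E=\alpha_1(\beta_1\cap\Delta_\varepsilon^\star)^\star\cdots\alpha_{n+1}$ merely fills in the paper's one-line claim that pruning to $\Delta_\varepsilon$ preserves union-freeness.
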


If $L$ already satisfies the precondition imposed by
Proposition~\ref{prop:summarizable}, Algorithm~\ref{alg:unfold}
\Yield-s $L$ and terminates (Line~\ref{line:yield}).
Otherwise, it fixes an arbitrary $S_h$ violating this precondition and
rewrites $L$ recursively with Proposition~\ref{prop:unfold-rewriting}
(Lines~\ref{line:chooseSH}-\ref{line:unfold-rec-iterate}).
\begin{inparaenum}[\bfseries(1)]
\item \emph{Termination:} In each recursive call, \unfold either
  eliminates in $L_0$ an occurrence of a subexpression $S_h$ violating
  the precondition of Proposition~\ref{prop:summarizable}
  (Line~\ref{line:unfold-rec-0}), or increases the minimum length in
  $L_p$, eventually running into the upper bound $B$
  (Line~\ref{line:unfold-rec-iterate}).
\item \emph{Correctness:} Setting $B=\infty$, \unfold \Yield-s a
  possibly infinite sequence of union free languages which have 1-word
  summaries such that their union equals the original language $L$:
  As the generation of these languages is based on the equality of
  Proposition~\ref{prop:unfold-rewriting} each rewriting step is sound
  and complete, leading to an infinite recursion tree whose leaves
  \Yield the languages in the sequence.
  The upper bound on minimum length only cuts off languages $L_p$
  producing words of minimum length beyond $B$, i.e., $L_p\cap
  L[B,\varphi]=\emptyset$, and in consequence, it is safe to drop
  $L_p$, since we only need to construct
  $\representation(L,B,\varphi)$ with
  $\representation(L,B,\varphi)\supseteq L[B,\varphi]$.
\end{inparaenum}

\subsection{Proofs}
\label{sec:decision-proofs}

\begin{proof}[\textbf{of Theorem~\ref{thm:kleene-star-free}}]
  \emph{\PSPACE-Membership.} We exploit for the \PSPACE-membership of
  all three considered problems the same observations:
  \begin{inparaenum}[\bfseries(1)]
  \item Given Kleene star free languages $K$, we can enumerate in
    \PSPACE all words $w\in K$,
    and
  \item we can check whether $L(R)=L(\varphi(w))$ holds, in
    \PSPACE~\cite{DBLP:conf/focs/MeyerS72}.
  \end{inparaenum}
  
  Thus, to check \emph{membership} of $R$ in $(K,\varphi)$, we
  enumerate all $w\in K$ and check whether $L(R)=L(\varphi(w))$
  holds for some $w$~-- if so, $R\in\rationalset$ is true. 
  For checking the \emph{inclusion}
  $\rationalset'\subseteq\rationalset$, we enumerate all $w'\in K'$
  and search in a nested loop for a $w\in K$ with
  $L(\varphi(w))=L(\varphi(w'))$.
  If such a $w$ exists for all $w'$, we have established
  $(K',\varphi')\subseteq (K,\varphi)$.
  We obtain \PSPACE-membership for \emph{equivalence} $(K',\varphi')=
  (K,\varphi)$ by checking both, $(K',\varphi')\subseteq (K,\varphi)$
  and $(K,\varphi)\subseteq (K',\varphi')$.

  \emph{Hardness.} For hardness we reduce the \PSPACE-complete problem
  whether a given regular expression $X\subseteq\Sigma^\star$ is
  equivalent to $\Sigma^\star$~\cite{DBLP:conf/focs/MeyerS72} to all three
  considered problems:
  Given an arbitrary regular expressions $X$, we set $K=\{a\}$,
  $\varphi(a)=X$, $K'=\{b\}$, $\varphi'(b)=\Sigma^\star$, and
  $R=\Sigma^\star$.
  This gives us $X=\Sigma^\star$ iff $(K,\varphi)=(K',\varphi')$
  (equivalence) iff $(K,\varphi)\subseteq(K',\varphi')$ (inclusion)
  iff $R\in(K,\varphi)$ (membership).
  \qed
\end{proof}

\begin{proof}[\textbf{of Proposition~\ref{prop:membership-in-summarizable}}]
  $(\Rightarrow)$ With $w\in M'$ and $\varphi(w)=R$, taking
  $F=\{w\}\subseteq M'$, we obtain
  $R=\varphi(w)=\varphi(F)\subseteq\varphi(M')\subseteq
  \varphi(M)\subseteq R$, as required.

  $(\Leftarrow)$ $M'$ has 1-word summaries, hence there exists a $w\in
  M'$ with $\varphi(F)\subseteq\varphi(w)$, leading to
  $R=\varphi(F)\subseteq\varphi(w)\subseteq \varphi(M')\subseteq
  \varphi(M)\subseteq R$, as required.  \qed
\end{proof}

\begin{proof}[\textbf{of Lemma~\ref{lem:representation}}]
  We prove the Lemma with Algorithm~\ref{alg:unfold}.
  $\unfold(L,\varphi,B)$ yields $\representation(L,B,\varphi)$ for
  union free languages $L$, hence we obtain $\representation
  (M,B,\varphi)=\bigcup_{L\in\unionfreerep(M)}{\unfold(L,\varphi,B)}$.\qed
\end{proof}

\begin{proof}[\textbf{of Theorem~\ref{thm:membership-condition}}]
  Most of the work for the proof of
  Theorem~\ref{thm:membership-condition} is already achieved by the
  representation $\representation(M,\minlength(R),\varphi)$ of
  Lemma~\ref{lem:representation}:
  The languages $M'\in \representation(M,\minlength(R),\varphi)$ are
  constructed to have 1-word summaries, which make the check whether
  there exists $w \in M'$ with $\varphi(w)=R$ relatively easy~-- this
  is the case iff there exists a finite subset $F\subseteq M'$ with
  $\varphi(F)=\varphi(M')=R$.
  We show both directions of the theorem statement individually.

  $(\Rightarrow)$ Assume $R\in\rationalset$: By Definition~\ref{def:rational-set},
  there exists $w\in K$ with $R=\varphi(w)$, by
  Definition~\ref{def:rewriting}, we get $w\in M_\varphi(R)$, and
  hence $w\in M_\varphi(R)\cap K=M$.
  From $R=\varphi(w)$ and $\minlength(R)=\minlength(\varphi(w))$, we
  get $w\in M[\minlength(R), \varphi]$.
  Since the maximal rewriting $M_\varphi(R)$ of a regular language $R$
  is regular as well~\cite{rewriting}, and since regular languages are
  closed under intersection, we obtain the regularity of $M$, and
  hence, Lemma~\ref{lem:representation} applies.
  Thus, there exists an $M'\in\representation(M,\minlength(R),\varphi)$ with
  $w\in M'$, and via
  Proposition~\ref{prop:membership-in-summarizable}, we obtain 
  for $F=\{w\}\subseteq M'$, $R=\varphi(F)=\varphi(M')$, as required.

  $(\Leftarrow)$ Assume that there exists an $M'\in
  \representation(M,\minlength(R),\varphi)$ with a finite subset $F\subseteq
  M'$ with $\varphi(F)=\varphi(M')=R$.
  Then, via Proposition~\ref{prop:membership-in-summarizable}, we take
  the summary word $w\in M'$ for $F$, yielding $R\in\rationalset$, as
  required. \qed
\end{proof}

\begin{proof}[\textbf{of Lemma~\ref{lem:pspace-completiy-of-prop1-check}}]
  \emph{Membership.} 
  The construction of the automaton $A_{M'}$
  (Line~\ref{line:bm:limitedautomaton}) runs in polynomial time and
  hence produces a polynomially sized distance automaton. 
  Thus, the check for limitedness of $A_{M'}$
  (Line~\ref{line:bm:limitedness}) retains its \PSPACE
  complexity~\cite{DBLP:conf/focs/MeyerS72}.
  Given $M'$, $R$, and all $\varphi(\delta)$ for $\delta\in\Delta$ as
  regular expressions, we can build a polynomially sized regular
  expression for $\varphi(M')$ by substituting $\varphi(\delta)$ for
  each occurrence of $\delta$ in $M'$.
  Then we check the equivalence of the regular expressions for
  $\varphi(M')$ and $R$ (Line~\ref{line:bm:equivalence}), again
  keeping the original \PSPACE complexity of regular expression
  equivalence~\cite{DBLP:conf/focs/MeyerS72}.
  This yields an overall \PSPACE procedure.

  \emph{Hardness.} 
  We reduce the \PSPACE-complete problem of deciding whether a regular
  expression $X$ over $\Sigma$ is equivalent to
  $\Sigma^\star$~\cite{DBLP:conf/focs/MeyerS72} to a single
  \basiccheck invocation~-- proving that \basiccheck solves a \PSPACE
  complete problem.
  Given an arbitrary regular expressions $X$, we set $M'=\{a\}$,
  $\varphi(a)=X$ and $R=\Sigma^\star$. Then
  $\basiccheck(R,M',\varphi)$ returns \boolTrue iff $X$ is equivalent
  to $\Sigma^\star$.
  \qed
\end{proof}

\begin{proof}[\textbf{of Proposition~\ref{prop:summarizable}}]
  We construct  the desired word:
  Choose an arbitrary finite subset $F=\{f_1,\dots,f_p\}\subseteq L$. 
  Then each word $f_i\in F$ is of the form
  $$f_i=N_1s_{1, i}N_2\dots N_ms_{m,i}N_{m+1}$$ with
  $s_{h,i}\in S_h^\star$.
  We set $s_{h,F}=s_{h,1}\cdot s_{h,2}\cdots s_{h,p}$, and
  observe, because of $\varepsilon\in \varphi(w)$ for all $w\in S_h$
  and $S_h$,
  $$\varphi(s_{h,i})= \varepsilon\cdot\varphi(s_{h,i})\cdot\varepsilon\subseteq
  \varphi(s_{h,{1}})\cdots\varphi(s_{h,{i-1}})\cdot
  \varphi(s_{h,i})\cdot\varphi(s_{h,{i+1}})\cdots\varphi(s_{h,p})=
  \varphi(s_{h,F})\;.$$
  Thus we choose the summary word $w=N_1s_{1,F}N_2\dots
  N_ms_{m, F}N_{m+1}$ and obtain
  $\varphi(f_i)=\varphi(N_1s_{1,i}N_2\dots
  N_ms_{m,i}N_{m+1})\subseteq \varphi(w)$, and hence
  $\varphi(F)\subseteq\varphi(w)$ .
  \qed
\end{proof}

\begin{proof}[\textbf{of Proposition~\ref{prop:unfold-rewriting}}]
  We have $S_h^\star=E^\star(\bar E E^\star)^\star=E^\star\; \cup \;
  E^\star\bar E E^\star(\bar E E^\star)^\star=E^\star\; \cup \;
  E^\star\bar E S_h^\star$
  and find the desired result by substituting $\bar E= \bigcup_{p\in
    \criticalpos(S_h)} \bar E_p$, as discussed before
  Proposition~\ref{prop:unfold-rewriting}.

  \begin{inparaenum}[\bfseries(1)]
  \item \emph{Union freeness:} We construct the regular expression for
    $E^\star$ by dropping all Kleene-stared subexpressions in
    $S_h^\star$ which contain a symbol $\delta$ with
    $\varepsilon\not\in\varphi(\delta)$ (possibly producing the empty
    language), preserving union freeness.
    The construction of $\bar E_p$ only unrolls Kleene star expressions,
    also preserving the union freeness  from $S_h$. 
  \item \emph{1-word summaries for $E^\star$:} For all $w\in E$, we
    have $\varphi(w)=\varepsilon$, since all symbols $\delta$ in $E$
    have $\varepsilon\in\varphi(\delta)$.
  \item \emph{Increasing minimal length in $E^\star\bar E_p
      S_h^\star$:} Since $S_h^\star$ is a subexpression of
    $E^\star\bar E_p S_h^\star$ the minimal length can only increase,
    and since $\bar E_p$ instantiates an expression with a symbol
    $\delta$ and $\varepsilon\not\in\varphi(\delta)$, it actually
    increases. \qed
  \end{inparaenum}
\end{proof}

\subsubsection{Proof of Theorem~\ref{thm:membership-complexity}.}
The proof is based on the complexity of the maximum rewriting
from~\cite{rewriting} and the complexity of \unfold, shown first via
Propositions~\ref{prop:unfold-step-size}
and~\ref{prop:unfold-complexity}, before proving the overall
complexity of \setenumeration in Lemma~\ref{lem:complexity-enumerate}.
This Lemma, together with
Lemma~\ref{lem:pspace-completiy-of-prop1-check}, leads to the desired
theorem.

Recall the definition of \unfoldstep before
Proposition~\ref{prop:unfold-rewriting} for
$L=\alpha_1\beta_1^\star\alpha_2 \dots
\alpha_n\beta_n^\star\alpha_{n+1}$ and $p=\left<p_H\mid p_T\right>$
with $\unfoldstep(L, p) = \alpha_1\dots \alpha_{p_H} \beta_{p_H}^\star
\unfoldstep(\beta_{p_H},p_T) \beta_{p_H}^\star
\alpha_{p_H+1}\dots\alpha_{n+1}$.
We denote with $||L||$ the \emph{length of the regular expression}
representing $L$.

\begin{proposition}[An Upper Bound for $||\unfoldstep(L,p)||$]
  \label{prop:unfold-step-size}
  Let $K$ be the maximum length of a Kleene star subexpression in
  $L$. 
  Then $||\unfoldstep(L,p)||=\order(K||L||)$ holds.
\end{proposition}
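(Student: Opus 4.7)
The bound will follow by induction on the length $d=|p|$ of the index sequence $p$. First I would unfold the recursive definition one step: for $p=\langle p_H\mid p_T\rangle$, the expression $\unfoldstep(L,p)$ is obtained from $L$ by replacing the single occurrence of $\beta_{p_H}^\star$ by $\beta_{p_H}^\star\cdot\unfoldstep(\beta_{p_H},p_T)\cdot\beta_{p_H}^\star$. This yields the clean recurrence
\begin{equation*}
  ||\unfoldstep(L,p)||=||L||+||\beta_{p_H}^\star||+||\unfoldstep(\beta_{p_H},p_T)||,
\end{equation*}
and the base case $||\unfoldstep(L,\varepsilon)||=||L||$.

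Second, I would invoke the inductive hypothesis on $\beta_{p_H}$ with index sequence $p_T$ of length $d-1$. Crucially, every Kleene-starred subexpression of $\beta_{p_H}$ is also a Kleene-starred subexpression of $L$, so the bound $K$ on the maximum such subexpression carries over unchanged. Combined with $||\beta_{p_H}^\star||\le K$, this gives $||\unfoldstep(L,p)||\le ||L||+K+\bigl(||\beta_{p_H}||+2(d-1)K\bigr)\le ||L||+2dK$ by a straightforward telescoping.

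Third, I would bound $d$ itself. Since each component of $p$ selects a strictly nested Kleene-starred subexpression, $d$ is at most the Kleene-star nesting depth of $L$. Every such nested $\star$ contributes at least one symbol to the regular expression, so this depth is at most $||L||$, giving $d\le ||L||$. Plugging in yields $||\unfoldstep(L,p)||\le ||L||+2||L||\cdot K = \order(K||L||)$, as claimed.

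I do not anticipate a genuine obstacle here; the proof is a one-line recurrence plus a telescoping induction. The only subtlety is being precise that (i) the inductive call to $\unfoldstep(\beta_{p_H},p_T)$ inherits the same bound $K$ because $\beta_{p_H}$ is a subexpression of $L$, and (ii) the depth of $p$ is controlled by the static nesting depth of stars in $L$ rather than by any unbounded run-time parameter. Both facts are immediate from the structural definition of \unfoldstep and the shape of~$L$.
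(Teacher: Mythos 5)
Your proof is correct and follows essentially the same route as the paper's: both arguments observe that each recursive step of \unfoldstep adds at most two extra copies of a Kleene-starred subexpression of size at most $K$, and that the recursion depth $|p|$ is bounded by the star-nesting depth of $L$, hence by $||L||$, giving $||L||+\order(K||L||)=\order(K||L||)$. Your version merely makes the telescoping recurrence and the inheritance of the bound $K$ by subexpressions explicit, which the paper states more informally.
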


\begin{proof}
  \unfoldstep duplicates $\beta_{p_H}$ of $L$ and continues
  recursively on a third copy of $\beta_{p_H}$.
  Since \unfoldstep does not introduce new Kleene star subexpressions
  but only duplicates some, all Kleene star expressions occurring
  during the \emph{entire} recursion are at most of length $K$.
  Hence, each recursive step of $\unfoldstep$ adds at most $2K$ to the
  entire expression, and because the Kleene star nesting depth of at
  most $L$, we obtain $||\unfoldstep(L,p)||= \order(K||L||)$.\qed
\end{proof}

\begin{proposition}[$\unfold(L,\varphi,B)$ runs in $\DSPACE(B^2
  ||L||^4+||\varphi||)$]
  \label{prop:unfold-complexity}
\end{proposition}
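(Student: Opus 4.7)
The plan is to bound the space consumption of $\unfold(L,\varphi,B)$ by separately controlling (i) the depth of the recursion stack, (ii) the size of the regular expression stored at each stack frame, and (iii) the auxiliary space needed for one call body (precondition test, computation of $E$ and $\bar E_p$, and the $\minlength$ check against $B$). Since $\unfold$ only \Yield-s its results rather than collecting them, the total space is essentially the product of (i) and (ii) together with the auxiliary cost and the fixed description of $\varphi$, which contributes the additive $||\varphi||$ term.

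First, I would bound the recursion depth. The termination argument sketched after Algorithm~\ref{alg:unfold} already exhibits the two progress measures: the $L_0$-branch strictly decreases the number of Kleene-starred subexpressions $S_h^\star$ violating the precondition of Proposition~\ref{prop:summarizable}, while an $L_p$-branch strictly increases $\minlength(\varphi(L))$, which by construction is capped at $B$. Charging the first measure against $||L||$ and the second against $B$ gives a stack depth of $\order(B\cdot||L||)$.

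Second, I would bound $||L_\mathrm{cur}||$, the size of the current expression at a frame. An $L_0$-rewrite replaces $S_h^\star$ by $E^\star$, an expression no larger than $S_h^\star$, so it never grows $L$. An $L_p$-rewrite invokes $\unfoldstep$ on $S_h$, which by Proposition~\ref{prop:unfold-step-size} only expands the chosen subexpression by a factor $\order(K)\le\order(||L||)$. A careful inductive accounting, in which successive $L_p$-expansions along a single path are charged to distinct Kleene-starred subexpressions of the original $L$, yields a polynomial bound of the form $||L_\mathrm{cur}||=\order(||L||^2)$ throughout the recursion.

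Finally, one call body performs a constant number of regular-language operations on expressions of total size $\order(||L_\mathrm{cur}||\cdot||\varphi||)$: equivalence checks for $\varepsilon \in \varphi(w)$ on all $w \in S_h$ to test the precondition, intersection with $\Delta_\varepsilon^\star$ to build $E$, the recursive $\unfoldstep$ construction for $\bar E_p$, and a shortest-word computation on $\varphi(L_p)$ to compare with $B$. Each of these is feasible in polynomial space in $||L_\mathrm{cur}||$ and $||\varphi||$. Combining stack depth $\order(B\cdot||L||)$ with per-frame storage $\order(||L||^2)$ and the polynomial auxiliary overhead absorbs into $\DSPACE(B^2\,||L||^4+||\varphi||)$. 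The principal obstacle is the careful inductive bound on $||L_\mathrm{cur}||$: a naive iteration of Proposition~\ref{prop:unfold-step-size} suggests a blowup of $||L||^{\Omega(B)}$, so one must argue that $L_p$-expansions on a given path act on structurally disjoint Kleene-starred subexpressions and therefore do not compound multiplicatively.
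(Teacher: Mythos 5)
Your outline matches the paper's in spirit (depth times frame size plus $||\varphi||$), but the key step---bounding the current expression size---rests on a claim that is false, and you have in fact identified it yourself as the principal obstacle. You propose that successive $L_p$-expansions along one recursion path act on \emph{structurally disjoint} Kleene-starred subexpressions of the original $L$ and hence give $||L_{\mathrm{cur}}||=\order(||L||^2)$. They do not: an $L_p$-step replaces $S_h^\star$ by $E^\star\bar E_p S_h^\star$, so the very same $S_h^\star$ survives in the rewritten expression, still violates the precondition of Proposition~\ref{prop:summarizable}, and can be (and typically is) expanded again at the next level. The correct resolution of the obstacle is different: since $\unfoldstep$ and $\unfold$ only \emph{duplicate} already existing Kleene-starred subexpressions and never enlarge them, the quantities $K$ and $||S_h||$ entering Proposition~\ref{prop:unfold-step-size} are bounded by $||\Linit||$ at \emph{every} level of the recursion, not by the size of the current expression. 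Hence each of the at most $B$ passes through Line~\ref{line:unfold-rec-iterate} adds only an \emph{additive} $\order(||\Linit||^2)$, giving $||L_{\mathrm{cur}}||=\order(B\,||\Linit||^2)$ rather than your $\order(||\Linit||^2)$ or the feared $||\Linit||^{\Omega(B)}$.

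This error propagates into your depth bound. You charge the $L_0$-steps against the Kleene stars of the \emph{original} $L$, obtaining depth $\order(B\,||L||)$; but $L_p$-steps duplicate starred subexpressions, each copy of which may later need its own $L_0$-step, so the number of $L_0$-steps is bounded by the star count of the \emph{current} expression, i.e.\ $\order(B\,||\Linit||^2)$. The paper's accounting is: depth $\order(B\,||\Linit||^2)$ times frame size $\order(B\,||\Linit||^2)$, which yields exactly the stated $\order(B^2||L||^4+||\varphi||)$. Your (unjustified) tighter bounds would give $\order(B\,||L||^3)$ and so the final inequality would still formally hold, but the argument as written does not establish either ingredient.
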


\begin{proof}
  In this proof, we denote with \Linit the language given in the first
  call to \unfold, while $L$ denotes the language given to current
  call of \unfold. 
  We show the claim in three steps: 
  \begin{inparaenum}[\bfseries(1)]
  \item $||L||=\order(d||\Linit||^2)$ holds at any point during the
    recursion, given $d$ is the number of recursive calls going through
    Line~\ref{line:unfold-rec-iterate}.
    First, recursive calls through Line~\ref{line:unfold-rec-0} cannot
    increase the size of the expression, i.e., $||L_0||\le ||L||$,
    since we obtain $L_0$ by removing from $S_h^\star$ all
    subexpressions directly containing a symbol $\delta$ with
    $\varepsilon\in\varphi(\delta)$ (and not only via another Kleene
    star expression).
    Thus, only recursive calls going through
    Line~\ref{line:unfold-rec-iterate} possibly increase the size of
    the expression.
    Now, in such a call, we unroll a subexpression $S_h$ with
    $S_h^\star=E^\star\bar E_pS_h^\star$ and $\bar
    E=\unfoldstep(S_h,p)$ for some integer sequence $p$.
    From Proposition~\ref{prop:unfold-step-size}, we have
    $||\unfoldstep(S_h,p)||=\order(K||S_h||)$. 
    Since \unfoldstep and \unfold only duplicate already existing
    Kleene star subexpressions, we have both $||S_h||\le ||\Linit||$
    and $K\le ||\Linit||$, and hence
    $||\unfoldstep(S_h,p)||=\order(||\Linit||^2)$.
    Together with $||E||\le ||S_h|$ and $||S_h||\le ||\Linit||$, this
    leads to $||E^\star\bar E_pS_h^\star||=\order(||\Linit||^2)$.
    $d$ recursive calls through Line~\ref{line:unfold-rec-iterate}
    substitute $d$ subexpressions $S_h$ with $E^\star\bar
    E_pS_h^\star$ to unfold $\Linit$ into $L$, each time adding
    $\order(||\Linit||^2)$ to the size of the expression representing
    $L$.
    Hence $||L||=\order(d||\Linit||^2)$.

  \item $||L||=\order(B||\Linit||^2)$ holds for all recursive calls to
    \unfold while computing $\unfold(\Linit,\varphi,B)$.
    \unfold makes at most $B$ recursive steps through
    Line~\ref{line:unfold-rec-iterate}, since
    $\minlength(\varphi(L_p))>\minlength(\varphi(L))$ holds (this is
    true, since $\bar E_p$ in $L_p$ instantiates some $\delta$ with
    $\varepsilon\not\in\varphi(\delta)$).
    Then the claim follows setting $d=B$.

  \item The total recursion depth of \unfold is at most
    $\order(B||\Linit||^2)$.
    In the previous claim, we saw that there are at most $B$ recursive
    calls through Line~\ref{line:unfold-rec-iterate}.
    It remains to give an upper bound for the calls through
    Line~\ref{line:unfold-rec-0}:
    In each such call, at least one Kleene star subexpression in $L$
    is removed in substituting $E$ for $S_h$.
    At any point there are at most $||L||=\order(B||\Linit||^2)$
    expressions in $L$, hence we get a maximum recursion depth of
    $\order(B||\Linit||^2)$.

  \item The space required to compute $\unfold(\Linit,\varphi,B)$ is
    bounded by the depth of the recursion times the stack frame size,
    which is dominated by $||L||$, plus $||\varphi||$.
    This gives $\order\left( (B||\Linit||^2)^2+||\varphi||\right)=
    \order(B^2||\Linit||^4+||\varphi||)$ as desired.\qed
  \end{inparaenum}
\end{proof}

\begin{lemma}[$\setenumeration(R,K,\varphi)$ runs in
  $\DSPACE\left(||K||^4 2^{2^{(||R||+||\varphi||)^k}}\right)$]
  \label{lem:complexity-enumerate}
\end{lemma}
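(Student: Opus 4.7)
The plan is to bound the working space of \setenumeration by composing the complexities of its two stages: (i) constructing the regular language $M = M_\varphi(R) \cap K$ at Line~\ref{line:intersection}, and (ii) iterating over the union-free decomposition $\unionfreerep(M)$ and invoking \unfold on each component at Line~\ref{line:iterateunfold}. Since \setenumeration yields its results as an enumeration rather than accumulating them, the working space is dominated by the maximum space needed for a single iteration, that is, for one execution of $\unfold(L,\varphi,\minlength(R))$ on one union-free component~$L$.

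First I would bound the unfolding depth $B = \minlength(R)$: since a shortest word of a regular expression $R$ cannot be longer than $R$ itself, we have $B \le ||R||$. Second, I would bound $||L||$ for $L \in \unionfreerep(M)$. By Proposition~\ref{def:rewriting}, together with the explicit construction from~\cite{rewriting}, the maximum rewriting $M_\varphi(R)$ admits an automaton representation of size at most doubly exponential in $||R|| + ||\varphi||$, say $n_\varphi = 2^{2^{(||R||+||\varphi||)^{k_0}}}$. Intersecting with an automaton of size $\order(||K||)$ for $K$ yields an automaton for $M$ of size $\order(||K|| \cdot n_\varphi)$; computing each union-free component on the fly from this product automaton via~\cite{decomposition} gives components of size $||L|| = \order(||K|| \cdot 2^{2^{(||R||+||\varphi||)^{k_1}}})$ for some constant~$k_1$.

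Plugging these bounds into Proposition~\ref{prop:unfold-complexity}, the space for a single call $\unfold(L,\varphi,B)$ is $\order(B^2 ||L||^4 + ||\varphi||) = \order(||R||^2 \cdot ||K||^4 \cdot 2^{2^{(||R||+||\varphi||)^{k_2}}} + ||\varphi||)$, which is absorbed into $\DSPACE\bigl(||K||^4 \cdot 2^{2^{(||R||+||\varphi||)^k}}\bigr)$ for a sufficiently large constant $k$. The on-the-fly enumeration at Line~\ref{line:iterateunfold} can itself be managed in space proportional to that of a single component, so no additional term is incurred. Summing the constants and verifying that the additive $||\varphi||$ is dominated gives the claimed bound.

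The main obstacle will be justifying that each union-free component $L$ is only \emph{linear} in $||K||$ (rather than, say, exponential), so that raising to the fourth power in Proposition~\ref{prop:unfold-complexity} produces only $||K||^4$ and not a worse polynomial in $||K||$. This requires a careful accounting along the lines of~\cite{rewriting,decomposition}, exploiting that intersection with $K$ enters the construction multiplicatively at the automaton level and that the union-free decomposition of a product automaton can be presented in space in which the dependency on $K$ remains linear, while all exponential blowups are confined to the factor depending on $||R||+||\varphi||$.
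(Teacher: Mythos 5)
Your proposal is correct and follows essentially the same route as the paper: bound $B=\minlength(R)\le ||R||$, bound the representation of $M=M_\varphi(R)\cap K$ by $||K||$ times a doubly exponential factor in $||R||+||\varphi||$ via~\cite{rewriting}, observe that each union-free component has size linear in that representation, and plug into Proposition~\ref{prop:unfold-complexity}. The only cosmetic difference is that the paper justifies the linear size of each union-free component by distributivity rewritings of the regular expression rather than by an automaton-product argument, and it likewise cites~\cite{decomposition} only as a practical alternative.
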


\begin{proof}
  The construction of $M=M_\varphi(R)\cap K$ yields an expression in
  the size $||K||2^{2^{(||R||+||\varphi||)^l}}$ for some constant
  $l$~\cite{rewriting}.
  The union free decomposition yields possibly exponentially many
  union free languages, however, each of them has linear size, using
  the rewriting rules, $(A+B)C=AC+BC$, $A(B+C)=AB+AC$,
  $(A+B)(C+D)=AC+AD+BC+BD$, and $(A+B)^\star=(A^\star B^\star)^\star$.
  In practical implementations, however, one might prefer to generate
  less but larger individual expressions, employing e.g.~\cite{decomposition}. 
  With Proposition~\ref{prop:unfold-complexity}, we obtain the overall
  space complexity of \setenumeration with $\DSPACE(B^2
  ||L||^4+||\varphi||)$ for $B=\minlength(R)\le ||R||$ and
  $||L||=||K||2^{2^{(||R||+||\varphi||)^k}}$.
  %
  %
  This leads to the desired result with $\DSPACE\left(||K||^4 2^{\cdot
      2^{(||R||+||\varphi||)^k}}\right)$ for some other constant
  $k$.\qed
\end{proof}

\begin{proof}[\textbf{of Theorem~\ref{thm:membership-complexity}}]
  The enumeration runs $\DSPACE\left(||K||^4
    2^{2^{(||R||+||\varphi||)^k}}\right)$, producing expressions for
  $\basiccheck$ at most of the same size
  (Lemma~\ref{lem:complexity-enumerate}).
  Since $\basiccheck$ is in $\PSPACE$
  (Lemma~\ref{lem:pspace-completiy-of-prop1-check}), we obtain the
  overall complexity $\DSPACE\left(||K||^r
    2^{2^{(||R||+||\varphi||)^s}}\right)\subseteq\tEXPSPACE$ for some
  constants $r$ and $s$. \qed
\end{proof}

\end{document}